\documentclass[11pt]{article}\synctex=1 
\usepackage{amsfonts}
\usepackage{amssymb}
\usepackage{tikz}
\usetikzlibrary{arrows.meta}

\usepackage{pgfplots}
\pgfplotsset{compat=1.18}
\usetikzlibrary{patterns,calc}

\usepackage{amsmath}
\usepackage{amsthm}
\usepackage{geometry}
\usepackage{commath}
\usepackage{booktabs}
\usepackage{lscape}
\usepackage[flushleft]{threeparttable}
\usepackage{bigstrut}
\usepackage{xr}
\usepackage[authoryear]{natbib}
\renewcommand{\theequation}{\thesection.\arabic{equation}}
\newcommand{\myref}[2]{\hyperref[#1]{#2}}
\numberwithin{equation}{section}

\usepackage{latexsym}
\usepackage{graphicx} 
\usepackage{enumerate}
\usepackage{setspace}
\usepackage[toc,title,titletoc,header]{appendix}
\usepackage[bottom]{footmisc}   
\usepackage[pdfborder={0 0 0}]{hyperref}

\usepackage{lineno}
\setpagewiselinenumbers

\usepackage{listings}
\usepackage{xcolor}
\definecolor{ncsured}{HTML}{CC0000}
\definecolor{agentblue}{HTML}{00467F} 

\newtheorem{lemma}{Lemma}

\theoremstyle{definition}

\theoremstyle{remark}

\newcounter{assumptionA}
\def\theassumptionA{\arabic{assumptionA}}
\newenvironment{assumptionA}[1][]{\refstepcounter{assumptionA}\medskip\noindent{\textbf{Assumption \theassumptionA. #1}}}{\medskip}

\newcounter{theorema}

\def\thetheorem{\arabic{theorema}}
\newcounter{definitiona}

\def\thedefinition{\arabic{definitiona}}
\renewcommand{\thelemma}{\arabic{lemma}}

\hypersetup{
  colorlinks=true,linkcolor=blue, citecolor=blue, filecolor=blue, 
  linkbordercolor={0 1 1}, citebordercolor={1 0 0},
  pdfcreator={\LaTeX\ with package \flqq hyperref\frqq},
  pdfsubject={},
}

\begin{document}

{
\title{Designing Agentic AI-Based Screening for Portfolio Investment}
\author{\textsc{Mehmet Caner\thanks{
			North Carolina State University, Nelson Hall, Department of Economics, NC 27695. Email: mcaner@ncsu.edu. }}
	\and \textsc {Agostino Capponi%
		\thanks{Columbia University. Department of Industrial Engineering and Operations Research and Columbia Business School. Email: ac3827@columbia.edu. }}
\and \textsc {Nathan Sun%
		\thanks{Columbia University. Department of Industrial Engineering and Operations Research. Email: nathan.sun@columbia.edu. }}
\and \textsc {Jonathan Y. Tan%
		\thanks{Columbia University. Department of Industrial Engineering and Operations Research. Email: jyt2123@columbia.edu. }}
}

\date{\today}

\maketitle
} 



\vspace*{-0.2cm}
\begin{abstract}

We introduce a new agentic artificial intelligence (AI) platform for portfolio 
management. Our architecture consists of three layers. First, two large language 
model (LLM) agents are assigned specialized tasks: one agent screens for firms 
with desirable fundamentals, while a sentiment analysis agent screens for firms 
with desirable news. Second, these agents deliberate to generate and agree upon 
buy and sell signals from a large portfolio, substantially narrowing the pool of 
candidate assets. Finally, we apply a high-dimensional precision matrix estimation 
procedure to determine optimal portfolio weights. 
We show, through information acquisition theory, that screening with agentic AI can bring utility gains in screening compared with humans. 
We introduce the concept of 
\emph{sensible screening} and establish that, under mild screening errors, the 
squared Sharpe ratio of the screened portfolio consistently estimates its target. Empirically, our method achieves superior Sharpe ratios relative to 
an unscreened baseline portfolio and to conventional screening approaches, 
evaluated on S\&P~500 data over both short and medium terms.
\vspace{0.1cm}

\noindent{\em Keywords:} portfolio screening, agentic design, sharpe-ratio, precision matrix estimation.

\end{abstract}


\newpage

\begin{quote}
    \textit{``We need new AI models for the real world---quantitative models\dots AI for the quantitative world is something else entirely, focusing on creating novel medical treatments, de novo material science, and advanced risk management and \textbf{portfolio construction}.''}
    
    \hfill --- Jack D.\ Hidary, CEO of Sandbox AQ. \textit{The Wall Street Journal}, Opinion: ``America Needs AI that can do Math,'' February 17, 2026.
\end{quote}

\begin{quote}
    \textit{``Ilya Sutskever has recently been adding his voice to the beyond LLM camp, stating that \textbf{we won't reach artificial general intelligence with large language models alone}.''}
    
    \hfill --- Jack D.\ Hidary, CEO of Sandbox AQ. \textit{The Wall Street Journal}, Opinion: ``America Needs AI that can do Math,'' February 17, 2026.
\end{quote}

    

\section{Introduction} \label{sec_intro}

Portfolio formation is a foundational concept in the finance literature. Several well-known portfolio formations have been studied, including the global minimum variance
portfolio, the Markowitz mean-variance portfolio, and the maximum Sharpe ratio
portfolio. Traditionally, these frameworks were developed and implemented by human
analysts throughout the twentieth century. With the advent of large language models
coupled with deep learning-based estimation in the twenty-first century, there are
now artificial intelligence-driven portfolios. 

Very recently, multi-agent AI systems have been applied to portfolio optimization. 
This paper contributes to this emerging literature by developing a high-dimensional 
portfolio management framework that integrates stock screening with portfolio 
optimization. Every investment decision involves two distinct tasks: selecting 
which stocks to hold and determining their optimal weights. Accordingly, both the 
set of selected stocks and their portfolio weights are treated as endogenous outputs 
of our framework. Rather than applying quantitative methods to the full stock 
universe or assigning equal weights to an arbitrary subset, we propose an initial 
screening stage in which LLM-based agents generate buy and sell signals. A 
quantitative precision matrix estimation method then determines optimal investment 
weights for the screened stocks. This two-stage approach yields portfolios 
concentrated in stocks with strong return-generating potential, achieving superior 
Sharpe ratios relative to standard benchmarks. By incorporating stock screening, 
the framework deliberately departs from index mimicking, aiming instead to 
generate excess returns, analogous to adopting a large tracking error relative 
to a benchmark index in order to enhance risk-adjusted performance.

A useful analogy is the selection process of a sports team. After observing practice
sessions, the manager chooses the best players to form the starting lineup for game
day. The choice of the optimal subset of players can dramatically improve team
performance, and similarly, selecting the optimal subset of stocks may achieve higher
performance than using the entire universe. Note that by setting up tracking error
and weight constraints on portfolios, both industry practitioners and regulators move
away from using all universe-indexing strategies. A recent paper by \cite{cf2026}
outlines how to form portfolios based on data-dependent inequality constraints and
establishes high-dimensional consistency of these portfolios. They demonstrate that
the Sharpe ratio, returns, and collective risk of a subset of stocks can do better
than those of a benchmark index. Because their restrictions are given by mutual fund
prospectuses and government regulators, there is no performance-based screening
involved in forming their portfolios. A study by
\cite{ast2024} presents strong economic and financial theoretical foundations
explaining why a sparse portfolio can perform better than or equal to the market
index. 


In our approach, screening will be done via Agentic-AI. The financial reasoning behind screening is explained in Section \ref{stockscreeningg}. We show that through information acquisition theory in portfolio choice as pioneered by \cite{vv2010} combined with larger information sets agentic systems deliver better results-higher utilities and reach a better optimal solution  than humans. The main reason is the limited capacity of human beings, and also emotional overreactions.

To give an empirical example, by screening via Agentic AI, our annualized return is 37.31\% (Table \ref{tab35-8}, best Sharpe Ratio portfolio)
and its variance is 0.1162, versus a fully diversified portfolio (Table \ref{tab35-1}, best Sharpe Ratio portfolio), with an annualized return of 18.01\% and a variance of 0.0319 between October 2021-April 2024. In the same time period, human analysts screeners achieve annualized return of 3.27\% with a variance of 0.0144 in Table \ref{tab35-4}.

We see that there is a tradeoff between the low variance/large diversification portfolio versus the high return/less diversified portfolio, but ultimately our Agentic-AI portfolio achieves higher Sharpe ratios than the fully diversified case and also human analyst screeners. The rationale for screening can also be illustrated via the preference for active investors' over an ETF-based general index.
These empirical results are in line with theoretical findings of \cite{vv2010} where they show a higher capacity will result in riskier stocks being selected with higher returns. A formal analysis of their framework applied to Agentic AI is shown in Section \ref{stockscreeningg}.

Recently, in a benchmark paper, \cite{lt2026} used ChatGPT to forecast stock prices. In their paper, they conclude that integration of LLMs with  other quantitative models and machine learning techniques can combine the strengths of different approaches.  Our paper contributes in that respect.
Our multi-agent AI framework consists of three coordinated layers. In the first layer, two
specialized LLM agents operate in parallel: an LLM-Strategy agent (LLM-S), which
screens stocks based on fundamental firm characteristics, including log firm size,
book-to-market ratio, and twelve-month momentum, and a FinBERT-based sentiment
agent, which analyzes financial news articles to generate monthly sentiment-driven
signals. These two agents operate on complementary frequencies: LLM-S is retrained
annually, capturing slow-moving structural narratives about firm quality, while
FinBERT is updated monthly to respond to fast-moving news sentiment. In the second
layer, the two agents deliberate and reach a consensus via an intersection-based
decision rule, narrowing the candidate pool from the full S\&P~500 to
a high-conviction subset averaging approximately 20 stocks. In the third layer, a
quantitative optimization algorithm applies state-of-the-art high-dimensional precision matrix
estimation techniques, such as nodewise regression \citep{mein2006}, residual nodewise regression \citep{caner2022},
POET \citep{fan2013}, deep learning-based methods \citep{c-d2025}, and nonlinear shrinkage \citep{lw2017}, to determine optimal
portfolio weights under global minimum variance, mean-variance, and maximum Sharpe
ratio objectives. The winning method-objective combination is selected based on
out-of-sample Sharpe ratio performance.

We show that not every multi-agent AI configuration creates
value, and that the design of the agent team is critical. We demonstrate this systematically
by comparing our full multi-agent system against a rich set of alternatives: purely
quantitative strategies without any screening, single-agent LLM systems (LLM-S alone
or FinBERT alone), conventional screening methods (logistic regression, human analyst
recommendations, and the \cite{nm2013} profitability-and-value screen), and hybrid
systems that incorporate human analyst judgment alongside AI agents. 

In each case,
our full Agentic AI architecture, specifically two specialized LLM screening agents plus a
quantitative weighting method, dominates. We show that incorporating
human analyst recommendations into the AI ensemble consistently \emph{degrades}
performance, a finding we attribute to the behavioral and emotional biases that human
analysts inevitably carry into their recommendations. This result provides a
concrete, quantitative answer to the practical criticism that the financial value of
AI-based systems is difficult to measure.\footnote{\textit{``Early adopters who
rushed into AI pilots and even deployment last year often hit a wall and learned some
hard lessons\dots Critically, they found it was \textbf{hard to measure financial
returns}.''} --- Isabelle Bousquette, \textit{The Wall Street Journal}, Business
Section, page B4: ``AI software proves to be tougher sell than before,'' February
19, 2026.}

We now consider the theoretical reasons why LLMs, specifically Agentic AI, may deliver better results than traditional machine learning (ML) methods. A study by \cite{kl2026} shows that Agentic AI better explains contemporaneous stock returns around earnings announcements compared to traditional ML methods. They build a factor model-based prediction for individual stock return, demonstrating that the abnormal return over the market return is more accurately captured by the predictions of Agentic AI. The primary reason for this is that, within their model, Agentic AI expands the information set of an unknown function by incorporating earlier earnings calls of the same company, information from the earnings calls of related firms, and news announcements that occur between calls. An Agentic AI system can also process qualitative nuances, such as forward-looking guidance, management tone, and question-and-answer sessions, more effectively than a pure quantitative model. Finally, they demonstrate that when combined with a pure quantitative model, Agentic AI accounts for a significantly greater portion of the variation in stock returns around earnings calls.

In this paper, we design an Agentic AI screening method for stocks. Because the expanded information set processed by Agentic AI may not be easily captured by human analysts or simple technical analysis, our method generates a superior selection of candidate stocks to feed into a quantitative (also referred to as quant) agent.

Beyond the empirical results, we make a novel theoretical contribution to
the high-dimensional portfolio literature. A defining feature of our framework is
that the number of assets in the portfolio is not fixed in advance, but is itself a
random variable realized through the screening process. The existing
high-dimensional portfolio literature, including \cite{fan2011}, \cite{caner2022},
and \cite{cf2026}, treats portfolio dimension as a deterministic, known sequence. We
depart from this convention by introducing the concept of \emph{sensible screening}:
a screening process that, even when it makes mild errors in selecting the exact
composition of the optimal portfolio, always includes the optimal stocks when
selecting enough assets and never selects purely suboptimal ones. Under sensible
screening and standard precision matrix consistency conditions, we establish in
Theorem~\ref{thm:sr_convergence} that the squared Sharpe ratio of the screened portfolio consistently
estimates its target, even when the screening process is imperfect.
Hence under these conditions, mild mistakes by the Agentic AI do not negatively impact the downstream task.

In our empirical analysis, we  cover a medium term-time period. 
Our first empirical analysis  covers the S\&P~500 from 
October 2021 to April 2024, a
period encompassing noticeable events such as the 2022 drawdown, and the strong recovery of 2023. To avoid look-ahead bias, we use ChatGPT-3.5, whose knowledge cutoff date is September 2021.
The S\&P 500 achieves a Sharpe ratio of 0.2685, whereas our Agentic-AI system achieves 1.0946 in the same time period. If we add human analysts, Sharpe Ratio declines to 0.1117.

Our second empirical analysis uses ChatGPT-4o for LLM-S agent in the very short-run. The main reason is to address the possibility of look-ahead bias. ChatGPT-4o, a high performing, vintage LLM, is also used by \cite{lt2026} to avoid look-ahead bias. Since the ChatGPT-4o knowledge cutoff date is October 2023, there is no possibility of look-ahead bias by design.  FinBERT's knowledge cutoff date is prior to 2020.

Our out-of-sample dates range from November 2023 to April 2024. The S\&P 500 achieves a Sharpe ratio of 2.09, whereas our Agentic-AI system achieves 8.16 in the same time period. This is a remarkable difference. If we add human analysts to the ensemble of our agents, the Sharpe ratio decreases to 1.39 in the same time period.
Finally, by using the best quantitative method only (i.e. with no screening), our Sharpe Ratio declines to 3.04 from 8.16. This shows the strength of our Agentic AI system.




For our quantitative strategy, we follow the established literature and form weights
according to global minimum variance (GMV), Markowitz mean-variance (MV), and
 maximum Sharpe ratio (MSR) objectives. 
Our quantitative estimation method applies a matrix-based investing strategy, where estimation
techniques form the rows and different portfolio objectives form the columns. The
winner within this matrix is the technique-objective combination that achieves the
highest Sharpe ratio for the out-of-sample time period being evaluated.

The rest of the paper proceeds as follows. Section~\ref{sec:LLM} provides background on large language models and discusses our proposed Agentic AI framework.  Section~\ref{stockscreeningg} introduces
the stock screening framework, including the LLM-S and FinBERT agents, alternative
screening benchmarks, the algorithm, and our theoretical result on screened portfolio.
Section~\ref{quant strategies} describes the quantitative high-dimensional portfolio
weight formation methods. Section~\ref{sec:empirics} presents our empirical results and 
robustness checks. 
We conclude in Section~\ref{sec_conclusion}.
Appendix \ref{sec_A_proofs} contains technical proofs, Appendix~\ref{sec:quant} provides details on the precision matrix estimation techniques,  and Appendix \ref{additional material} presents additional material including 
extended results,  LLM-S prompts and outputs, and Novy-Marx
screening comparisons.

\section{Agentic AI Portfolio Management Framework}\label{sec:LLM}

We provide background on large language models and their applications in
finance. We then describe the architecture of our proposed multi-agent
portfolio management system and explain how it addresses the main 
methodological concerns raised in the literature.

\subsection{Large Language Models}\label{subsec:LLM_background}

Unlike their predecessors, LLMs exhibit a human-like quality of text
generation and deep contextual reasoning. These capabilities stem from
the transformer architecture \citep{vas2017}, which leverages
self-attention mechanisms to capture long-range dependencies in text far
more effectively than earlier recurrent or convolutional
approaches.\footnote{As documented in \cite{kkss2024}, models with fewer
than ten billion parameters perform on near-chance level across a range of
benchmarks, including arithmetic reasoning, multilingual question
answering, the 57-task Massive Multitask Language Understanding (MMLU)
benchmark \citep{mmlu2021}, and semantic understanding evaluations.
Beyond this scale threshold, performance rises sharply, reaching
approximately 30\% on arithmetic tasks, 40--50\% on multilingual
question answering, and 60--70\% on both MMLU and semantic understanding
benchmarks.}

LLMs are trained on massive datasets, optimizing a vast number of
parameters using deep learning estimators within a transformer
structure, leading to significant performance improvements across a
range of tasks \citep[pp.~390--391]{bb2024}. Leveraging self-supervised
learning, LLMs acquire high-quality representations on unlabeled data
and can then be further fine-tuned to improve performance. Fine-tuning
is typically achieved through Low-Rank Adaptation (LoRA) or
Reinforcement Learning from Human Feedback (RLHF). Alternative
approaches include training the LLM from scratch on in-sample data to
predict out-of-sample outcomes, or employing instruction fine-tuning,
where the model is trained on specifically generated prompt-response
pairs. Fine-tuned LLMs can substantially outperform generic open-source
LLMs in specialized domains such as sentiment analysis and financial
classification \citep[Section~4.1]{li2024}.

An important practical consideration is the choice between open-source
models, such as LLaMA, and closed-source models, such as GPT-4 or
Claude. While open-source models offer advantages in transparency,
reproducibility, and data control, \cite{li2024} (Section~3) find that
closed-source models currently achieve superior performance on standard
financial benchmarks, a gap that likely reflects differences in both
model scale and the proprietary nature of their fine-tuning procedures.

\subsection{LLMs in Finance: Related Work}\label{subsec:LLM_literature}

LLMs offer several advantages for stock selection. Because their
pre-trained nature enables robust zero-shot analysis, they eliminate the
strict requirement for supervised learning. Rather than requiring costly
and time-consuming model retraining, LLMs can rapidly execute
simultaneous tasks such as sentiment analysis and keyword extraction,
and their ability to decompose complex tasks into simpler sub-tasks
makes them well suited to processing the large volumes of financial
reports involved in stock selection \citep{li2024}. A prominent
empirical demonstration of this potential is provided by \cite{ckx2023},
who show that integrating LLM-driven news analysis into long-short
portfolios can deliver very large Sharpe ratios over the 2004--2019
period, outperforming both the S\&P~500 index and classical
bag-of-words approaches.

The application of multi-agent systems to portfolio construction has
been considered by \cite{zhao2025}, who design a framework in which
specialized LLM agents collaboratively execute portfolio construction
tasks. Their system comprises three agents, a fundamental equity agent,
a sentiment agent, and a valuation agent, which must collaboratively
discuss, evaluate, and finalize decisions, thereby actively suppressing
hallucinations made by any single agent. This multi-agent design mirrors
the role of coordinated human analyst teams, which must synthesize
financial disclosures, earnings calls, financial ratio analysis, market
news, and research reports. Recent evidence by \cite{z2025} further
supports this approach: context-based LLMs outperformed their baseline
counterparts by 8.6 percentage points on financial reasoning tasks
(Table~2 therein), suggesting that domain adaptation substantially
improves LLM performance.

Despite this promise, \cite{lkcm2025} raise important methodological
concerns. Their central argument is that existing studies demonstrating
superior LLM performance often rely on very short evaluation horizons
(6--24 months) and narrow asset universes (5 to 100 stocks; see their
Table~1). When evaluated over extended horizons and across a larger
cross-section of stocks, they find that LLM advantages disappear,
allowing simple benchmarks such as buy-and-hold to dominate. They
identify three specific sources of bias. First, \textit{look-ahead
bias}, whereby the model inadvertently processes future data during
training or testing. Second, \textit{data snooping}, arising from
repeated testing on a static dataset leading to inflated results. Third,
\textit{survivorship bias}, whereby delisted stocks are omitted,
inflating apparent returns. Evidence of look-ahead bias is also
documented by \cite{lmr2025} and, in the specific context of stock
market predictions using annual US data over 2020--2023, by
\cite{dfs2025}, who find a small but detectable look-ahead bias at
annual frequency and a substantially larger bias at high frequency.

Following \cite{lkcm2025}, we prevent data snooping by using rolling
windows rather than a fixed in-sample training period. We address
survivorship bias by including all stocks in the S\&P~500, including
those subsequently delisted.

To address the short-horizon critique of \cite{lkcm2025}, we analyze
medium and longer periods. The first, between October 2021 and April 2024,
 encompasses a
diverse set of market regimes: the 2022 market
drawdown, and the strong recovery of 2023. For this medium term,
we benefit from vintage LLMs to solve look-ahead bias. The second period  extends the
analysis to a nearly ten-year horizon from 2015 to 2024. 
However, this may be  subject to look-ahead  bias even though we use causal masking. The results are very similar to the medium term exercise, but we do not report in the paper.  These are available from authors on demand. We also analyzed a very short term with Agentic AI, between November 2023-April 2024.
Since the nature of the short term period is a  boom market versus medium term where there is a major drawdown, it is difficult to understand  whether there is really a decline in performance of Agentic AI or is this due to differences in different economic times. But the S\%P 500 in the very short term was 2.0857, and in the medium term this declines to 0.2685, we can say that decline in Agentic AI performance in the medium term to an SR of 1.0946 versus short term SR of 8.1612. So the declines in both S\%P 500 and Agentic AI cases are very similar in medium term compared with very short term, hence it is not obvious Agentic AI performance declines in the medium term.

\subsection{Our Multi-Agent Portfolio Management
  Framework}\label{subsec:our_framework}

Our framework is a two-stage pipeline applied to the full S\&P~500
universe. In the first stage, an AI screening team, consisting of
LLM-S and FinBERT operating in concert, identifies a targeted subset of
stocks by combining fundamental analysis with sentiment analysis of news
articles. In the second stage, a quantitative precision matrix
estimation algorithm assigns optimal portfolio weights to the screened
stocks. This separation of screening and weighting represents a
scalable, end-to-end application of Agentic AI to portfolio
construction, and is the primary architectural distinction from prior
work such as \cite{zhao2025}, whose study evaluates a randomly selected
basket of 15 technology stocks rather than a comprehensive market-wide
pipeline.

We take the following steps to prevent look-ahead bias. 
We follow the route that is recommended by \cite{lt2026} and \cite{llt2026}: we implement our LLM-S agent using ChatGPT-4o, whose knowledge cutoff date is October 2023. Starting then, we can confidently evaluate our out-of-sample dates without any look-ahead bias. \cite{llt2026} recently show that masking, causal masking, and time-stamping, anonymization of labels do not resolve look-ahead bias adequately due to functional form memorization. They propose using LLMs with clear training sample cutoffs, and hence using vintage LLMs with earlier knowledge cutoff dates to definitively prevent look-ahead bias.
Note also that our
FinBERT model's weights were fine-tuned on the Financial PhraseBank
dataset \citep{malo2014good}, and its training ends prior to 2020.


To suppress hallucinations, we deploy a consensus-based decision rule
across agents. For each stock, it is included in the final
recommendation only if both agents independently agree to buy or sell
it; disagreement leads to exclusion. If the agents' decisions are
mutually exclusive across all stocks, we default to the union of their recommendations to prevent a null
recommendation. With
three agents, the rule becomes a two-out-of-three majority vote.

This intersection strategy reduces the
hallucination risk inherent in any single agent while preserving the
complementarity between fundamental and sentiment signals.
\cite{srk2026} provide a systematic comparison of single-agent LLMs
and multi-agent systems, documenting that the defining advantage of
multi-agent architectures lies in dynamic task decomposition, whereby a
high-level objective is broken down and distributed among specialized
agents. Our framework embodies this principle: LLM-S and FinBERT each
address a distinct analytical sub-problem, and their outputs are
combined through a formal consensus mechanism rather than informal
aggregation.

\subsection{Economic Mechanism Behind Multi-Agent Selection Rule}

We now describe in detail the economic-financial reasoning behind intersection and union rule. We show also why a multi-agent system can perform better than single-agent LLMs.  We follow a {\bf sensible screening} rule to choose from a large portfolio of assets as described in detail in Section~\ref{stockscreeningg}. Sensible screening means either the estimated screened set is a subset of the optimal set, or the optimal set
is a subset of the estimated screened set. 

As will be clear from our proof in the appendix, a large deviation from the optimal screened stocks can result in inconsistent Sharpe ratio estimations. For example,  selecting a larger-than-optimal set (compared to a slightly larger set) may affect economic performance, and hence can be detrimental.

\subsubsection {Intersection Rule}

First, we show why the Agentic AI system performs better than single-agent LLMs for the consensus rule when it selects the intersection. Assume that set ${\cal A}_1$ is the set of stocks selected by LLM-S and the set of the stocks selected by FinBERT is ${\cal A}_2$. Let the optimal set of stocks be ${\cal A}_3$. Recall that since the consensus rule selected the intersection, note that ${\cal A}_1 \cap {\cal A}_2 \neq \phi$. We also assume that the truth ${\cal A}_3$ is equal to neither ${\cal A}_1$ nor ${\cal A}_2$, since LLM technology will inevitably make mistakes when screening. However, even though there are LLM mistakes, we assume that we have sensible selection, 
as will be defined in Section~\ref{subsec:sensible_screening}. Sensible selection allows only the two following cases:

{\bf  Case 1}.
The truth is
\[{\cal A}_3 \subset {\cal A}_1 \cap {\cal A}_2.\]

In other words, the optimal set is a subset of intersection of the two agents' choices. Let us analyze what would have  happened if we had used a single agent. If we only had used LLM-S screening, then our choice would have been ${\cal A}_1$, which is larger than the intersection, and hence larger than the optimal ${\cal A}_3 \subset {\cal A}_1 \cap {\cal A}_2$. The same reasoning applies if we had used FinBERT screening. In short, even though single agent selection here is consistent with sensible selection, it makes larger mistakes than the Agentic AI intersection rule. This can be seen in Figure \ref{fig:venn1}.

\begin{figure}[htbp]
\centering
\begin{tikzpicture}[scale=1.2]
    \def\circleA{(0,0) circle (2cm)}
    \def\circleB{(2.5,0) circle (2cm)}

    \draw[thick, fill=blue!20, fill opacity=0.5] \circleA;
    \draw[thick, fill=yellow!20, fill opacity=0.5] \circleB;

    \begin{scope}
        \clip \circleA;
        \fill[green!20, fill opacity=0.6] \circleB;
    \end{scope}

    \draw[thick, fill=green!60!black, fill opacity=0.7] (1.25, 0) circle (0.6cm);

    \node at (-1.5, -1.5) [left] {$A_1$};
    \node at (4.0, -1.5) [right] {$A_2$};
    \node at (1.25, 0) {\textbf{$A_3$}};

\end{tikzpicture}
\caption{Venn diagram showing $A_3 \subseteq (A_1 \cap A_2)$}
\label{fig:venn1}
\end{figure}

{\bf Case 2}. Assume now the truth is 
\[ {\cal A}_1 \cap {\cal A}_2 \subset {\cal A}_3\].

In other words, the intersection is a subset of the truth. There are four sub-possibilities. 

The first possibility is that ${\cal A}_3$ contains ${\cal A}_1 \cap {\cal A}_2$ (but no other elements from ${\cal A}_1$ and ${\cal A}_2$).
Assume, without loss of generality, that we use ${\cal A}_2$ instead of the intersection consensus rule. By selecting ${\cal A}_2$, the screening will not be sensible (as will be defined in Section~\ref{subsec:sensible_screening}), since our choice ${\cal A}_2$ is not a subset of ${\cal A}_3$ nor is ${\cal A}_3$ a subset of ${\cal A}_2$.
But the intersection ${\cal A}_1 \cap {\cal A}_2$ here is a sensible screening choice. The same argument applies if we had chosen only ${\cal A}_1$.

The second possibility is that ${\cal A}_3$ consists of some elements in ${\cal A}_1$ in addition to ${\cal A}_1 \cap {\cal A}_2$. Then by using only LLM-S (${\cal A}_1$), we may do better than the Agentic-AI intersection rule. However, if we choose instead to use FinBERT, this will not be a sensible screening since ${\cal A}_2$ is not a subset of ${\cal A}_3$, nor is ${\cal A}_3$ is a subset of ${\cal A}_2$.  Since it is not clear a priori which agent to choose, the safest bet is the intersection rule.  

The third possibility is that ${\cal A}_3$ consists of some elements in ${\cal A}_2$ in addition to ${\cal A}_1 \cap {\cal A}_2$. The analysis is the same as above: using FinBERT may achieve a better screening than Agentic AI, but LLM-S will not give a sensible screening. A safer bet is the intersection.

The last possibility is that ${\cal A}_3$ is much larger than ${\cal A}_1 \cup {\cal A}_2$. In this scenario, any single-agent rule will dominate Agentic AI, due to a larger deviation from the optimal ${\cal A}_3$. We see this issue in Figure \ref{fig:venn3}.

\begin{figure}[h]
  \centering
  \begin{tikzpicture}

    \draw[blue!60, thick, dashed, rounded corners=12pt]
      (-4.5, -3) rectangle (4.5, 3);
    \node[blue!70, font=\bfseries] at (-3.8, 2.5) {$A_3$};
    \node[blue!50, font=\small] at (-1.8, 2.5)
      {\textit{(supersedes $A_1$ and $A_2$)}};

    \begin{scope}
      \clip (-1.2, 0) circle (1.8);
      \fill[blue!30, opacity=0.6] (-1.2, 0) circle (1.8);
    \end{scope}
    \begin{scope}
      \clip (1.2, 0) circle (1.8);
      \fill[teal!40, opacity=0.6] (1.2, 0) circle (1.8);
    \end{scope}

    \begin{scope}
      \clip (-1.2, 0) circle (1.8);
      \fill[teal!60, opacity=0.7] (1.2, 0) circle (1.8);
    \end{scope}

    \draw[blue!70, thick]  (-1.2, 0) circle (1.8);
    \draw[teal!70, thick]  ( 1.2, 0) circle (1.8);

    \node[font=\bfseries, blue!80]  at (-2.2, 0)   {$A_1$};
    \node[font=\bfseries, teal!80]  at ( 2.2, 0)   {$A_2$};
    \node[font=\small]              at ( 0,  -0.2) {$A_1 \cap A_2$};

  \end{tikzpicture}
  \caption{Sets $A_1$ and $A_2$ share an intersection $A_1 \cap A_2$,
           and are both contained within the superseding set $A_3$.}
  \label{fig:venn3}
\end{figure}

\subsubsection{Union Rule}

Now we analyze the consensus rule when it chooses the union (i.e., when the intersection is empty). We use the same notation as in the previous subsection. As in the previous subsection, we assume that all screenings must be sensible.

{\bf Case 1}. The truth is

\[{\cal A}_3 \subset {\cal A}_1 \cup {\cal A}_2.\]

There are three sub-possibilities. 

First, ${\cal A}_3$ is entirely contained within ${\cal A}_1$. If we use LLM-S and select ${\cal A}_1$, the error will be smaller compared to ${\cal A}_1 \cup {\cal A}_2$. However, if we use FinBERT and select ${\cal A}_2$, it is not a sensible screening, since none of the optimal set is in the FinBERT choice of ${\cal A}_2$. Since it is not clear a priori which agent to choose, the safest choice is the union.

The second possibility is ${\cal A}_3$ is entirely contained within ${\cal A}_2$. The same argument as above holds.

The last possibility is when the truth includes a part of ${\cal A}_1$ and a part of ${\cal A}_2$. If we use any single agent (either ${\cal A}_1$ or ${\cal A}_2$), it is not a sensible screening, since ${\cal A}_1$ will exclude parts of ${\cal A}_3$ that are in ${\cal A}_2$, and likewise with ${\cal A}_2$. 
However, the union of the two sets will always include the correct set of stocks and hence be a sensible screening.

{\bf Case 2}. The truth is

\[{\cal A}_1 \cup {\cal A}_2 \subset {\cal A}_3.\]

Here, it is clear that any single agent will achieve a larger error compared to the union of both. We see this in Figure
\ref{fig:venn-diagram2}.

\begin{figure}[htbp]
    \centering
    \begin{tikzpicture}[thick, every node/.style={text=black, font=\large}]
    
    
        \draw[fill=gray!15, draw=gray!80!black, line width=1.5pt] (0,0) ellipse (4.5cm and 2.8cm);
        \node at (0, 2.2) {\Huge $A_3$};
    
        \draw[blue, fill=blue!10, line width=1.5pt] (-2, -0.2) circle (1.5cm);
        \node[blue] at (-2, -0.2) {\Huge $A_1$};
    
        \draw[orange, fill=orange!10, line width=1.5pt] (2, -0.2) circle (1.5cm);
        \node[orange] at (2, -0.2) {\Huge $A_2$};
    
    \end{tikzpicture}
    \caption{Venn diagram showing $A_1 \cap A_2 = \emptyset$ where both $A_1$ and $A_2$ are subsets of $A_3$.}
    \label{fig:venn-diagram2}
\end{figure}

\subsubsection{Summary of Mechanism Results}

In sum, in all cases, except one sub-possibility, our two-agent intersection/union rule is a better choice than a single-agent rule. To address that sub-possibility in Figure \ref{fig:venn3}, if we have institutional knowledge that the optimal set of assets is very large, then we can instruct our system to use only the union rule instead. However, in general, the intersection/union rule is more likely to provide a sensible screening.

\section{Stock Screening and Weights}\label{stockscreeningg}

Having outlined the architecture of our multi-agent system, we now develop its core component in detail. We formalize the concept of stock screening, describe how our LLM agents implement it, present alternative screening benchmarks, and establish a theoretical result on the Sharpe ratio consistency of screened high-dimensional portfolios.

One argument for stock screening is the identification of potentially high-return assets within a large universe of stocks. Additionally, stock screening can help prevent behavioral biases, which are common among human analysts. The value of screening has been highlighted in several studies, starting with the foundational work of \cite{ff1992}, who demonstrate that screening for small-cap and low price-to-book stocks leads to higher returns in the twentieth century. \cite{jt1993} validate the idea of ``price-performance" screening and consider the ``momentum" effect, showing that stocks performing well over the past 3-12 months tend to continue performing well in the near future. \cite{p2000} introduces F-scores: nine accounting ratios to find high-quality stocks. \cite{nm2013} adds a quality-gross profitability criterion for value stocks.

In a pure economic-theory analysis of portfolio choice, a benchmark article by \cite{vv2010} show why investors can choose a screened portfolio compared to a pure diversified-non-screened portfolio. They analyze information choice and the investment problem jointly. The information acquisition about assets is key to understanding the portfolio choice. They find that when the portfolio choice is preceded by an information choice-acquisition, the fully diversified portfolio is no longer optimal. With information acquisition investors resolve uncertainty about assets earlier than the payoffs are realized. With more substantial information about a subset of asset universe, investors are likely to hold them compared to a fully diversified portfolio. Specifically, they solve investor's utility maximization in four different general cases. They show sub-optimality of the fully diversified portfolio. Namely, the investors use information on a group of risky assets to reduce uncertainty and deviate from a fully diversified-non-screened portfolio.

In sum, screening generates higher returns and/or higher Sharpe ratio portfolios for several reasons. First, due to the limited cognitive capacity of human beings, technical analysis may be more effective on a subset of stocks compared to, for example, a larger universe of a thousand stocks. Second, since investors are susceptible to narrative bias, overweighting compelling stories attached to assets rather than their fundamentals \citep{kt1974, kt1983}, a systematic, quantitative screening approach can help identify and exclude stocks driven primarily by sentiment rather than intrinsic value.

Further, financially distressed firms can be systematically identified by analyzing debt and liquidity ratios and avoiding buying underperforming firms or those close to bankruptcy by using the Altman Z-score (see \cite{a1968}). Eliminating or shorting distressed firms from the portfolio is another avenue to higher returns.

\subsection{Financial Theory Behind Screening and Agentic AI}

Here in this subsection we utilize the paper by \cite{vv2010} to show why an under-diversified, screened portfolio may be a better choice than a fully diversified portfolio. After that we expand that joint analysis of information acquisition with portfolio choice to include Agentic-AI based screening through the paper-ideas
by \cite{kl2026}.

To begin with the analysis of \cite{vv2010}, both information acquisition and investment will be choices. We start, as they do, with Constant Absolute Risk Aversion (CARA) utility with additive information acquisition technology. In their paper, there are other utilities and with entropy based technology acquisition, we briefly discuss them at the end of this subsection. Information is acquired to reduce the conditional variance of asset payoffs. 

\cite{vv2010} uses a model with three periods. Below, we follow their analysis, and notation. In period one, the investor chooses information (signals) to buy subject to a cost. In period two, investor observes signals, and chooses weights for assets to buy. In period three, investor receives payoffs, and has realized the utility of information acquisition with portfolio formation. We introduce the notation. Investor's payoffs are $f$. Investor believes payoffs, $f$, follow a normal distribution with  $N (\mu, \Sigma)$, with $\mu$ being the return vector and $\Sigma$ is the covariance of payoffs. In period one, investor chooses a vector of signals $\eta$ about payoffs, with $\eta = f + e_{\eta}$, and $e_{\eta}$ is distributed with normally $N(0, \Sigma_{\eta})$ with zero mean and covariance of $\Sigma_{\eta}$. In period two, investor combines signal $\eta$ and his prior belief $\mu$ by Bayes' law. Let $\hat{\mu},\hat{\Sigma}$ be the posterior mean and covariance of payoffs conditional on all information on period two. 
\begin{equation}\label{ft-1}
    \hat{\mu}= E [ f / \mu, \eta]= (\Sigma^{-1} + \Sigma_{\eta}^{-1})^{-1}
 ( \Sigma^{-1} \mu + \Sigma_{\eta}^{-1} \eta).
 \end{equation}
\begin{equation}\label{ft-2}
   \hat{\Sigma}= V (f/ \mu, \eta) = (\Sigma^{-1} + \Sigma_{\eta}^{-1})^{-1}.
\end{equation}

Investor forms his portfolio $w_q$ after observing signals $\hat{\mu},\hat{\Sigma}$. Investor faces a budget constraint in choosing portfolio
\begin{equation}\label{ft-3}
    W = W_0 r + w_q' ( f - p r),
\end{equation}
where $W$ is wealth, $W_0$ is the  wealth portioned to risk free asset with return $r$, $p$ is a vector of asset prices, and $w_q'(f-pr)$ the portfolio excess return. As a simplification in their technical analysis, \cite{vv2010} assume that $\Sigma$ is diagonal, a general $\Sigma$ is possible but will have a different interpretation.

There are two choices that investor has to make, first how to allocate information among assets, and then how to weight each asset in a portfolio. Note that these two choices are $\eta, w_q$. The information acquisition choice is simplified in \cite{vv2010}. Hence to choose $\eta$ investor has to choose $\Sigma_{\eta}$, but to simplify notation-choices p.783, Section 1.4 of \cite{vv2010} uses $\hat{\Sigma}$ as the choice variable for information acquisition, and this is a direct result of (\ref{ft-2}). To be specific, this choice is explained by a unique link between covariance of signals, $\Sigma_{\eta}$ with posterior belief covariance matrix, $\hat{\Sigma}$ through (\ref{ft-2}).
Also by choosing $\hat{\Sigma}$, investor chooses the precision matrix about assets $\hat{\Sigma}^{-1}$. So, in sum, the investor will choose precision matrix, which represents information acquisition, in period one, $\hat{\Sigma}^{-1}$, and in period two, investor chooses the quantities to invest $w_q$.

First period utility is linear and $u_1(x)=x$, and the second period utility is $u_2 (W)= - \exp (-\rho W)$, where $\rho $ is the absolute risk aversion constant.  Total utility is represented as
\begin{equation}
    U = - E_1 (\exp (-\rho W)),\label{ft-4}
\end{equation}
where $E_1(.)$ represents the first period expectation.  We see that  (15) of \cite{vv2010} simplifies the utility to be, after substituting the optimal quantity as in (14) of \cite{vv2010} and using the budget constraint (\ref{ft-3}),  the utility only in terms of the precision (information choice) 
\begin{equation}\label{ft-5}
U = \sum_{i=1}^N \log [ \hat{\Sigma}_{ii}^{-1}/\Sigma_{ii}^{-1}],
\end{equation}
where  $\hat{\Sigma}_{ii}^{-1}$ is the $i$-th diagonal precision (posterior) matrix term and $\Sigma_{ii}^{-1}$ is the $i$ th  diagonal precision (prior) term. Note that the utility shows the improvement of the precision of $i$ th asset, and then these  are summed across all assets. Note that both prior and posterior precision matrix is diagonal since $\Sigma, \hat{\Sigma}$ are diagonal. The aim is to find the optimal $\hat{\Sigma}_{ii}^{-1}$ since this will show the optimal information acquisition for $i$ th asset, $i=1,\cdots, N$, where $N $ represents the total number of assets. Since there is a cost to acquire information, the utility is maximized subject to the following capacity $K$ constraint
\begin{equation}\label{ft-6}
    \sum_{i=1}^N (\hat{\Sigma}_{ii}^{-1} - \Sigma_{ii}^{-1}) \le K 
\end{equation}
Also, there is  the no-forgetting constraint
\begin{equation}
\Sigma_{ii}^{-1} \le \hat{\Sigma}_{ii}^{-1} < \infty.\label{ft-7}
    \end{equation}

    To understand the capacity constraint, note that if there is an equality (binding constraint), with zero capacity $K=0$, all assets prior and posterior information-precision are equal, so there is no improvement. However, when $K$ is positive, we can see an improvement of posterior precision over prior for some assets.

    Note that capacity constraint maps prior and posterior precision matrices into capacity, with (\ref{ft-7}), the learning technology makes the posterior precision (variance) higher (lower) than equal to prior precision (variance). To find the optimal information, measured by $\hat{\Sigma}_{ii}^{-1}$, the investor maximizes the utility in (\ref{ft-5}) subject to a capacity constraint (\ref{ft-6}) and the no-forgetting constraint in (\ref{ft-7}). Also see that both prior and posterior precision matrices are non-random as described in Section 1.4 of \cite{vv2010}.  The aim is to choose $\hat{\Sigma}_{ii}^{-1}$ for $i=1,\cdots, N$ to maximize the constrained objective function below with Lagrange multipliers $\psi, \lambda_i $, 
    \[ L =  \sum_{i=1}^N ln [ \hat{\Sigma}_{ii}^{-1}/\Sigma_{ii}^{-1}]
    + \psi [K - \sum_{i=1}^N (\hat{\Sigma}_{ii}^{-1} - \Sigma_{ii}^{-1})] + \sum_{i=1}^N \lambda_i (\hat{\Sigma}_{ii}^{-1} - \Sigma_{ii}^{-1})
    \]
The solution is 
\begin{equation}
    \hat{\Sigma}_{ii,*}^{-1} = max (\bar{\sigma}^{-1}, \Sigma_{ii}^{-1}),\label{sol1}
\end{equation}

where $\bar{\sigma}^{-1}$ is the solution to the following equation
\begin{equation}
 \sum_{i=1}^N max (\bar{\sigma}^{-1}, \Sigma_{ii}^{-1}) = K.\label{sol2}
 \end{equation}

This solution is obtained when assets have different means and variances. These last two equations-solutions form Corollary 1 of \cite{vv2010}. 
It is clear from the solution that investor learns more about the assets that the investor is less certain about. 
The number of assets learned weakly increases with an increase in capacity K. Figure 2 of \cite{vv2010} illustrates a two-asset problem, with expected utility and optimal learning choices (i.e. $\hat{\Sigma}_{jj,*}^{-1}$) for each $j=1,2$. This figure is reproduced in our paper as Figure \ref{fig:cara-additive} since we will link that to Agentic AI below. In Figure  \ref{fig:cara-additive}, the investor is more uncertain about Asset 1, with prior precision set at 0.4, and Asset 2 precision is set at 0.5. The utility curve is  touching the linear capacity constraint at an interior point. The posterior optimal precision, the black dot,  is 0.75 for both assets in our version in Figure \ref{fig:cara-additive}. So compared to prior precision of 0.4, Asset 1 optimal posterior precision is 0.75 and improved more than Asset 2  between prior-posterior precision, which improved 0,25 points between 0.5 to 0.75. The dashed vertical line shows the prior precision of Asset 1, at 0.4, and the horizontal dashed line shows the prior precision of Asset 2 at 0.5.

\begin{figure}[htbp]
    \centering
    \begin{tikzpicture}
    \begin{axis}[
        width=0.7\textwidth, height=0.7\textwidth,
        xmin=0, xmax=2, ymin=0, ymax=2,
        xlabel={Asset 1 posterior precision},
        ylabel={Asset 2 posterior precision},
        axis lines=box,
        xtick={0,0.5,1,1.5,2},
        ytick={0,0.5,1,1.5,2},
        legend style={font=\small, at={(0.97,0.97)}, anchor=north east,
                       draw=black, fill=white},
        legend cell align={left},
    ]

    \addplot[fill=gray!35, draw=none]
        coordinates {(0.4,0.5) (1.0,0.5) (0.4,1.1)} -- cycle;
    \addlegendentry{feasible choice set}

    \addplot[dashed, thick, black] coordinates {(0.4,0) (0.4,2)};
    \addlegendentry{prior variance}
    \addplot[dashed, thick, black, forget plot] coordinates {(0,0.5) (2,0.5)};

    \addplot[dashdotted, thick, black, domain=0.25:1.25] {1.5-x};
    \addlegendentry{additive precision constraint}

    \addplot[thick, black, domain=0.4:2, samples=200,
             restrict y to domain=0:2] ({x},{0.5625/x});
    \addlegendentry{CARA indifference curve}

    \addplot[only marks, mark=*, mark size=2.2pt, black, forget plot]
        coordinates {(0.75,0.75)};

    \end{axis}
    \end{tikzpicture}
    \caption{Indifference curve and information constraints for CARA
    preferences with the additive precision technology. The shaded area
    represents feasible information choices satisfying the capacity and
    no-negative learning constraints; the dot marks the optimal (interior)
    information choice.}
    \label{fig:cara-additive}
\end{figure}

Now we want to tie the optimal precision, $\hat{\Sigma}_{ii,*}^{-1} $to optimal quantity that can be invested to form a portfolio.  Let $\hat{\Sigma}_*^{-1}$ be the $N \times N$ precision matrix for all assets, where the $\hat{\Sigma}_{ii,*}^{-1}$ represents the $i$ th  asset's optimal precision. The expected optimal quantity to invest is given by (17) of \cite{vv2010} which is \footnote{ These optimal weights are corresponding to learning portfolio in \cite{vv2010}.}
\begin{equation}
    E w_q = \frac{1}{\rho} (\hat{\Sigma}_*^{-1} - \Sigma^{-1}) ( \mu - p r).\label{oq}
\end{equation}

Note that both $\hat{\Sigma}_{*}^{-1}, \Sigma^{-1}$ is diagonal. So clearly, accounting the idea in example in Figure \ref{fig:cara-additive} as well, investor holds more of the initial risky that improves its precision through information acquisition. This is called generalized learning. 
In our example in Figure \ref{fig:cara-additive}, since Asset 1 improved more that Asset 2 in precision, the investor will hold more of the risky asset Asset 1,  as long as Asset 1 has a positive expected return.
So generalized learning results in riskier portfolios. Generalist investor reduces uncertainty through information acquisition. Investor uses the information to buy assets with higher payoffs, and sell assets that have low payoffs. With larger capacity K, the investor can increase the returns by uncertainty reduction through information acquisition. Rather than a simple fully diversified, non-optimal portfolio, a screened portfolio is optimal through initial information acquisition about assets. 

\cite{vv2010} also tried mean-variance preference with entropy based technology-capacity constraint, as well as Constant Relative Risk Aversion  (CRRA) utility with entropy based technology. All these result even in much selective one asset corner solution. The asset chosen is the one with the highest Sharpe Ratio. The convex nature of the objective function results in one asset solution. There may be a knife-edge case of investor following CARA utility with entropy based capacity constraint, where the investor is indifferent among all asset choices.

\subsubsection{Financial Theory behind Agentic AI  }

Note that in Section 2.2 of \cite{kl2026}, the authors claim  investors can approximate or expand information set better with AI based systems. This proclamation also overlaps with Agentic AI based systems  larger information set reach, better coordination compared with single agent systems as shown in Section 2, and our empirics results. Also compared to human beings, simple Machine learning techniques, the reach of Agentic AI is much broader in terms of information. Our paper makes the suggestion that better approximation of an information set by Agentic AI can act as capacity, $K$ increase in (\ref{ft-6}).

Now we increase capacity compared to Figure \ref{fig:cara-additive} and call it Figure 5. This results in a higher utility, higher precision, possible larger number of riskier assets that can provide the investor larger returns.  Agentic-AI provides larger capacity, hence a higher utility compared to limited capacity approaches.  With the new larger capacity provided by Agentic AI, the investor can discover more assets that can provide higher return potential compared to a smaller capacity based approaches. 

Now let us give a simple example comparing humans with Agentic AI. First we start with the premise that $K_{Agentic AI} > K_{Human}$. 
Then assume that humans improve their information only in 2 assets out of N assets, with their capacity $K_{Human}$, and without losing any generality, these are assets 1 and 2. The optimal precision for assets 1 and 2 is $\bar{\sigma}_h^{-1}$. 
\[ \bar{\sigma}_h^{-1} + \bar{\sigma}_h^{-1} + \Sigma_{3,3}^{-1}+\cdots + \Sigma_{N,N}^{-1} = K_{human}.\]
So for humans $\hat{\Sigma}_{11,*}^{-1}= \hat{\Sigma}_{22,*}^{-1} = \bar{\sigma}_h^{-1}$.

For Agentic AI, with larger capacity, there may be an increase in precision of assets 1 and 2, or choice of more assets either with same precision or higher precision than human precision solution. Without losing any generality in our argument, the new precision solution is higher for assets 1-4, and $\bar{\sigma}_a^{-1} > \bar{\sigma}_h^{-1}$ due to a larger capacity constraint in Agentic AI,
$K_{Agentic AI}> K_{Human}$. This can be seen in (\ref{sol1})(\ref{sol2}).

Our binding capacity constraint is:
\[ \bar{\sigma}_a^{-1} + \bar{\sigma}_a^{-1} + \bar{\sigma}_a^{-1} + \bar{\sigma}_a^{-1}+ \Sigma_{5,5}^{-1}+\cdots + \Sigma_{N,N}^{-1} = K_{Agentic AI}.\]
So for Agentic AI, the optimal solution for information for assets 1-4
\[\hat{\Sigma}_{11,*}^{-1}= \hat{\Sigma}_{22,*}^{-1} = \hat{\Sigma}_{33,*}^{-1}= \hat{\Sigma}_{44,*}^{-1}= \bar{\sigma}_a^{-1} \]

Using this optimal information solution in (\ref{ft-5}) provides clearly higher utility for Agentic AI based investor through (\ref{ft-5}) compared to human based solution.

\begin{figure}

\begin{center}
\begin{tikzpicture}[x=7cm, y=7cm, every node/.style={font=\scriptsize}]
\draw[thick,->] (0,0) -- (2.0,0) node[below, font=\tiny] {Asset 1 posterior precision};
\draw[thick,->] (0,0) -- (0,2.0) node[above, font=\tiny, align=center] {Asset 2 posterior precision};

\fill[ncsured!12] (0.4,1.1) -- (0.4,1.4) -- (1.3,0.5) -- (1.0,0.5) -- cycle;
\fill[agentblue!20] (0.4,0.5) -- (0.4,1.1) -- (1.0,0.5) -- cycle;

\draw[dashed, gray] (0.4,0) -- (0.4,1.45);
\node[below, font=\tiny] at (0.4,0) {$0.4$};
\draw[dashed, gray] (0,0.5) -- (1.35,0.5);
\node[left, font=\tiny] at (0,0.5) {$0.5$};

\draw[dash dot, thick, agentblue] (0.15,1.35) -- (1.40,0.10);
\node[agentblue, font=\tiny, rotate=-45] at (1.22,0.17) {$K_{\text{human}}$};

\draw[dash dot, thick, ncsured] (0.20,1.60) -- (1.65,0.15);
\node[ncsured, font=\tiny, rotate=-45] at (1.47,0.22) {$K_{\text{Agentic AI}}$};

\draw[thick, agentblue, domain=0.30:1.85, smooth, variable=\myx, samples=60]
  plot ({\myx},{0.5625/\myx});

\draw[thick, ncsured, domain=0.45:1.85, smooth, variable=\myx, samples=60]
  plot ({\myx},{0.81/\myx});

\filldraw[agentblue] (0.75,0.75) circle (1.4pt);
\node[agentblue, font=\tiny, align=center] at (0.62,0.28) {human optimum\\ $(0.75,\,0.75)$};
\draw[-{Latex[length=1.3mm]}, agentblue] (0.65,0.38) -- (0.74,0.68);

\filldraw[ncsured] (0.9,0.9) circle (1.4pt);
\node[ncsured, font=\tiny, align=center] at (1.42,1.55) {Agentic AI optimum\\ $(0.9,\,0.9)$};
\draw[-{Latex[length=1.3mm]}, ncsured] (1.30,1.42) -- (0.95,0.97);
\end{tikzpicture}
\end{center}
\vspace{-0.2cm}
{\scriptsize  Figure 5. Human capacity $K_{\text{human}}$ ($x+y\le1.5$), utility curve, and optimum $(0.75,0.75)$ -- identical to Fig.\ 4.  Agentic AI capacity $K_{\text{Agentic AI}}>K_{\text{human}}$ ($x+y\le1.8$), shifted north-east; a higher utility curve is tangent to it at the new optimum $(0.9,0.9)$ -- higher precision on \emph{both} assets, higher utility, higher expected holdings $Ew_q$.}
\label{fig:agentic-pic}
\end{figure}

\subsection{LLM-S and FinBERT}\label{subsec:sensible_screening}

Our distinguishing approach compared to the rest of the literature is that we  provide a multi-agent  AI system that can screen stocks across a broader universe. 


We start by developing our own LLM agent, tailored to analyze firm fundamental data. We call this agent LLM-S, abbreviated for LLM-strategy. Using standard AI agent construction techniques, LLM-S produces a fundamentals-based scoring rule to screen stocks. The objective is to filter out firms that have undesirable fundamentals, leaving behind only high-quality stocks in the portfolio. We utilize LLMs in a novel way: we prompt the LLM to adopt the persona of a portfolio manager, and we ultimately let the LLM decide upon a specific buy and sell strategy based on fundamentals data. In doing so, we gain insight into the reasoning behind its chosen strategy. Ultimately, LLM-S is asked to justify its screening choices.

We next describe the main steps of our LLM-S agent. First, LLM-S uses three factors to initiate the algorithm: log firm size (mve), book-to-market ratio (bm), and 12-month momentum (mom12m). We select these specific factors because they represent the most robust and widely accepted predictors of cross-sectional stock returns in the empirical finance literature, capturing the size of the firm, its valuation, and its recent performance, respectively (see \cite{ff1992} and \cite{carhart1997persistence}). Furthermore, our preliminary testing revealed that using this parsimonious set of factors is crucial for the LLM. Expanding the factor set to include the Fama-French 5-factor variables plus momentum empirically degraded the model's performance. Therefore, it is critical to model the right LLM agent rather than simply using an LLM. This is very much related to the architecture of LLM agents and the aim here is to find agents that add economic or financial value to the firm. This architectural design is crucial to address criticism in the practical financial world regarding the value of LLMs (see Footnote 1 in Introduction).
The LLM-S is instructed to execute the following four steps on its own:

\begin{enumerate}
    \item  Explore the data to understand extreme values in factors, clustering or breakpoints, and correlations between factors.

\item Develop clear rules based on economic intuition.

\item Define specific thresholds for buy and sell.

\item Provide rationale for decisions.

\end{enumerate}

Each of these points is extensively detailed in Appendix \ref{additional material}, where a snippet of our code is also shown.  These above four rules can be thought of as the following pipeline: analyzing data, developing rules for buying and selling, deciding on exact thresholds, and finally, explaining its decision rationale. By using size, book-to-market, and momentum as features, our LLM-S filters the universe of stocks to retain only those exhibiting characteristics historically associated with risk premia. The strategy also inherently acts as a signal enhancer: by isolating firms with the most extreme signals, LLM-S constructs a subset of firms with high factor exposure. This provides the quantitative estimation method a highly informative candidate pool to optimize upon.

Our LLM-S deviates from traditional screening methods in a few fundamental ways. Firstly, traditional methods often rely on fixed heuristic rules that are static over time (see, for example, \cite{p2000} and \cite{mohanram2005separating}). In contrast, our LLM-S is rerun once every year to ensure the most recent  rules, and it has the ability to choose exact thresholds (i.e. it has the ability to contextualize the rule based on the distribution of the data for a specific date). 
While previous methods mostly rely on backward-looking training to determine scoring rules (see, for example, \cite{a1968} and \cite{asness2019quality}), we leverage the broad knowledge embedded in pre-trained modern LLMs. Because these models have already learned extensive financial principles from a massive set of data, we can use a ``zero-shot" approach. That is, the model generates the scoring rule directly from its internal knowledge without requiring any training. Lastly, the LLM provides the scoring rule \textit{plus} the economic intuition behind it, offering a greater level of interpretability.

As an example, our LLM-S provided the following strategy and rationale for buy and sell decisions in year 2024 (more details regarding reasoning of this choice are in Appendix \ref{additional material}). 

- BUY Rule: This rule targets undervalued (high bm), reasonably sized ($mve > 0.3$) companies with positive momentum ($mom12m > -0.5$).\footnote{Note the mve, bm, and mom12m features are standardized to have mean 0 and variance 1.} The economic intuition is to buy companies that are currently cheap but have shown some signs of recovery or positive market sentiment.

- SELL Rule: This rule aims to sell companies that are overvalued (low bm), have negative momentum ($mom12m < -0.55$), or are small in size ($mve < -0.75$). 


Note that buy and sell rules underscore the presence of interaction effects within financial markets, aligning with the theoretical framework established in Section 3.6 of \cite{kx2023}. 
In our example above, our buy decision is reminiscent of the interaction term between high book-to-market, reasonably sized, and positive momentum firms. Our sell decision can be similarly viewed as an interaction term. In sum, our buy and sell decisions can be seen as the decisions from a nonlinear framework involving factor interactions.
Our results can also be related  to an information theoretic approach by \cite{h2000}, who demonstrates that the momentum effect interacts with size of the firm and analyst coverage.

After introducing LLM-S, to handle human biases and account for short-term news, we utilize FinBERT, a natural language processing model specialized in analyzing financial text, to conduct sentiment analysis on recent stock news. The FinBERT agent also makes buy-sell decisions. 
After discussing their buy and sell decisions, two agents decide on a consensus of buy-sell decisions.



Once these LLMs decide which stocks to buy and sell, we feed the screened stocks into  quantitative high-dimensional weight formation techniques (see \cite{caner2019},  \cite{caner2022} using nodewise or residual nodewise regression, respectively). These methods will be described in detail in subsequent sections. In short, we combine LLM- and sentiment analysis-based screening with state-of-the-art high-dimensional quantitative portfolio formation techniques.

We also connect our screening idea econometrically to high-dimensional portfolio analysis. The screening process can be conceptualized not just as selecting stocks that can provide higher returns or Sharpe ratios, but also as choosing the correct number of stocks among a large universe of them. 
Clearly, this screening choice is a random variable and affects how we choose the weights in our portfolio.  
 First, let us denote the number of stocks chosen by the screening process as the random variable $\hat{p}$. Also, denote by  $p^*$ the target (optimal) number of stocks, which is non-random, and $p\ge  p^*\ge 1$, where $p$ is the size of the entire universe of stocks. It is worth emphasizing that, together with the number of stocks, we have the stocks themselves; for example, the optimal choice may be Nvidia and Palantir; hence giving us $p^*=2$. We have the following definition of the screening process. Let ${\cal U}$ represent the total universe of stocks.\\

{\bf Definition: Sensible Screening.} Let $\mathcal{S}^* \subseteq \mathcal{U}$ denote the optimal portfolio, with $|\mathcal{S}^*| = p^*$, and let $\hat{\mathcal{S}} \subseteq \mathcal{U}$ denote the set of stocks selected by the screening process, with $|\hat{\mathcal{S}}| = \hat{p}$. A screening process is \emph{sensible} if:
\begin{enumerate}[(i)]
    \item when $\hat{p} \ge p^*$, the optimal set is recovered within the screened set: $\mathcal{S}^* \subseteq \hat{\mathcal{S}}$;
    \item when $\hat{p} < p^*$, the screened set is entirely composed of optimal stocks: $\hat{\mathcal{S}} \subset \mathcal{S}^*$.
\end{enumerate}
A sensible screening process never selects a stock that does not belong to the optimal portfolio in (ii), and always includes the optimal portfolio when it selects enough stocks in (i). For example, suppose $p^* = 2$ and the optimal portfolio is $\mathcal{S}^* = \{\text{Nvidia, Palantir}\}$. If $\hat{p} = 3$, sensible screening requires that $\hat{\mathcal{S}} = \{\text{Nvidia, Palantir, X}\}$ for some third stock $\text{X} \notin \mathcal{S}^*$, i.e., the two optimal stocks are always included. If instead $\hat{p} = 1$, sensible screening requires that the single selected stock be either Nvidia or Palantir.

The main idea of our theoretical analysis is to treat the number of stocks in the portfolio as a random variable, one that is realized through the screening process before any weights are estimated. Under sensible screening, and robust to mild screening errors, we show that the Sharpe ratio of the screened portfolio consistently estimates the target Sharpe ratio of the optimal stock universe. This result appears to be new in the high-dimensional portfolio literature, where the composition of the investable set is typically taken as given. We next turn to the screening procedures used in our empirical analysis.

\subsection{Different Screening Methods}

As a benchmark, we also consider alternative screening methods. We  consider stock picking by human analysts, and  a logistic regression-based stock picking method. In the first method, we use the buy/sell decisions of human analysts from I/B/E/S (Institutional Brokers Estimates System, accessed via WRDS) recommendations, and in the second one, we use logistic regression to generate buy/sell signals as proposed by \cite{ckx2023}. The former also represents conventional human judgment, and the latter is a simple quantitative strategy for stock picking. Our benchmarking strategy is designed to disentangle three sources of potential performance gains: stock screening per se, the use of LLMs for screening, and the benefits of a multi-agent architecture. To this end, we compare our multi-agent system against four classes of alternatives. First, we benchmark against conventional screening methods, namely momentum and value screens, to assess whether any performance improvement stems from screening itself rather than from the specific screening technology. Second, we compare against single-agent LLM screeners, such as LLM-S alone or FinBERT combined with the quantitative method, to isolate the contribution of the multi-agent architecture. Third, we consider alternative multi-agent configurations, including hybrid systems that combine LLM-S, FinBERT, and human judgment screening, with portfolio decisions delegated to a quantitative optimization method. Finally, we compare against the quantitative-method-only baseline, which abstracts from screening entirely, to quantify the marginal value of LLM-based stock selection.


\subsection{Algorithm}

Our algorithm screens $p$ stocks and subsequently feeds the results to a quantitative precision matrix estimation method, which determines the portfolio weights. Our approach proceeds as follows:

\begin{enumerate}
    \item The LLM-S agent and the FinBERT agent independently decide which stocks to buy/sell. We rerun the LLM-S agent each year, and the FinBERT agent each month. We do this to track the slow-moving economic narrative that the LLM-S aims to identify, while also preserving the fast-moving news that FinBERT captures. This mirrors classical portfolio literature (see \cite{ff1992}) and modern ML approaches (see \cite{gu2020empirical}) that refit computationally heavy models annually. 
    \item According to their decision rule (which will be discussed in Subsection \ref{method}), the models reach a consensus, if applicable, on a subset of $\hat{p}$ stocks to buy/sell, where $1 \le \hat{p} \le p$.
    \item These $\hat{p}$ stocks are provided to the quantitative method, which computes optimal portfolio weights based on a strategy described in Section \ref{quant strategies}. These depend on a combination of a method and objective.
    \item Repeat this process for all months. We use the annualized out-of-sample Sharpe ratios as the primary measure of performance. 
\end{enumerate}

We illustrate the algorithm in Figure \ref{fig:algorithm}.

\begin{figure}
    \centering
    \includegraphics[width=0.8\linewidth]{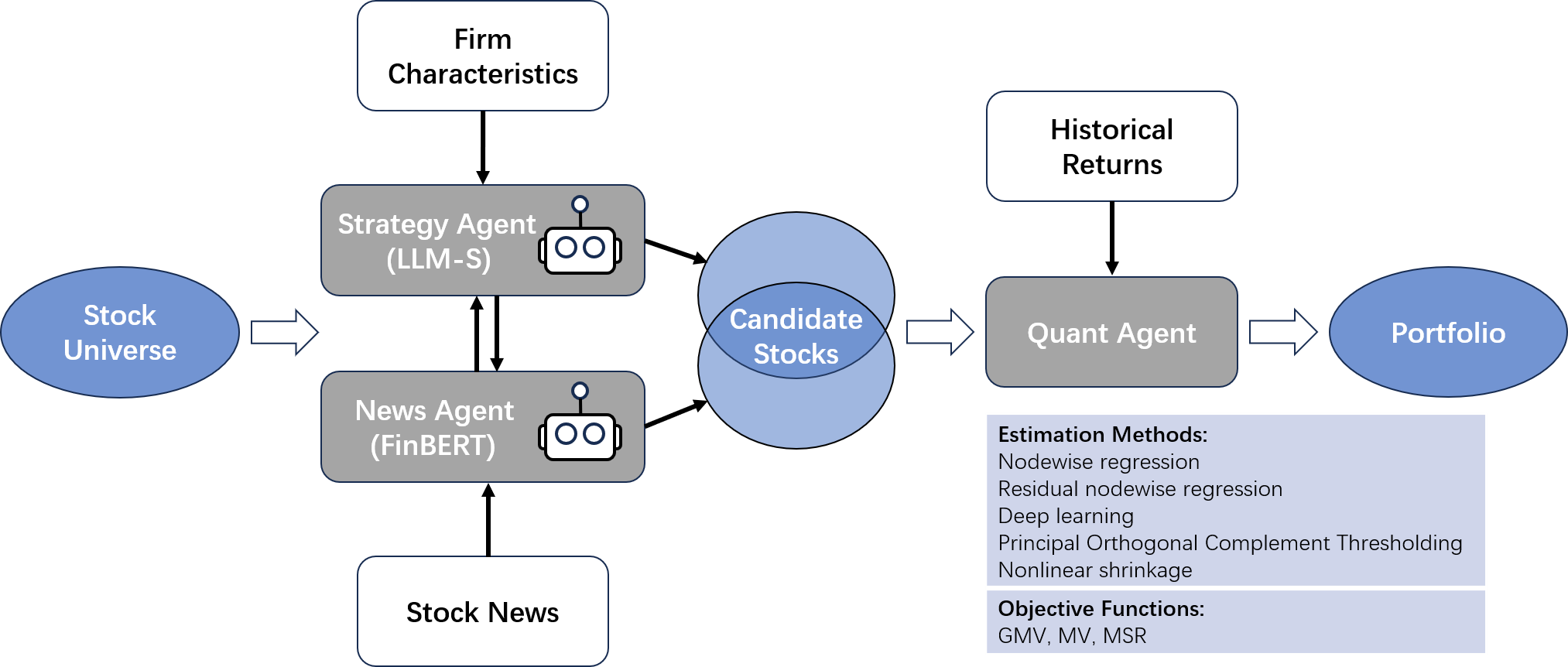}
    \caption{An illustration of the algorithm. LLM-S and FinBERT first screen for candidate stocks to invest in, and the precision matrix estimation technique assigns weights to these stocks.}
    \label{fig:algorithm}
\end{figure}

Denote the covariance matrix of outcomes as $\Sigma$.
As an illustration of the third step, consider the case where the portfolio objective is global minimum variance (GMV) and the precision matrix $\Gamma:=\Sigma^{-1} $,  which is the inverse of the covariance matrix of outcomes, is estimated via nodewise regression. The GMV weights, computed via the nodewise-GMV method, are then given by
\[ \hat{w}= \frac{\hat{\Gamma} 1_{\hat{p}}}{1_{\hat{p}}'\hat{\Gamma} 1_{\hat{p}}},\]
where $\hat{\Gamma}$ is the $\hat{p}\times\hat{p}$ estimated precision matrix of the screened stocks and $1_{\hat{p}}\in\mathbb{R}^{\hat{p}}$ is a vector of ones.

More details are described in Section \ref{quant strategies}. Note that the dimension of the precision matrix ($\hat{p}$) is a random variable, determined by the screening process prior to estimation. This stands in contrast to the existing high-dimensional portfolio literature, where the number of assets is treated as a deterministic sequence growing at a known rate; see, e.g., \cite{fan2011}, \cite{lw2017}, \cite{caner2022}, and \cite{cf2026}. 

\subsection{Screened Portfolio-Based Sharpe Ratio Analysis}

Let $\widehat{SR}_{\hat{p}}$ denote the Sharpe ratio estimator based on the screened portfolio of dimension $\hat{p}$, and let $SR_{p^*}$ denote the target Sharpe ratio of the optimally screened portfolio. In Theorem~\ref{thm:sr_convergence}, we establish that under sensible screening with mild errors (formalized in Assumption~\ref{assum1}),
\[
\left| \frac{\widehat{SR}_{\hat{p}}^2}{SR_{p^*}^2} - 1 \right| = o_p(1).
\]
That is, the squared Sharpe ratio of the screened GMV portfolio consistently estimates its target, even when the screening process is imperfect. To our knowledge, this is the first such result in the high-dimensional portfolio literature, where portfolio dimension is typically treated as fixed and known. The result holds for any sensible screening process satisfying Assumption~\ref{assum1}, and is therefore compatible with a broad class of precision matrix estimators, including those of \cite{fan2011} and \cite{caner2022}. The results will hold true in large portfolios as long as the precision matrix estimation of asset returns is consistent. Case-by-case analysis can be obtained under weaker conditions than the general-level assumptions in Appendix \ref{sec_A_proofs}.

\section{Quantitative-Based High-Dimensional Portfolio Weight Formation}\label{quant strategies}

One of the main empirical discoveries by \cite{cf2025} and \cite{caner2022} highlights the importance of jointly analyzing precision matrix estimation techniques with objective functions. The main insight is that investing in the optimal combination of a precision matrix estimation technique and an objective function yields the best performance. In all the quantitative-based techniques, we let the number of assets be $p$, and time series to be $n$. We index assets by $j=1,\dots,p$ and index time by $t=1,\dots,n$.

\subsection{Precision Matrix Estimation}
In this section, we discuss methods for high dimensional portfolio selection. We include brief descriptions of five different methodologies for precision matrix estimation, representing a variety of approaches ranging from statistical shrinkage to factor models to ML-based regression techniques. These methods address the challenge of estimating the precision matrix $\Gamma = \Sigma^{-1}$ when the number of assets exceeds the sample size. We use the estimated $\hat{\Gamma}$ of these methods as the precision matrix when making portfolio decisions in future sections.  In each of the methods below, we use a constant $p$ assets to determine the estimator for the precision matrix. In the appendix, we show how a screened universe of stocks changes the precision matrix estimation. A rigorous, in-depth description of each method can be found in Appendix \ref{sec:quant}.

\subsubsection{Nodewise Regression}\label{nodewise}

Introduced by \cite{mein2006} and applied to portfolio risk estimation by \cite{caner2019}, nodewise regression estimates the precision matrix directly via $p$ Lasso linear regressions. By modeling each asset's excess return as a linear combination of all other assets' excess returns (i.e. $y_{t,j} = y_{t,-j}' \gamma_j + \eta_{t,j}$), this method explicitly imposes sparsity on the rows of the precision matrix.
To account for high dimensions, the coefficients are estimated via a Lasso regression:$$\hat{\gamma}_j = \mathrm{argmin}_\gamma \left[ \frac{||y_j - Y_{-j} \gamma||_2^2}{n} +2\lambda_j ||\gamma||_1 \right],$$
where $y_j$ is the $n \times 1$ vector of returns for asset $j$, $Y_{-j}$ is the $n \times (p-1)$ matrix of returns for all other assets, and $||\cdot||_1$ and $||\cdot||_2$ denote the standard $\ell_1$ and $\ell_2$ norms. High-dimensional consistency is achieved by optimizing the penalty parameter $\lambda_j$ via a Generalized Information Criterion (GIC). The precision matrix can be constructed directly by using matrix algebra and defining  the diagonal elements $\hat{\Gamma}_{j,j} = \hat{\tau}_{j}^{-2}$ and the vector of off-diagonal elements for $j$-th row  $\hat{\Gamma}_{j,-j}=-\hat{\tau}_{j}^{-2} \hat{\gamma}_j'$, where $\hat{\tau}_j^2 = \frac{|| y_j - Y_{-j} \hat{\gamma}_j ||^2_2}{n} + \lambda_j || \hat{\gamma}_j ||_1$. We form each row $j$ of $\hat{\Gamma}$ using the main diagonal term $\hat{\Gamma}_{j,j}$ and the off-diagonal terms $\hat{\Gamma}_{j,-j}$. Stacking these rows yields $\hat{\Gamma}$, the nodewise estimator of the precision matrix.

\subsubsection{Residual Nodewise Regression}
Proposed by \cite{caner2022}, this approach extends standard nodewise regression by integrating factor models. The asset returns are modeled by $y_{t,j} = b_j' f_t + u_{t,j}$, where $f_t$ represents a $K \times 1$ vector of observable factors (e.g., the Fama-French three-factor model) and $b_j: K \times 1$ represents the factor loadings. Unlike standard nodewise regression, which assumes the precision matrix of returns is sparse, this method only assumes sparsity in the precision matrix of the unobserved idiosyncratic errors, $\Sigma_u^{-1}$. After the observable factor structures are removed via ordinary least squares, nodewise regression is applied to the residuals to estimate  $\Sigma_u^{-1}$ by $\hat{\Omega}$. The final precision matrix of returns is reconstructed analytically using the Sherman-Morrison-Woodbury formula:$$\hat{\Gamma} = \hat{\Omega} - \hat{\Omega} \hat{B} [\hat{\Sigma}_f^{-1} + \hat{B}' \hat{\Omega}_{sym} \hat{B}]^{-1} \hat{B}' \hat{\Omega},$$where $\hat{B}$ is the $p \times K$ matrix of OLS-estimated factor loadings, $\hat{\Sigma}_f$ is the sample covariance of the factors, and $\hat{\Omega}_{sym} = (\hat{\Omega} + \hat{\Omega}')/2$ is the symmetrized residual precision matrix estimate.

\subsubsection{Principal Orthogonal Complement Thresholding (POET)}
Developed by \cite{fan2013}, POET is designed for linear factor models ($y_{t,j} = b_j' f_t + u_{t,j}$) where the $K$ common factors $f_t$ are unobservable and must be estimated. POET first uses principal components analysis (PCA) to estimate the unobservable factors, then uses a thresholding method to estimate the covariance matrix of errors. Under the assumption that the covariance matrix of the remaining idiosyncratic errors ($\Sigma_u$) is sparse, POET applies a soft-thresholding technique to the error covariance matrix to eliminate spurious correlations, yielding $\hat{\Sigma}_{u,Th}$ (we leave technical details in Appendix \ref{sec:quant}). The final precision matrix is reconstructed from the thresholded error covariance and the estimated factor structure:$$\hat \Gamma=\hat{\Sigma}_{u,Th}^{-1}- \hat{\Sigma}_{u,Th}^{-1} \hat B(I_{K}+\hat B' \hat{\Sigma}_{u,Th}^{-1}\hat B)^{-1}\hat B' \hat{\Sigma}_{u,Th}^{-1},$$where $\hat{B}$ is the matrix of PCA-estimated factor loadings and $I_K$ is the $K \times K$ identity matrix.

\subsubsection{Deep Learning}
To capture complex, nonlinear relationships between asset returns and observable factors, \cite{c-d2025} introduce a deep learning-based estimator. The asset returns are modeled by a multi-layer neural network, $y_{t,j}=g_j(f_t)+u_{t,j}$, where $f_t$ is a $K$-dimensional observable column vector and $g_j(\cdot)$ is an unknown function. This effectively decomposes the total covariance matrix into a nonlinear function covariance component and an idiosyncratic error covariance component ($\Sigma_y = \Sigma_g + \Sigma_u$). Similar to POET, a targeted thresholding mechanism is applied to the error covariance matrix ($\hat{\Sigma}_{u,Th}$) to ensure stability. The final precision matrix is then recovered algebraically:$$\hat \Gamma =\hat \Sigma_{u,Th}^{-1}-\hat \Sigma_{u,Th}^{-1}\hat \Sigma_g(I_{K}+\hat \Sigma_{u,Th}^{-1}\hat \Sigma_g)^{-1}\hat \Sigma_{u,Th}^{-1},$$where $\hat{\Sigma}_g$ is the estimated covariance matrix of the neural network predictions. 

\subsubsection{Nonlinear Shrinkage (NLS)}
Introduced by \cite{lw2017} and  advanced by \cite{ledoit2020analytical}, Nonlinear Shrinkage (NLS) addresses the instability of the sample covariance matrix in high dimensions through modifying its eigenvalues. Starting with the spectral decomposition of the sample covariance matrix $S = U \Lambda U'$ (where $U$ is the orthogonal matrix of eigenvectors and $\Lambda$ is the diagonal matrix of sample eigenvalues), NLS systematically modifies the sample eigenvalues while preserving the eigenvectors. Utilizing the Hilbert transform of the sample spectral density (elaborated in Appendix \ref{sec:quant}), NLS derives an optimal, closed-form final estimator, which takes on the form:$$\hat{\Sigma} = U \hat{\Delta}^* U',$$ 
where $\hat{\Delta}^*$ is the diagonal matrix of optimally shrunk eigenvalues. Conceptually, this local shrinkage pulls dispersed sample eigenvalues toward each other, correcting the systemic over-dispersion inherent in high-dimensional settings and producing a well-conditioned precision matrix estimator. We then set $\hat{\Gamma} = \hat{\Sigma}^{-1}$.

\subsection{Objective Functions}

We introduce three objective functions that are heavily used in practice and provide the optimal weight of the portfolios attached to these functions. These are the global minimum variance portfolio (GMV), the Markowitz mean-variance portfolio (MV), and the maximum Sharpe ratio portfolio (MSR). 

The GMV weights are given by $$w^* := \mathrm{argmin}_{w \in \mathbb R^p} w' \Sigma w, \quad \text{such that } w' 1_p = 1. $$
The solution to the above is well-known and is given by $$w^* = \frac{\Gamma 1_p}{1_p' \Gamma 1_p}.$$

The GMV portfolio is designed to be very risk-averse, as it is designed to minimize the variance of a portfolio. It serves as an effective mechanism for generating modest returns with minimal risk. A different objective function that incorporates both return and variance would be the mean-variance portfolio, whose weights are given by

$$w^* := \mathrm{argmin}_{w \in \mathbb R^p} w' \Sigma w, \quad \text{such that } w' 1_p =1, \quad w' \mu = \rho .$$

Here, $\rho$ is the target monthly return (in our empirics we use $\rho = 0.01$). The mean return $\mu$ is estimated by the average return in our training window. The solution is also well-known and is given by 
$$w^* = \left( \frac{D-\rho F}{AD - F^2} \right) \Gamma 1_p + \left( \frac{\rho A-F}{AD-F^2} \right) \Gamma \mu,$$
where $A:= 1_p' \Gamma 1_p$, $F:=1_p'\Gamma \mu$, and $D:= \mu' \Gamma \mu$.

Finally, the maximum Sharpe ratio portfolio, which is heavily used in practice, has weights given by 
$$w^* := \mathrm{argmax}_{w \in \mathbb R^p} \frac{w'\mu}{\sqrt{w'\Sigma w}}, \quad \text{ such that } 1'_p w = 1.$$ The solution is given by $$w^* = \frac{\Gamma \mu}{1_p' \Gamma \mu}.$$


\section{Empirical Results}\label{sec:empirics}

This section addresses two empirical questions. First, does combining LLM-based screening with a quantitative weighting method produce superior out-of-sample Sharpe ratios relative to the quantitative method operating without any screening? This isolates the incremental value of the multi-agent screening stage. Second, do conventional screening approaches, specifically logistic regression-based screening and human analyst recommendations, achieve comparable performance when paired with the same quantitative method? We show that the multi-agent AI framework outperforms both alternatives across all specifications.

We also conduct a series of robustness checks. First, we evaluate a hybrid agent system (specifically, FinBERT, LLM-S, plus human analyst recommendations) with the quantitative method to determine if incorporating human judgment enhances or detracts from the Agentic AI system. Second, we assess whether a single-agent screening approach (pure FinBERT or LLM-S) and the quantitative method outperforms the two LLM agents together plus the quantitative method.
Lastly, we evaluate how human-only screening with the quantitative method performs compared to the quantitative method alone. This provides a measurable assessment of the value of traditional human screening.

We report the following performance measures: the mean (out-of-sample) monthly return, the out-of-sample variance, and the out-of-sample Sharpe ratio. All of these are calculated with a transaction cost of 10 basis points. Let $y_{P, t+1}=\hat{w}_t' y_{t+1}$ be the gross return of the portfolio at time $t+1$, and $\hat{w}_t$ be the weights of the portfolio at time $t$ with some method. Then the net returns are given by $$y_{P, t+1}^{net} = y_{P, t+1} -c(1+y_{P,t+1}) \sum_{j=1}^p \left|\hat{w}_{t+1,j} - \hat{w}_{t,j} \frac{1+y_{t+1,j}}{1+y_{P,t+1}}\right|,$$
where $c$ is the transaction cost. Turnover metrics, as well as leverage, concentration, and drawdown, can be found in Appendix \ref{turnover}.

Let the in-sample period span from $1$ to $n_I$.
Then the out-of-sample mean return is defined as $$\mu^{net} = \frac{1}{n-n_I} \sum_{t=n_I}^{n-1} y_{P,t+1}^{net}.$$

Note that $\mu^{net}$ is the average out-of-sample portfolio return over rolling windows, reported as ``Returns" in our tables.

The out-of-sample variance is defined as $$\hat{\sigma}^2_{net} = \frac{1}{n-n_I-1} \sum_{t=n_I}^{n-1} (y_{P, t+1}^{net} - \mu^{net})^2,$$
and reported as ``Variance" in our tables.

Lastly, the out-of-sample Sharpe ratio is defined as $$SR^{net}= \frac{\mu^{net}}{\hat{\sigma}_{net}}.$$

\subsection{Data}
Our monthly fundamentals dataset covers all S\&P~500 constituents from January 2005 
to April 2024, sourced from CRSP and Compustat, and includes three firm 
characteristics: size, book-to-market ratio, and 12-month momentum. This dataset 
is a subset of~\cite{green2017characteristics}, and the sample ends in April 2024 
to align with the coverage of our news dataset. Following~\cite{green2017characteristics}, 
we winsorize all characteristics at the 1st and 99th percentiles, standardize each 
to zero mean and unit standard deviation, and replace missing values with zero, 
implying that firms with missing data are assigned the cross-sectional average 
characteristic for that month.\footnote{Similar imputation methods for missing data are the standard in empirical asset pricing: used in \cite{gu2020empirical}, \cite{kozak2020shrinking}, and  \cite{kelly2025artificial}. \cite{chen2024missing} specifically recommend using mean imputation for ML studies. See footnote 5 in \cite{green2017characteristics} for additional justification.}

For each month $t$'s return, we calculate their features at the end of month $t-1$. We assume that annual accounting data are available at this time if the firm's fiscal year ended at least six months before $t-1$.

\subsection{Method}\label{method}

\subsubsection{Stage 1: Screening Agents}

The first stage of our model consists of a screening agent. As described above, this can be a conventional screening agent, such as human analysts, but we also consider LLMs as screening agents. The two such agents are LLM-S, a fundamentals-based LLM that generates buy, sell, and hold signals according to monthly firm fundamentals data, and FinBERT, a news-based agent that generates signals based on sentiment analysis on news each month.

We first describe the conventional screening approaches. The approach using human analysts is based on analyst recommendations on the I/B/E/S dataset. Since analyst recommendations may become stale quickly, we use an exponentially decreasing weight on each analyst's recommendation, based on how far their date of recommendation is to month-end (we use a half-life of $7$ days). We then take the weighted sum of all analysts' recommendations for each S\&P 500 stock, and calculate the change in the weighted sum from month to month. If the change in recommendations is greater than $0.5$ (respectively, less than $-0.5$), we interpret that as a sell (respectively, buy) signal. All others are hold signals. 

The approach using logistic regression runs a simple logistic regression cross-sectionally for all firms on a specific date. We then evaluate the out-of-sample probabilities for the next month, assign buy signals to firms with probabilities in the top decile, and assign sell signals to firms with probabilities in the bottom decile. Everything else is assigned a hold signal.

Now we describe the LLM screening approaches. FinBERT is a specialized, pre-trained natural language processing model designed for financial sentiment analysis (see \cite{araci2019finbert}). We use S\&P 500 news articles provided online at Hugging Face.\footnote{\url{https://huggingface.co/datasets/KrossKinetic/SP500-Financial-News-Articles-Time-Series}} For each stock-month combination, we use FinBERT to analyze all news articles in that month. We define a firm's sentiment score to be the positive FinBERT probability minus the negative FinBERT probability and using the same exponentially-decreasing weighted sum as above to account for potentially stale news. If the sentiment score is greater than $0.1$ (respectively less than $-0.1$), we assign a buy signal (respectively assign a sell signal). All other stocks are assigned hold signals.

Lastly, we describe LLM-S. We use ChatGPT-4o for short-term analysis and ChatGPT-3.5  for medium-term analysis.
We give LLM-S the necessary tools to access firm fundamentals data for a given date. 
It then considers the distribution of firm size, book-to-market, and momentum values for that date, and outputs a deterministic scoring rule for buy, hold, and sell signals. To leverage the zero-shot capabilities of LLMs, we only provide firm characteristics to the LLM at the end of the year, and ask it to provide buy, hold, and sell signals for the next year. Importantly, the agent does not use statistical learning on historical data. Instead, it relies on pretrained domain knowledge and in-context reasoning to analyze the current cross-section of firms and produce screening decisions. Finally, to ensure a sanity check, we also ask the agent to provide its economic intuition for why it chooses the scoring rule that it does. We have included snapshots of our prompts and an example output in the appendix. 

In the following sections, we also consider ensembles of agents (for example, LLM-S + FinBERT), geared at reducing hallucinations/inaccurate recommendations for each agent. The decision rule for the ensemble is given as follows: given an ensemble of agents $A_1$ and $A_2$, that provide sets of buys and sells $S_1$ and $S_2$, respectively, the ensemble $A_1+A_2$ recommends the set $S_1 \cap S_2$ to invest in. However, if the intersection is empty, then it recommends the set $S_1 \cup S_2$. This is saying that ideally, we would like to take the firms in the consensus of both agents. However, if the two agents cannot reach a consensus, we take all of their recommendations. If we have an ensemble of three agents $A_1, A_2$, and $A_3$, with recommendations $S_1$, $S_2$, and $S_3$, respectively, then the ensemble $A_1+A_2+A_3$ will recommend stocks that belong in at least two distinct sets. This is akin to taking the majority vote between three agents. The above particular decision rule defaults to the union of sets in case the intersection is trivial.

We explain our choice of decision rule further. Logically, the set of stocks with high alpha should be the ones that have exhibited both positive sentiment and positive fundamentals recently. This is precisely what the intersection is capturing - the set of stocks that have predictably high alpha. However, when there is no such intersection, it may be wiser to turn to diversification instead, which is what the union captures.


As a robustness check, we also changed our rule in the following way. First we use the intersection rule, and if intersection is empty, we default to the FinBERT agent's selection (\cite{ckx2023} show the strong effect of parsing news sentiments. Our own single-agent tables in Tables \ref{tab2} and \ref{tab5} also show FinBERT can deliver better performance than LLM-S.) These results can be found in Appendix \ref{consensus rules}.

\subsubsection{Stage 2: Quantitative Weighting Method}

Given a set of buy, sell, and hold signals from Stage 1, we run a variety of statistical techniques and portfolio weight formations to determine the combination that performs best out-of-sample. The statistical techniques used for high-dimensional portfolio formation are  nodewise regression, residual nodewise regression, deep learning, and nonlinear shrinkage, and also we use inverse of sample covariance matrix since the screened number of stocks are smaller than our fixed window size of $n=180$ (training sample size). Since the screened number of stocks are not small in comparison with training sample size, we still use other machine learning methods for precision matrix estimation. However, our econometric theory covers both screened number of stocks $\hat{p}$ smaller than equal to n, as well as $\hat{p}>n$ in case future data or a new method may find $\hat{p}>n$.
The portfolio weight formations are the global minimum variance portfolio, the mean-variance efficient portfolio, and the maximum Sharpe ratio portfolio. All of these methods are described in depth in Section \ref{quant strategies}. 

We only apply these techniques to firms that have either a buy or a sell signal in Stage 1. By considering the subset of non-hold firms as a whole, we allow the quantitative weighting method to correct potential mistakes made by the LLM screening agents. This prevents a cascade of errors (e.g., assigning a positive weight to a stock that the LLM screening agent recommended selling).

Lastly, to calculate optimal weights, we utilize a rolling window of 180 months (15 years) of historical returns data, stepping forward one month at a time. We  analyze the period from October 2021 through April 2024, the period after ChatGPT-3.5's knowledge cutoff date

In the second empirical exercise,  we focus our analysis here on the monthly runs from November 2023 through April 2024, the period after ChatGPT-4o's knowledge cutoff date. 

\subsection{Results: Medium-Term}\label{results_medium}

This section presents our empirical results, progressing from baseline comparisons to the evaluation of our full multi-agent framework. Our choice of LLM-S agent will be ChatGPT-3.5, since its knowledge cutoff date is September 2021. Hence our out-of-sample period will be October 2021-April 2024.
We first ask whether pairing an LLM screening agent with a quantitative method improves upon a purely quantitative baseline, and whether any such improvement is specifically attributable to LLM-based screening rather than screening per se. We then examine whether news-based sentiment screening via FinBERT contributes incremental performance gains, and whether integrating traditional analyst judgment with LLM-based screening generates higher Sharpe ratios than either approach alone. Finally, we evaluate whether our multi-agent AI system delivers superior risk-adjusted performance relative to all single-agent and hybrid alternatives, and whether the best-performing configurations outperform a passive market benchmark. To contextualize these performance metrics, we consistently benchmark our strategies against the S\&P 500 index, which generated a Sharpe ratio of 0.2685 between October 2021 and April 2024. 

Our main results are in Tables \ref{tab35-1}-\ref{tab35-9}, which detail annualized Sharpe ratios, returns, and variance metrics. The top Sharpe ratio for each method-portfolio combination is indicated in bold. All our results are after transaction costs of 10 basis points.

\subsubsection{Quantitative Method versus LLM-S Plus Quantitative Method}\label{ql_medium}

Table \ref{tab35-1} presents the baseline case of a purely quantitative strategy without any LLM agent or screening method. In this setup, the NLS/MSR portfolio performs best, achieving a Sharpe ratio of 1.0088, as well as the highest annualized return in the baseline model at 18.01\%.  Notably, there are only three configurations under the purely quantitative baseline that successfully outperform the broader market's Sharpe ratio of 0.2685; all other quantitative strategies fail to pass this benchmark.

The agent in Table \ref{tab35-2} consists of a two-stage model, where the LLM-S agent first generates buy/sell signals, and the quantitative method subsequently determines the portfolio weights. Under this framework, the Deep learning/MSR portfolio performs best, achieving a Sharpe ratio of 0.6108. When using the LLM-S agent for screening, all GMV and MSR portfolios (12 in total)  exceed the market threshold.

A critical question is whether incorporating LLM-S screening provides an advantage over the purely quantitative baseline. 
We see that in all method/portfolio combinations, using LLM-S screening improves all Sharpe ratios (SR henceforth) in 12 portfolios out of 15 (There is no Sample Cov in Baseline since baseline there is no screening and $p>n$ and Sample Covariance is singular-non-invertible). To illustrate the magnitude of this impact, consider the deep learning/MSR portfolio: without screening, the baseline quantitative method yields an SR of 0.0586 in Table \ref{tab35-1}, but there is a dramatic improvement to an SR of 0.6108 in Table \ref{tab35-2}. This illustration demonstrates how LLM-S screening can successfully transform a low-SR strategy into a profitable one.

\subsubsection{Is It Screening or LLM-S Screening That Makes the Difference?}

To determine whether our observed performance gains are driven by screening in general or uniquely by the LLM-S agent, we establish a benchmark using logistic regression-based screening. 
Once again, the NLS/MSR portfolio emerges as the top performer, achieving a Sharpe ratio of 1.0015 in Table \ref{tab35-3}. Against the market benchmark, only 11 configurations  manage to beat or match the S\&P 500 under this screening method. Comparing these results to the LLM-S screening in Table \ref{tab35-2}, we observe that the LLM-S results are mixed, although the highest SR achieved by logistic screening is larger than that of LLM-S.

As a second benchmark, we evaluate screening based on human analyst buy/sell recommendations that are used in conjunction with the quantitative strategies outlined above, with results in Table \ref{tab35-4}. Relying on human analyst recommendations yields the poorest overall performance, with every single method-portfolio combination failing to beat the market benchmark. Comparing this directly with Table \ref{tab35-2} provides a clear assessment of humans versus LLM-S screening capabilities. With the exception of the NW/MV method-portfolio combination, the Sharpe ratios generated by LLM-S screening (Table \ref{tab35-2}) are larger than the ones generated by human analysts (Table \ref{tab35-4}). This points towards the superiority of the LLM-S model for screening. We attribute this to the over- and under-reaction tendencies well documented in the behavioral finance literature in a period which has a large downturn in 2022.

We can also address whether any screening method together with a quantitative method improves upon the baseline quantitative strategy (Table \ref{tab35-1}). Comparing the logistic screening approach (Table \ref{tab35-3}) with the baseline (Table \ref{tab35-1}), we see that it does: 14 out of the 18 method-portfolio combinations exhibit improved Sharpe ratios when logistic regression screening is applied. 
Conversely, when comparing human analyst screening with the quantitative method (Table \ref{tab35-4}) to the baseline quantitative strategy, the results are mixed. It is not clear whether human judgment can improve upon the baseline quantitative strategy.
But top SR in Table \ref{tab35-4} is very low at 0.1910, compared to baseline top SR of 1.0088.


In summary, combining LLM-S based screening with quantitative strategies to form the portfolio dominates both the baseline quantitative method and human-based  screening methods coupled with quantitative methods.


\subsubsection{Is FinBERT-Based Screening Helpful?}

Next, we investigate whether pairing sentiment analysis with the quantitative strategy outperforms the purely quantitative baseline.
Comparing Table \ref{tab35-1} with Table \ref{tab35-5} reveals that screening stocks with FinBERT increases Sharpe ratios across all method-portfolio combinations, except in the three NLS portfolio specifications. Furthermore, sentiment-based screening via FinBERT proves highly effective against the broader market: in all method-portfolio combinations it beats the S\&P 500 Sharpe ratio of 0.2685.

We then assess which AI screening method is more effective: FinBERT or LLM-S? To answer that question, we compare Table \ref{tab35-2} with Table \ref{tab35-5}. The winner in Table \ref{tab35-5} (Residual NW/MSR) yields a Sharpe ratio of 0.6937 via FinBERT screening, and the winner in Table \ref{tab35-2} (deep learning/MSR) achieves a Sharpe ratio of 0.6108. Not only does FinBERT achieve a higher maximum Sharpe ratio overall, but a direct comparison across all method-portfolio combinations reveals that FinBERT outperforms LLM-S in 14 out of 18 cases.

Furthermore, we evaluate whether FinBERT-based screening is better than logistic regression-based screening. To that effect we compare Tables \ref{tab35-3} and \ref{tab35-5}. FinBERT dominates logistic screening in 10 out of 18 method-portfolio combinations. For instance, the deep learning/GMV portfolio achieves a 0.3254 Sharpe ratio under logistic regression screening, but it increases to 0.4262 SR using FinBERT screening - a 30.9\% increase. Lastly, comparing human analyst screening (Table \ref{tab35-4}) versus FinBERT (Table \ref{tab35-5}) demonstrates that FinBERT uniformly outperforms human judgment. Notably, the advantage FinBERT screening has over human analyst screening is larger than it has over logistic regression screening. 

Ultimately, a consistent pattern emerges: FinBERT-based screening easily outperforms both logistic regression and human-based screening, while also substantially improving upon the baseline quantitative strategy as well.

When we combine FinBERT with human analysts for screening and then use the quant agent (Table \ref{tab35-7}), the results are the worst, with all portfolios resulting in negative SRs. This clearly demonstrates that combining FinBERT with human judgment is detrimental; we hypothesize that human screening during market drawdowns yields lower Sharpe ratios due to well-documented behavioral biases

\subsubsection{Can Hybrid Screening Help?}
In this part, we analyze a hybrid approach that integrates the LLM-S agent with human analyst recommendations. The results of this ensemble are presented in Table \ref{tab35-6}, and further details can be found in Section \ref{method}.
First, we evaluate this ensemble against the baseline quantitative method by comparing Tables \ref{tab35-1} and \ref{tab35-6}. The results are mixed and there is no clear evidence that incorporating human judgment into the quantitative pipeline yields a performance advantage. The top SR of baseline quant table is 1.0088 and the top SR in Table \ref{tab35-6} is 0.3710. 

It could be possible that suboptimal decisions by human analysts could have degraded the ensemble's performance. To check that, we compare the hybrid results (Table \ref{tab35-6}) with the pure LLM-S screening results (Table \ref{tab35-2}). The highest Sharpe ratio in Table \ref{tab35-6} is 0.3710 (Residual NW/MV), which beats  the 0.2685 market benchmark but much lower than  the 0.6108 peak Sharpe ratio achieved by pure LLM-S in Table \ref{tab35-2}. The difference is substantial, indicating that human judgment may be suppressing the LLM-S Sharpe ratio during this evaluated time period. However, it seems that in all MV portfolios, humans add value to LLM screening. In the other portfolios, humans cannot add value.
For instance, the deep learning/MV portfolio achieves a Sharpe ratio of 0.1132 under LLM-S screening, but this increases to 0.2625 when human analysts are added.


\subsubsection{Multi-Agent AI versus Single-Agent Screening}

In the medium term, we want to understand whether there is value to Agentic AI-based screening (FinBERT plus LLM-S) over the alternative portfolio formations discussed so far. In particular, we evaluate whether the Agentic AI architecture outperforms the baseline quantitative strategy. Then, can Agentic AI do better than standalone single-agent models: LLM-S only or FinBERT only?

The results are presented in Table \ref{tab35-8}. The results are excellent when benchmarked against the broader market: twelve of the method-portfolio combinations exceed the S\&P 500 Sharpe ratio of 0.2685. MV portfolios perform poorly though.  The top-performing configuration is the residual NW/GMV portfolio with a 1.0946 Sharpe ratio, representing a 307\% improvement over the S\&P 500 SR of 0.2685. That is the largest SR for the same time period compared across all other tables. 

We then benchmark the Agentic AI-based approach against the baseline quantitative strategy baseline shown in Table \ref{tab35-1}. The performance difference is striking. For example, the deep learning/GMV portfolio yields a Sharpe ratio of -0.1147. Introducing Agentic AI screening elevates the Sharpe ratio to 0.9858, a remarkable turnaround. Moreover, with the exception of the MV portfolios and the NLS/MSR portfolio, the Agentic AI system dominates the baseline across other method-portfolio combination. 

Beyond Sharpe ratios, the Agentic AI framework generates impressive absolute returns. The peak annualized return is 37.31\% in our out-of-sample period, produced by the Residual NW/GMV portfolio in Table \ref{tab35-8}. The top return for the purely quantitative baseline without any screening is 18.01\% (NLS/MSR, Table \ref{tab35-1}).

Next, we answer whether a single agent by itself (FinBERT or LLM-S) can deliver better results than Agentic AI. To evaluate this, we compare the Agentic AI results (Table \ref{tab35-8}) against the FinBERT-only (Table \ref{tab35-5}) and LLM-S only (Table \ref{tab35-2}) frameworks. The evidence heavily favors Agentic AI. The FinBERT-only maximum Sharpe ratio is 0.6937 (Residual NW/MSR), compared to the Agentic AI winner of 1.0946 (Residual NW/GMV). Out of 18 combinations, in ten individual method-portfolio combinations, the Agentic AI system significantly outperforms the FinBERT-only model. The main exceptions are MV portfolios.  For instance, the deep learning/GMV portfolio has a Sharpe ratio of 0.4262 in Table \ref{tab35-5}, but this increases to 0.9858 when using Agentic AI. 
Similarly, the highest Sharpe ratio achieved by LLM-S is 0.6108, compared to 1.0946 with Agentic AI. With the main exception of MV portfolios, in 11 out of 18  method-portfolio combinations, Agentic AI consistently dominates LLM-S. For example, LLM-S delivers a Sharpe ratio of 0.3177 in the NW/GMV portfolio, and this increases to 0.8794 when using Agentic AI. The increase is large and can make a large difference in practice.

In summary, there is a substantial gain by using Agentic AI compared to single LLM agents.  Excluding the MV portfolios, the Agentic AI system also dominates in all logistic or human screening portfolios.

\subsubsection{Multi-Agent AI with Human Analyst Input}\label{aih_medium}

Finally, we investigate whether integrating human judgment into the Agentic AI framework yields any benefits. To evaluate this, Table \ref{tab35-9} presents the results of an ensemble consisting of FinBERT, LLM-S, and human analysts working together.  We benchmark this against the pure Agentic AI system detailed in Table \ref{tab35-8}. It is clear that adding human judgment significantly degrades performance, reducing the maximum SR from 1.0946 (residual NW/GMV) to 0.1117 (POET/GMV). Furthermore, in this three-agent screening ensemble, all resulting Sharpe ratios fall below the market benchmark, underscoring the detrimental effect of human judgment. 

We also investigated the factors contributing to the low performance of human analysts. We find that it may be related to overreaction and reversing positions quickly in volatile markets. 

\subsection{Results: Short-Term Performance}\label{results}

This section presents our empirical results in a short out-of-sample period, progressing from baseline comparisons to the evaluation of our full multi-agent framework. We first ask whether pairing an LLM screening agent with a quantitative method improves upon a purely quantitative baseline, and whether any such improvement is specifically attributable to LLM-based screening rather than screening per se. We then examine whether news-based sentiment screening via FinBERT contributes incremental performance gains compared to the purely quantitative baseline, and whether integrating traditional analyst judgment with LLM-based screening generates higher Sharpe ratios than either screening approach with the quant agent. Finally, we evaluate whether our full multi-agent AI system delivers superior risk-adjusted performance relative to all single-agent and hybrid alternatives, and whether the best-performing configurations outperform a passive market benchmark. To contextualize these performance metrics, we consistently benchmark our strategies against the S\&P 500 index, which generated a Sharpe ratio of 2.0857 between November  2023 and April 2024. 

Our main results are in Tables \ref{tab1}-\ref{tab9}, which detail annualized Sharpe ratios, returns, and variance metrics. The top Sharpe ratio for each method-portfolio combination is indicated in bold. All our results are after transaction costs of 10 basis points.

\subsubsection{Quantitative Method versus LLM-S Plus Quantitative Method}\label{ql}

Table \ref{tab1} presents the baseline case of a purely quantitative strategy without any LLM agent or screening method. In this setup, the NLS/MV portfolio performs best, achieving a Sharpe ratio of 3.0414 with a 19.29\% annualized return. Furthermore, operating on the entire universe of S\&P 500 stocks allows the baseline to maximally diversify away idiosyncratic risk, achieving the absolute lowest variances across all tables and models (0.0025 variance for the baseline Residual NW/MV portfolio). Notably, there are only two configurations under the purely quantitative baseline that successfully outperform the broader market's Sharpe ratio of 2.0857; all other quantitative strategies fail to pass this benchmark.

The agent in Table \ref{tab2} consists of a two-stage model, where the LLM-S agent first generates buy/sell signals, and the quantitative method subsequently determines the portfolio weights. Under this framework, the Sample Cov portfolio performs best, achieving a Sharpe ratio of 1.1629. When using the LLM-S agent for screening, none of the strategies exceed the market threshold.

A critical question is whether incorporating LLM-S screening provides an advantage over the purely quantitative baseline. 
We see that in all method/portfolio combinations, using LLM-S screening improves Sharpe ratios (SR henceforth) only in three deep learning portfolios, albeit a significant increase. To illustrate the magnitude of this impact, consider the deep learning/GMV portfolio: without screening, the baseline quantitative method yields an SR of -0.4666 in Table \ref{tab1}, but there is a dramatic improvement to a SR of 0.5857 in Table \ref{tab2}. This illustration demonstrates how LLM-S screening can successfully transform a negative-SR strategy into a profitable one in certain cases. 

However, it is clear that in the short run, LLM-S with the quantitative agent performs worse than the market, and it is also not necessarily better than the pure quant strategy. This is evidence towards the narrative that single-agent LLM strategies may not work well by themselves.

\subsubsection{Is Simple Logistic or Human Screening Better Than LLM-S?}

We establish a benchmark using logistic regression-based screening, detailed in Table \ref{tab3}. Specifically,  we use the prior  training window to fit a cross-sectional logistic regression to select the top and bottom deciles of firms to long and short, respectively. We then apply the quantitative agent on the output of the logistic regression to assign portfolio weights.
Once again, the NLS/MSR portfolio emerges as the top performer, achieving a Sharpe ratio of 3.3540 in Table \ref{tab3}. Against the market benchmark of S\&P 500 SR of 2.0857, only six configurations of the portfolio manage to beat the index SR  under this screening method. Comparing these results to the LLM-S screening in Table \ref{tab2}, logistic dominates LLM-S in all configurations of the portfolios that are formed.

As a second benchmark, we evaluate screening based on human analyst buy/sell recommendations that are used in conjunction with the quantitative agent, with results in Table \ref{tab4}. When relying on human analyst recommendations, only ten portfolios beat the market benchmark of 2.0857. But the best human-quant Sample Cov  portfolio has a SR of 3.5374, easily surpassing the S\&P 500 SR.

Comparing this directly with Table \ref{tab2} provides a clear assessment of humans versus LLM-S screening capabilities: across every single method/portfolio combination, the Sharpe ratios generated by human screening (Table \ref{tab4}) are larger than the ones generated by LLM-S (Table \ref{tab2}). This points towards the superiority of the human screening with the quantitative model for screening, compared to single-agent LLMs.

We can also answer whether a conventional screening method, together with a quantitative method, improves upon the baseline quantitative strategy (Table \ref{tab1}). Comparing the logistic screening approach (Table \ref{tab3}) with the baseline (Table \ref{tab1}), we see that it does: the best SR is 3.5374 in Table \ref{tab3} and the best SR is 3.0414 in Table \ref{tab1}. The robustness is more mixed: only 7 out of the 15 (no Sample Cov in baseline) method-portfolio combinations exhibit improved Sharpe ratios when logistic screening is applied and then  weighted with a quantitative method.
However, when comparing human analyst screening with the quantitative method (Table \ref{tab4}) to the baseline quantitative strategy, the results are more strongly in favor of adding human screening to the quant agent. Out of 15 portfolio combinations (no Sample Cov in baseline), human screening with the quantitative agent obtains 11 portfolios that are better than the counterparts in Table \ref{tab1}.

In summary, combining LLM-S based screening with quantitative strategies to form portfolios does not work well by itself, compared to both the baseline quantitative method and conventional screening methods (logistic or humans) coupled with quantitative methods. We hypothesize this is due to the short out-of-sample window: LLM-S is designed to capture long-term fundamentals and thus is not compatible with such a short evaluation window.

\subsubsection{Is FinBERT-Based Screening the Answer in the Short Run?}

Next, we investigate whether pairing sentiment analysis with the quantitative strategy outperforms the purely quantitative baseline.
Comparing Table \ref{tab1} with Table \ref{tab5} reveals that screening stocks with FinBERT increases Sharpe ratios in 10 out of 15 (no Sample cov in baseline) method-portfolio combinations.

We then assess which AI screening method is more effective: FinBERT or LLM-S? To answer that question, we compare Table \ref{tab2} with Table \ref{tab5}. The winner in Table \ref{tab5} (Sample Cov/MSR) yields a Sharpe ratio of 7.2046 via FinBERT screening, and the winner in Table \ref{tab2} (Sample Cov/MSR) achieves a Sharpe ratio of 1.1629
. This is a remarkable difference. 
While FinBERT achieves a higher maximum Sharpe ratio overall, also a direct comparison of each method-portfolio combination yields uniformly better  results in favor of FinBERT. It is clear that sentiment analysis in the short run plays a major role compared to more long-run oriented fundamental analysis.

Furthermore, we evaluate whether FinBERT-based screening is better than logistic regression-based screening. To that effect we compare Tables \ref{tab3} and \ref{tab5}. FinBERT dominates logistic screening in 15  method-portfolio combinations out of 18  possibilities. For instance, in logistic the best portfolio SR is the NLS/MSR portfolio and achieves a 3.3540 Sharpe ratio under logistic regression screening, but the best SR in FinBERT  is 7.2046  - a 99\% increase. Lastly, comparing human analyst screening (Table \ref{tab4}) versus FinBERT (Table \ref{tab5}) demonstrates that FinBERT  outperforms human judgment in 10 out of 18 combinations. But the difference between the top performers is stark: human analysts achieves a SR of 3.5374 (Sample Cov-MSR), but FinBERT achieves a SR (Sample Cov/MSR) of 7.2046. It is clear that the best method-objective combination of FinBERT is outperforming both logistic and human screeners by a large margin.

We also considered FinBERT with human screeners coupled with the quant agent in Table \ref{tab7}. Its top Sharpe ratio of 2.4582 in POET/GMV is much lower than the SR of FinBERT with the quant agent in Table \ref{tab5} (7.2046-Sample Cov/MSR). It is clear that in this short out-of-sample evaluation, sentiment analysis is the reason why FinBERT performed well.

\subsubsection{Can Hybrid Screening Help?}
In this part, we analyze a hybrid approach that integrates the LLM-S agent with human analyst recommendations. The results of this ensemble are presented in Table \ref{tab6}, and further details can be found in Section \ref{method}.
First, we evaluate this ensemble against the baseline quantitative method by comparing Tables \ref{tab1} and \ref{tab6}. The results are mixed, and there is no clear evidence that incorporating human judgment with LLM-S into the quantitative pipeline yields a performance advantage.

We then compare Table \ref{tab2} of only LLM-S screening with Table \ref{tab6} of human analyst plus LLM-S screening to see whether LLM-S  adds value to human screening.
The highest Sharpe ratio in Table \ref{tab6} is 2.7553 (NLS/MSR), which beats the 2.0857 market benchmark and the 1.1629 peak Sharpe ratio achieved by pure LLM-S in Table \ref{tab2}. The difference is very large, indicating that human judgment helps the LLM-S Sharpe ratio during this short out-of-sample time period. Furthermore, the results broadly hold (12 out of 18 combinations) across the method-portfolio comparisons. For instance, the deep learning/MSR portfolio achieves a Sharpe ratio of 0.6122 under LLM-S screening, but this increases to 2.0881 when human analysts are added, a dramatic increase.  We also compare Tables \ref{tab4} and \ref{tab6} to see whether LLM-S adds value to human screeners, but clearly this is not the case, since only 2 portfolio combination in Table \ref{tab6} beats its equivalent in Table \ref{tab4}. In the short-term, human screening with the quant agent works well and can do better than LLM-S.


\subsubsection{Multi-Agent AI versus Single-Agent Screening}

We will refer to the ensemble consisting of FinBERT and LLM-S as the Agentic AI system, as it relies on multiple specialized LLM-based agents coordinating with each other and executing different tasks.
We want to understand whether there is  value to Agentic AI-based screening over the alternative portfolio formations discussed so far. We first evaluate whether the Agentic AI architecture outperforms the baseline quantitative strategy. Then, can Agentic AI do better than standalone single-agent models: LLM-S only or FinBERT only? We finally analyze its performance against conventional screening such as logistic and human forecasters.

In this framework, the agent team operates collaboratively: FinBERT screens based on short-term sentiment analysis, while LLM-S screens based on a fundamentals-driven strategy. Their recommendations are combined together into a consensus signal, which is subsequently processed by the quantitative method to decide the portfolio weights. The results are presented in Table \ref{tab8}. The results are excellent when benchmarked against the broader market: the top Sharpe ratio (8.1612) exceeds the S\&P 500 SR (2.0857). 

We then benchmark the Agentic AI-based approach against the baseline quantitative strategy shown in Table \ref{tab1}. The performance difference is striking. With four exceptions, the Agentic AI system dominates the baseline across every method-portfolio combination. The differences between the two tables are large; for example, the deep learning/MV portfolio yields a Sharpe ratio of -0.4648. Introducing Agentic AI screening elevates the Sharpe ratio to 1.4812, a remarkable turnaround.

Beyond Sharpe ratios, the Agentic AI framework generates impressive absolute returns. The peak annualized return is 55.75\% in our out-of-sample period, produced by the Sample Cov/MSR portfolio in Table \ref{tab8}. Further, every method-objective combination in the baseline model generates lower returns compared to its direct counterpart in the Agentic AI ensemble. For instance, the Residual NW/GMV portfolio produces a 5.86\% annual return under the quantitative-only strategy, whereas the same method-portfolio combination produces a 24.58\% return annually with the Agentic AI framework—nearly a four-fold increase.

These returns/variance patterns illustrate a fundamental {\it predictability-diversification} tradeoff. The purely quantitative strategy operates on the entire universe of the S\&P 500 (no screening), which allows it to diversify away idiosyncratic risk and achieve the absolute lowest variances out of all tables and models, as seen in the 0.0025 variance by the Residual NW/GMV portfolio. In contrast, the Agentic AI framework acts as a high conviction screener, investing in an average of 20 stocks based on strong fundamentals and positive news sentiment. 
By holding such concentrated portfolios, the Agentic AI models naturally sacrifice 
some diversification benefits, resulting in higher variances. Notably, the lowest 
variance among all Agentic AI portfolios is $0.0042$, achieved by the NLS/MSR portfolio.   However, the high predictability and significant excess returns generated by the Agentic AI portfolios far outweigh the diversification penalty. For instance, while the highly diversified baseline Residual NW/MV portfolio minimizes variance (0.0025), it only yields an annualized return of 6.96\% and a corresponding Sharpe ratio of 1.3992. On the other hand, the Agentic AI Residual NW/MV portfolio results in a higher variance of 0.0201, but compensates with a 22.74\% annualized return, pushing its Sharpe ratio to an impressive 1.6028. All this demonstrates why the Agentic AI system delivers superior performance: the predictability that Agentic AI demonstrates outweighs the diversification benefits it sacrifices by taking a screened portfolio of stocks.

Next, we address whether a single agent by itself (FinBERT or LLM-S) can deliver better results than Agentic AI. To evaluate this, we compare the Agentic AI results (Table \ref{tab8}) against the FinBERT-only (Table \ref{tab5}) and LLM-S only (Table \ref{tab2}) frameworks. The evidence heavily favors Agentic AI. The FinBERT-only maximum Sharpe ratio is 7.2046 (Sample Cov/MSR), compared to the Agentic AI maximum of 8.1612 (NLS/MSR). Across all individual method-portfolio combination, the Agentic AI system significantly outperforms the FinBERT-only model, with only three exceptions, two of them  in the Deep learning/GMV-MSR portfolios. For instance, the Residual NW/MSR portfolio has a Sharpe ratio of 5.0523 in Table \ref{tab5}, but this increases to 5.7291 when using Agentic AI. Similarly, the highest Sharpe ratio achieved by LLM-S is 1.1629, compared to 8.1612 with Agentic AI. Across every method-portfolio combination, Agentic AI consistently dominates LLM-S. For example, LLM-S delivers a Sharpe ratio of 0.5962 in the NW/GMV portfolio, and this increases to 1.3801 when using Agentic AI. The increase is large and can make a large difference in practice.

The main reason why Agentic AI outperforms single agents is that it selects firms with desirable fundamentals \textit{and} news, with the intersection as an opportunity for the two separate agents to mitigate each other's errors. 
 In addition, we also see  evidence that dropping certain stocks from the portfolio can lead to a significant increase in the out-of-sample Sharpe ratio. This justifies our decision to screen for stocks first, and our Agentic AI model chooses those stocks with both promising long-term fundamentals and short-term news, making this a natural choice for a screening agent.

In summary, there is a substantial gain by using Agentic AI. It can easily surpass single agent systems, and even though we omitted a detailed comparison versus human analysts or logistic regression screening, the differences are large and can be seen by comparing Tables \ref{tab3} and \ref{tab4}  with Table \ref{tab8}.
The results can further be interpreted as the deviation from a benchmark with large number of assets. Our multi-agent AI ensemble selects a substantially lower number of stocks out of a large universe, and hence takes on a large tracking error, compared to the S\&P 500, to achieve larger returns and Sharpe ratios. Theoretical results on tracking error in portfolios have been recently established in high dimensions in \cite{cf2026}.

\subsubsection{Multi-Agent AI with Human Analyst Input}\label{aih}

Finally, we investigate whether integrating human judgment into the Agentic AI framework yields any benefits. To evaluate this, Table \ref{tab9} presents the results of an ensemble consisting of FinBERT, LLM-S, and human analysts working together.  We benchmark this against the pure Agentic AI system detailed in Table \ref{tab8}. It is clear that adding human judgment significantly degrades performance, reducing the maximum SR from 8.1612 (NLS/MSR) to 1.4193 (Sample Cov/MSR). Furthermore, in this three-agent ensemble, all resulting Sharpe ratios fall below the market benchmark, underscoring the detrimental effect of human judgment. This degradation is also consistently true when we compare each method-portfolio individually. To give a particularly stark example, the deep learning/MSR portfolio achieves a Sharpe ratio of 3.4411 when using Agentic AI, but this decreases to 1.0251 when human analysts are added. 
It is clear that while humans add value to the quant agent, but in an Agentic AI framework, the performance declines.

\begin{table}[h!]
    \centering
    {\bf BASELINE-ONLY WITH QUANTITATIVE WEIGHTING: Oct 2021- April 2024}\\
    \begin{tabular}{l|ccc|ccc|ccc}
\hline
 & \multicolumn{3}{c|}{\textbf{Sharpe Ratio}} & \multicolumn{3}{c|}{\textbf{Returns}} & \multicolumn{3}{c}{\textbf{Variance}} \\
\textbf{Method} & \textbf{GMV} & \textbf{MV} & \textbf{MSR} & \textbf{GMV} & \textbf{MV} & \textbf{MSR} & \textbf{GMV} & \textbf{MV} & \textbf{MSR} \\ \hline
NW & 0.2184 & 0.2169 & 0.2083 & 0.0340 & 0.0336 & 0.0325 & 0.0243 & 0.0241 & 0.0243 \\
Residual NW & 0.2625 & 0.2625 & 0.0401 & 0.0347 & 0.0330 & 0.0073 & 0.0175 & 0.0158 & 0.0329 \\
Deep learning & -0.1147 & -0.0873 & 0.0586 & -0.0245 & -0.0190 & 0.0180 & 0.0456 & 0.0474 & 0.0944 \\
POET & 0.1330 & 0.1620 & 0.1852 & 0.0172 & 0.0205 & 0.0228 & 0.0167 & 0.0161 & \textbf{0.0152} \\
NLS & 0.4979 & 0.4985 & \textbf{1.0088} & 0.0640 & 0.0635 &  \textbf{0.1801} & 0.0165 & 0.0162 & 0.0319 \\ 
\hline
\end{tabular}%
    \caption{Annualized Sharpe ratios, returns, and variance with different methods of estimating the precision matrix, with different objective functions, applied to all firms in the S\&P 500. GMV=Global minimum variance portfolio, MV=mean-variance portfolio with target returns as 1\% monthly, MSR=maximum Sharpe ratio portfolio.}\label{tab35-1}
\end{table}

\begin{table}[htbp]
    \centering
    {\bf LLM-S WITH QUANTITATIVE WEIGHTING: Oct 2021- April 2024}
    \begin{tabular}{l|ccc|ccc|ccc}
\hline
 & \multicolumn{3}{c|}{\textbf{Sharpe Ratio}} & \multicolumn{3}{c|}{\textbf{Returns}} & \multicolumn{3}{c}{\textbf{Variance}} \\
\textbf{Method} & \textbf{GMV} & \textbf{MV} & \textbf{MSR} & \textbf{GMV} & \textbf{MV} & \textbf{MSR} & \textbf{GMV} & \textbf{MV} & \textbf{MSR} \\ \hline
NW & 0.3177 & 0.0012 & 0.3549 & 0.0742 & 0.0003 & 0.0866 & 0.0546 & 0.0645 & 0.0595 \\
Residual NW & 0.5014 & 0.0400 & 0.5568 & 0.1118 & 0.0098 & \textbf{0.1619} & 0.0498 & 0.0597 & 0.0845 \\
Deep learning & 0.5026 & 0.1132 & \textbf{0.6108} & 0.1169 & 0.0294 & 0.1543 & 0.0541 & 0.0675 & 0.0638 \\
POET & 0.4125 & 0.2336 & 0.5520 &  0.1067 & 0.0628 & 0.1452 & 0.0668 & 0.0723 & 0.0691 \\
NLS & 0.5864 & 0.0964 & 0.5487 & 0.1261 & 0.0235 & 0.1606 & \textbf{0.0462} & 0.0591 & 0.0857 \\ 
Sample Cov & 0.6194 & 0.0960 & 0.5386 & 0.1338 & 0.0234 & 0.1655 & 0.0466 & 0.0596 & 0.0944 \\
\hline
\end{tabular}%
    \caption{Annualized Sharpe ratios, returns, and variance with different methods of estimating the precision matrix, with different objective functions, applied to firms that the LLM has screened. Portfolio abbreviations as in Table~\ref{tab35-1} and we also have Sample Cov which is inverting Sample Covariance to obtain precision matrix estimate when $\hat{p}\le n$.}\label{tab35-2}
\end{table}

\begin{table}[htbp]
    \centering
    {\bf LOGISTIC REGRESSION WITH QUANTITATIVE WEIGHTING: Oct 2021-April 2024}\\
    \begin{tabular}{l|ccc|ccc|ccc}
\hline
 & \multicolumn{3}{c|}{\textbf{Sharpe Ratio}} & \multicolumn{3}{c|}{\textbf{Returns}} & \multicolumn{3}{c}{\textbf{Variance}} \\
\textbf{Method} & \textbf{GMV} & \textbf{MV} & \textbf{MSR} & \textbf{GMV} & \textbf{MV} & \textbf{MSR} & \textbf{GMV} & \textbf{MV} & \textbf{MSR} \\ \hline
NW & 0.1747 & 0.1534 & 0.1682 & 0.0256 & 0.0224 & 0.0251 & 0.0214 & 0.0213 & 0.0223 \\
Residual NW & 0.4566 & 0.4129 & 0.1070 & 0.0577 & 0.0513 & 0.0200 & 0.0160 & \textbf{0.0155} & 0.0350 \\
Deep learning & 0.3254 & 0.2685 & 0.3476 & 0.0454 & 0.0376 & 0.0512 & 0.0195 & 0.0196 & 0.0217 \\
POET & 0.2185 & 0.2264 & 0.1634 & 0.0311 & 0.0316 & 0.0249 & 0.0202 & 0.0194 & 0.0232 \\
NLS & 0.6552 & 0.5474 & 0.8268 & 0.0824 & 0.0689 & 0.1562 & 0.0158 & 0.0158 & 0.0357 \\ 
Sample Cov & 0.8882 & 0.7285 & {\bf 1.0015} & 0.1145 & 0.0929 & {\bf 0.2283} & 0.0166 & 0.0163 & 0.0520 \\
\hline
\end{tabular}%
    \caption{Annualized Sharpe ratios, returns, and variance with different methods of estimating the precision matrix, with different objective functions, applied to firms that logistic regression has screened. Portfolio abbreviations as in Table~\ref{tab35-2}.}\label{tab35-3}
\end{table}

\begin{table}[htbp]
    \centering
    {\bf HUMAN ANALYSTS WITH QUANTITATIVE WEIGHTING: Oct 2021- April 2024}\\
    \begin{tabular}{l|ccc|ccc|ccc}
\hline
 & \multicolumn{3}{c|}{\textbf{Sharpe Ratio}} & \multicolumn{3}{c|}{\textbf{Returns}} & \multicolumn{3}{c}{\textbf{Variance}} \\
\textbf{Method} & \textbf{GMV} & \textbf{MV} & \textbf{MSR} & \textbf{GMV} & \textbf{MV} & \textbf{MSR} & \textbf{GMV} & \textbf{MV} & \textbf{MSR} \\ \hline
NW & 0.1359 & 0.0247 & 0.1229 & 0.0209 & 0.0037 & 0.0202 & 0.0237 & 0.0231 & 0.0270 \\
Residual NW & 0.0423 & -0.0685 & 0.1240 & 0.0055 & -0.0089 & 0.0260 & 0.0168 & 0.0169 & 0.0440 \\
Deep learning & 0.2070 & 0.0702 & 0.1844 & 0.0285 & 0.0096 & 0.0291 & 0.0190 & 0.0188 & 0.0250 \\
POET & 0.1903 & -0.0084 & \textbf{0.1910} & 0.0297 & -0.0013 & \textbf{0.0327} & 0.0243 & 0.0246 & 0.0294 \\
NLS & 0.1368 & -0.0061 & 0.0454 & 0.0165 & -0.0007 & 0.0109 & 0.0145 & \textbf{0.0144} & 0.0577 \\
Sample Cov & 0.1473 & -0.0147 & 0.0102 & 0.0185 & -0.0019 & 0.0029 & 0.0157 & 0.0160 & 0.0824 \\
\hline
\end{tabular}%
    \caption{Annualized Sharpe ratios, returns, and variance with different methods of estimating the precision matrix, with different objective functions, applied to firms that analysts have screened. Portfolio abbreviations as in Table~\ref{tab35-2}.}\label{tab35-4}
\end{table}

\begin{table}[htbp]
    \centering
    {\bf FINBERT WITH QUANTITATIVE WEIGHTING: Oct 2021-April 2024}\\
    \begin{tabular}{l|ccc|ccc|ccc}
\hline
 & \multicolumn{3}{c|}{\textbf{Sharpe Ratio}} & \multicolumn{3}{c|}{\textbf{Returns}} & \multicolumn{3}{c}{\textbf{Variance}} \\
\textbf{Method} & \textbf{GMV} & \textbf{MV} & \textbf{MSR} & \textbf{GMV} & \textbf{MV} & \textbf{MSR} & \textbf{GMV} & \textbf{MV} & \textbf{MSR} \\ \hline
NW & 0.3745 & 0.3615 & 0.4654 & 0.0727 & 0.0810 & 0.0867 & 0.0376 & 0.0503 & 0.0347 \\
Residual NW & 0.2890 & 0.4030 & \textbf{0.6937} & 0.0498 & 0.0913 & \textbf{0.1195} & 0.0297 & 0.0513 & 0.0297 \\
Deep learning & 0.4262 & 0.4190 & 0.6411 & 0.0777 & 0.0885 & 0.1115 & 0.0332 & 0.0446 & 0.0303 \\
POET & 0.4458 & 0.4343 & 0.5591 & 0.0905 & 0.0998 & 0.1093 & 0.0412 & 0.0528 & 0.0382 \\
NLS & 0.3011 & 0.4234 & 0.6415 & 0.0502 & 0.0902 & 0.1164 & \textbf{0.0278} & 0.0454 & 0.0329 \\ 
Sample Cov & 0.3407 & 0.4700 & 0.6891 & 0.0567 & 0.0993 & 0.1288 & 0.0277 & 0.0447 & 0.0350 \\
\hline
\end{tabular}%
    \caption{Annualized Sharpe ratios, returns, and variance with different methods of estimating the precision matrix, with different objective functions, applied to firms that FinBERT have screened. Portfolio abbreviations as in Table~\ref{tab35-2}.}\label{tab35-5}
\end{table}

\begin{table}[htbp]
    \centering
    {\bf LLM-S + HUMAN ANALYSTS WITH QUANTITATIVE WEIGHTING: Oct 2021-April 2024}\\
    \begin{tabular}{l|ccc|ccc|ccc}
\hline
 & \multicolumn{3}{c|}{\textbf{Sharpe Ratio}} & \multicolumn{3}{c|}{\textbf{Returns}} & \multicolumn{3}{c}{\textbf{Variance}} \\
\textbf{Method} & \textbf{GMV} & \textbf{MV} & \textbf{MSR} & \textbf{GMV} & \textbf{MV} & \textbf{MSR} & \textbf{GMV} & \textbf{MV} & \textbf{MSR} \\ \hline
NW & -0.0857 & 0.2094 & -0.0891 & -0.0161 & 0.0508 & -0.0218 & 0.0352 & 0.0587 & 0.0601 \\
Residual NW & -0.0323 & \textbf{0.3710} & 0.0270 & -0.0060 & \textbf{0.0890} & 0.0095 & 0.0346 & 0.0575 & 0.1242 \\
Deep learning & -0.0162 & 0.2625 & 0.0324 & -0.0030 & 0.0626 & 0.0082 & 0.0344 & 0.0568 & 0.0632 \\
POET & 0.2104 & 0.3526 & 0.2308 & 0.0382 & 0.0839 & 0.0461 & 0.0330 & 0.0566 & 0.0399 \\
NLS & -0.1105 & 0.3308 & 0.0329 & -0.0199 & 0.0762 & 0.0111 & \textbf{0.0323} & 0.0530 & 0.1135 \\ 
Sample Cov & -0.1101 & 0.3370 & 0.0350 & -0.0199 & 0.0778 & 0.0120 & 0.0327 & 0.0533 & 0.1181 \\
\hline
\end{tabular}%
    \caption{Annualized Sharpe ratios, returns, and variance with different methods of estimating the precision matrix, with different objective functions, applied to firms that LLM+analysts have screened, from I/B/E/S. Portfolio abbreviations as in Table~\ref{tab35-2}.}\label{tab35-6}
\end{table}

\begin{table}[htbp]
    \centering
    {\bf FINBERT + HUMAN ANALYSTS WITH QUANTITATIVE WEIGHTING: Oct 2021-April 2024}\\
    \begin{tabular}{l|ccc|ccc|ccc}
\hline
 & \multicolumn{3}{c|}{\textbf{Sharpe Ratio}} & \multicolumn{3}{c|}{\textbf{Returns}} & \multicolumn{3}{c}{\textbf{Variance}} \\
\textbf{Method} & \textbf{GMV} & \textbf{MV} & \textbf{MSR} & \textbf{GMV} & \textbf{MV} & \textbf{MSR} & \textbf{GMV} & \textbf{MV} & \textbf{MSR} \\ \hline
NW & -0.7262 & -0.7105 & -0.7036 & -0.1228 & -0.1597 & -0.1204 & 0.0286 & 0.0505 & 0.0293 \\
Residual NW & -0.7969 & -0.7343 & -0.8625 & -0.1291 & -0.1649 & -0.1650 & 0.0262 & 0.0504 & 0.0366 \\
Deep learning & -0.7069 & -0.6674 & -0.7215 & -0.1163 & -0.1500 & -0.1229 & 0.0271 & 0.0505 & 0.0290 \\
POET & -0.7250 & -0.7304 & -0.7197 & -0.1223 & -0.1624 & -0.1223 & 0.0285 & 0.0494 & 0.0289 \\
NLS & -0.5338 & -0.5130 & \textbf{-0.4832} & -0.0854 & -0.1152 & \textbf{-0.0840} & \textbf{0.0256} & 0.0504 & 0.0302 \\ 
Sample Cov & -0.6902 & -0.6567 & -0.8691 & -0.1109 & -0.1477 & -0.1783 & 0.0258 & 0.0506 & 0.0421 \\
\hline
\end{tabular}%
    \caption{Annualized Sharpe ratios, returns, and variance with different methods of estimating the precision matrix, with different objective functions, applied to firms that FinBERT+analysts have screened, from I/B/E/S. Portfolio abbreviations as in Table~\ref{tab35-2}.}\label{tab35-7}
\end{table}

\begin{table}[htbp]
    \centering
    {\bf LLM-S + FINBERT WITH QUANTITATIVE WEIGHTING: Oct 2021-April 2024}\\
    \begin{tabular}{l|ccc|ccc|ccc}
\hline
 & \multicolumn{3}{c|}{\textbf{Sharpe Ratio}} & \multicolumn{3}{c|}{\textbf{Returns}} & \multicolumn{3}{c}{\textbf{Variance}} \\
\textbf{Method} & \textbf{GMV} & \textbf{MV} & \textbf{MSR} & \textbf{GMV} & \textbf{MV} & \textbf{MSR} & \textbf{GMV} & \textbf{MV} & \textbf{MSR} \\ \hline
NW & 0.8794 & -0.2672 & 0.5380 & 0.3085 & -0.0712 & 0.1168 & 0.1230 & 0.0710 & 0.0472 \\
Residual NW & \textbf{1.0946} & -0.0282 & 0.7672 & \textbf{0.3731} & -0.0072 & 0.1839 & 0.1162 & 0.0648 & 0.0575 \\
Deep learning & 0.9858 & -0.1659 & 0.9342 & 0.3323 & -0.0424 & 0.1855 & 0.1136 & 0.0654 & \textbf{0.0394} \\
POET & 0.9376 & -0.1636 & 1.0090 & 0.3117 & -0.0436 & 0.2429 & 0.1105 & 0.0712 & 0.0579 \\
NLS & 1.0202 & -0.1199 & 0.6034 & 0.3486 & -0.0301 & 0.1557 & 0.1168 & 0.0629 & 0.0666 \\
Sample Cov & 1.0400 & -0.1049 & 0.5236 & 0.3561 & -0.0261 & 0.1384 & 0.1172 & 0.0620 & 0.0699 \\
\hline
\end{tabular}%
    \caption{Annualized Sharpe ratios, returns, and variance with different methods of estimating the precision matrix, with different objective functions, applied to firms that FinBERT+LLM has screened. Portfolio abbreviations as in Table~\ref{tab35-2}.}\label{tab35-8}
\end{table}

\begin{table}[htbp]
    \centering
    {\bf LLM-S + FINBERT + HUMAN ANALYSTS WITH QUANTITATIVE WEIGHTING: Oct 2021-April 2024}\\
    \begin{tabular}{l|ccc|ccc|ccc}
\hline
 & \multicolumn{3}{c|}{\textbf{Sharpe Ratio}} & \multicolumn{3}{c|}{\textbf{Returns}} & \multicolumn{3}{c}{\textbf{Variance}} \\
\textbf{Method} & \textbf{GMV} & \textbf{MV} & \textbf{MSR} & \textbf{GMV} & \textbf{MV} & \textbf{MSR} & \textbf{GMV} & \textbf{MV} & \textbf{MSR} \\ \hline
NW & -0.0486 & -0.2480 & -0.5475 & -0.0083 & -0.0533 & -0.1202 & 0.0295 & 0.0462 & 0.0482 \\
Residual NW & -0.2445 & -0.3935 & -0.6741 & -0.0410 & -0.0860 & -0.2345 & 0.0281 & 0.0478 & 0.1210 \\
Deep learning & -0.1105 & -0.2760 & -0.5915 & -0.0184 & -0.0593 & -0.1336 & 0.0276 & 0.0461 & 0.0510 \\
POET & \textbf{0.1117} & -0.1724 & -0.1623 & \textbf{0.0217} & -0.0392 & -0.0321 & 0.0377 & 0.0516 & 0.0390 \\
NLS & -0.2978 & -0.3918 & -0.6730 & -0.0480 & -0.0819 & -0.2274 & \textbf{0.0259} & 0.0437 & 0.1141 \\
Sample Cov & -0.3378 & -0.4296 & -0.7055 & -0.0535 & -0.0886 & -0.2435 & 0.0250 & 0.0426 & 0.1192 \\
\hline
\end{tabular}%
    \caption{Annualized Sharpe ratios, returns, and variance with different methods of estimating the precision matrix, with different objective functions, applied to firms that LLM+FinBERT+analysts have screened. Portfolio abbreviations as in Table~\ref{tab35-2}.}\label{tab35-9}
\end{table}

\section{Conclusion}\label{sec_conclusion}

This paper introduces a multi-agent Agentic AI platform for portfolio management. Our architecture coordinates three specialized
agents: an LLM-Strategy agent (LLM-S) that screens stocks annually on the basis of
fundamental firm characteristics, a FinBERT sentiment agent that screens monthly on
the basis of financial news, and a quantitative weighting method that applies
high-dimensional precision matrix estimation techniques to determine optimal portfolio
weights over the screened asset universe. We have shown that the full system, evaluated on the S\&P~500
over October 2023--April 2024, achieves a peak annualized Sharpe ratio of 8.1612, a substantial improvement over the market SR of 2.0857, while generating peak annualized returns of 52.63\%.

We have shown that performance gains depend
on the system's design. Adding human analyst recommendations to the AI ensemble
consistently degrades performance. This underscores that
Agentic AI is not simply a  powerful substitute for human judgment; rather, it is a qualitatively different conceptual framework, one that may avoid the behavioral and emotional
biases that systematically compromise human financial decision-making. 

Furthermore, we illustrate that the intersection-based decision rule, under which a stock is included in
the candidate portfolio only when both LLM-S and FinBERT agree, is responsible for
the performance gain. This demonstrates that the
hallucination-suppression role of multi-agent consensus  is a significant empirical driver of returns. The result also highlights the
economic mechanism at work: by requiring agreement between an agent that screens on
long-run fundamentals and one that screens on short-run sentiment, the system selects
firms that are simultaneously undervalued and positively perceived by the market,
combining two complementary sources of alpha. Through this economic mechanism, we show why a multi-agent system is superior to a single-agent approach.

From an econometric perspective, we make a contribution to the high-dimensional portfolio
literature. The number of assets in our portfolio is a random variable, realized
through the screening process before any weights are estimated. We have shown that under sensible screening, the squared Sharpe ratio of the screened 
portfolio consistently estimates its target, even under mild screening errors. The result holds for any precision matrix estimator
satisfying standard consistency conditions, and is therefore broadly applicable across the quantitative
portfolio formation literature.


\newpage

\setcounter{assumptionA}{0}
\setcounter{theorema}{0}
\setcounter{equation}{0}
\setcounter{lemma}{0}
\setcounter{table}{0}
\renewcommand{\theequation}{A.\arabic{equation}}
\renewcommand{\thelemma}{A.\arabic{lemma}}
\renewcommand{\thecorollary}{A.\arabic{corollary}}
\renewcommand{\thetable}{\thesection.\arabic{table}}
\renewcommand{\theassumptionA}{A.\arabic{assumptionA}}
\renewcommand{\thetheorema}{A.\arabic{theorema}}

\appendix
\section*{Appendix}
\addcontentsline{toc}{section}{Appendix}
\section{Proofs}\label{sec_A_proofs}

Investing is a two choice model, both the number of stocks as well as weights are key to this choice. When we screen the universe $p$ of stocks, our choice via screening will be a random variable $\hat{p}$. In other words,  
we estimate a bounded random variable $1\le \hat{p}\le p$ via screening.  One screening method we advocate here is Agentic AI. Screening amounts to choosing the names  of stocks, but as a simplification for our consistency analysis, this can be  done in a sensible way as defined in Section~\ref{subsec:sensible_screening}.

The aim of the screening is to select a certain stocks which may be optimal according to a strategy as in momentum based strategy or a metric like Sharpe Ratio. The optimal number of stocks in this strategy or a metric
is defined as, non-random, $p^*$, $1\le p^* \le p$.
If there is an optimal number (non-random) of stocks $p^*$,  we want to see that if our screening process is allowing mild mistakes, our estimated Sharpe Ratio with these screened stocks can achieve this target Sharpe Ratio of the optimal $p^*$ with probability approaching one.  This is a new idea to see whether a mildly successful screening can achieve Sharpe Ratio consistency. We use the following Assumption \ref{assum1} that allows mild mistake in selecting stocks.



.



\begin{assumptionA}\label{assum1}
With $\hat{p}, p^*$ growing with $n$
\[ \frac{|\hat{p} - p^*|}{p^*} = o_p (1)
.\]

    \end{assumptionA}
Since $\hat{p}, p^*$ are positive integers and grow with $n$, this assumption amounts to a slight mistake of a constant difference between 
$\hat{p},p^*$. Assumption \ref{assum1} is used instead of the restrictive
consistent integer estimation, which is  perfect selection of  the number of stocks in a portfolio
\[ \lim_{n \to \infty} P \left[ \left|\frac{\hat{p}}{p^*}=1\right| \right] =1.\]
Even though a mistake can be made in selecting the correct number of stocks (i.e. hence stock indexes), this may not affect downstream Sharpe 
Ratio estimation consistency. This is a novel result and a new proof. We show this in Lemma \ref{la1}-\ref{la2}, and the proof of Theorem~\ref{thm:sr_convergence} below.

Next we show that precision matrix, $\Gamma_{p^*}: p^* \times p^*$ can be estimated consistently by $\hat{\Gamma}_p^*: p^* \times p^*$ at a given $p^*$. Define the maximum row sum of a matrix $A: m \times n$ as 
$ \| A \|_{l_{\infty}}= \max_{1 \le i \le m} \sum_{j=1}^n |A_{i,j}|$,
where $A_{i,j}$ represents (i,j) th element of matrix A.

\begin{assumptionA}\label{assum2}
\[ \| \hat{\Gamma}_{p^*} - \Gamma_{p^*}\|_{l_{\infty}} =o_p(1).\]
    \end{assumptionA}

    For example both \cite{caner2019}, \cite{caner2022} prove Assumption \ref{assum2} under weaker assumptions. Spectral norm consistency such as in \cite{fan2013}, \cite{c-d2025} can also be used, and this will change the proofs but not the result of consistency we conjecture.

Next we analyze Global Minimum Variance portfolio and specifically its Sharpe Ratio (SR from now on). The aim of the portfolio is to minimize the variance of the portfolio of assets. We will show that if we do  screen stocks, the high-dimensional consistency of SR will  be obtained.  
Denote the precision matrix estimator after screening as 
$\hat{\Gamma}_{\hat{p}}: \hat{p} \times \hat{p}$, and the mean return estimator as $\hat{\mu}_{\hat{p}}:\hat{p} \times 1$. These estimators can come from any technique. 
Estimated Sharpe Ratio  of Global Minimum Variance Portfolio is 
    \[ \widehat{SR}_{\hat{p}}:= \sqrt{\hat{p}} \left( \frac{1_{\hat{p}}' \hat{\Gamma}_{\hat{p}} \hat{\mu}_{\hat{p}}}{\hat{p}}
\right)
\left( \frac{1_{\hat{p}}' \hat{\Gamma}_{\hat{p}} 1_{\hat{p}}}{\hat{p}}
\right)^{-1/2},
    \]
which is defined in (27) of \cite{caner2022} with non-random $p$ instead of $\hat{p}$. In \cite{caner2022}, as well as all the literature we know of, the estimated SR is using a fixed-non-random large universe of assets $p$. Here since there is a screening process before estimating SR, the dimension of the chosen portfolio is $\hat{p} \le p$. 

 Denote the target precision matrix   as 
$\Gamma_{p^*}: p^* \times p^*$, and the mean return is $\mu_{p^*}:p^* \times 1$. The target Sharpe Ratio is:
\[ SR_{p^*}:= \sqrt{p^*} \left( \frac{1_{p^*}' \Gamma_{p^*} \mu_{p^*}}{p^*}
\right)
\left( \frac{1_{p^*}' \Gamma_{p^*} 1_{p^*}}{p^*}
\right)^{-1/2},\]
which is defined in (26) of \cite{caner2022} but with $p$ the universe of assets, rather than target screened number of assets $p^*$. In \cite{caner2022}, as well as the other literature such as \cite{caner2019},\cite{c-d2025}, target SR is for $p$, universe of assets. Here we use the target number of screened stocks, $p^*$.

The main technical complication is even though screening process can allow mistakes, it is not clear a major financial metric such as Sharpe Ratio  can be consistently estimated in high dimensions.

In order to show the consistency of Sharpe Ratio estimate, we first need the following Lemma. It will show consistency of variance estimator for Global Minimum Variance portfolio. We have two possibilities, either random variable $\hat{p}$ is larger than or equal to target $p^*$, or smaller than equal to the target, $p^*$. For each possibility, we need a norm constraint on blocks of precision matrix estimates.

 First, it is possible to overshoot the target, implying  $\hat{p}\ge p^*$.
Decompose $\hat{\Gamma}_{\hat{p}}: \hat{p} \times \hat{p}$ into four blocks
\begin{equation}
 \hat{\Gamma}_{\hat{p}} = \left[ \begin{array}{cc}
    \hat{\Gamma}_{p^*} & \hat{\Gamma}_{p^*, 1} \\
    \hat{\Gamma}_{p^*,2} & \hat{\Gamma}_{p^*,3}
    \end{array}\right],\label{a0}
    \end{equation}
where $\hat{\Gamma}_{p^*}:p^* \times p^*$, $\hat{\Gamma}_{p^*,1}: p^* \times \hat{p}- p^*$, $\hat{\Gamma}_{p*,2}: \hat{p}-p^* \times p^* $, $\hat{\Gamma}_{p^*,3}:
\hat{p}- p^* \times \hat{p} - p^*$.

Other possibility is we may undershoot the target $\hat{p} \le p^*$,and for that we decompose the precision matrix estimator 
$\hat{\Gamma}_{p^*}: p^* \times p^*$ as 
\begin{equation} 
\hat{\Gamma}_{p^*} = \left[ \begin{array}{cc}
    \hat{\Gamma}_{\hat{p}} & \hat{\Gamma}_{\hat{p}, 1} \\
    \hat{\Gamma}_{\hat{p},2} & \hat{\Gamma}_{\hat{p},3}
    \end{array}\right].\label{a00}
    \end{equation}
Block dimensions are:
 $\hat{\Gamma}_{\hat{p}}: \hat{p} \times \hat{p}$,
$\hat{\Gamma}_{\hat{p},1}: \hat{p} \times (p^* - \hat{p})$, $\hat{\Gamma}_{\hat{p},2}: (p^* - \hat{p}) \times \hat{p}$, and 
$ \hat{\Gamma}_{\hat{p},3}: (p^* - \hat{p})\times (p^* - \hat{p})$.

Condition A.1 imposes that the blocks that emanate from over or undershooting in screening can have maximum row sums bounded by a constant. This is a reasonable condition imposed since either row or columns are small in number. However, we should note that either with high $\hat{p}$ or $p^*$, this constraint may become more strict. This condition can be relaxed to allow diverging sums if we know the rate of convergence of $\hat{p}$ to $p^*$ in consistency.\\

{\bf Condition A.1}.{\it 
When (i).$\hat{p} \ge p^*$ we need $\| \hat{\Gamma}_{p^*,m}\|_{l_{\infty}}\le C < \infty$, $m=1,2,3$.
When (ii). $\hat{p}\le  p^*$ we need 
$\| \hat{\Gamma}_{\hat{p},m}\|_{l_{\infty}}\le C < \infty$,$m=1,2,3$.
}\\

\noindent Note that Condition A.1 can be relaxed in the following way. For $\hat{p}\ge p^*$ we can have $\| \hat{\Gamma}_{p^*,m}\|_{l_{\infty}}\le C < \infty$ with probability approaching one, and same for (ii) too. The proofs will not change.

\noindent Now we introduce following Lemma \ref{la1}  that shows GMV variance consistency.

\begin{lemma}\label{la1}

   Under Assumptions \ref{assum1}-\ref{assum2} with Condition A.1
we have 
\[ \frac{| 1_{  \hat{p}}' \hat{\Gamma}_{\hat{p}} 1_{ \hat{p}} - 
1_{p^*}' \Gamma_{p^*} 1_{p^*}|}{p^*}=o_p(1).\]

\end{lemma}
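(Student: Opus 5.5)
The proof splits on the two screening regimes, $\hat{p}\ge p^*$ and $\hat{p}\le p^*$. Under sensible screening, the screened set either contains $\mathcal{S}^*$ or is contained in it, so after ordering indices the relevant estimated precision matrix can be partitioned as in (\ref{a0}) or (\ref{a00}). In each regime we write the difference as
\[
1_{\hat{p}}'\hat{\Gamma}_{\hat{p}}1_{\hat{p}} - 1_{p^*}'\Gamma_{p^*}1_{p^*}
= \bigl(1_{\hat{p}}'\hat{\Gamma}_{\hat{p}}1_{\hat{p}} - 1_{p^*}'\hat{\Gamma}_{p^*}1_{p^*}\bigr) + 1_{p^*}'(\hat{\Gamma}_{p^*}-\Gamma_{p^*})1_{p^*} ,
\]
that is, a screening error term plus an estimation error term. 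We then bound each term after dividing by $p^*$.

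\textbf{Estimation error term.} We use the elementary inequality $|x'Ay|\le \|x\|_{1}\|Ay\|_\infty \le \|x\|_1 \|A\|_{l_\infty}\|y\|_\infty$. With $x=y=1_{p^*}$ this gives
\[
|1_{p^*}'(\hat{\Gamma}_{p^*}-\Gamma_{p^*})1_{p^*}|/p^* \le \|\hat{\Gamma}_{p^*}-\Gamma_{p^*}\|_{l_\infty}=o_p(1)
\]
by Assumption \ref{assum2}. This step is routine.

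\textbf{Screening error term, case (i), $\hat{p}\ge p^*$.} Expanding with the block decomposition (\ref{a0}),
\[
1_{\hat{p}}'\hat{\Gamma}_{\hat{p}}1_{\hat{p}} - 1_{p^*}'\hat{\Gamma}_{p^*}1_{p^*} = 1_{p^*}'\hat{\Gamma}_{p^*,1}1_{\hat{p}-p^*} + 1_{\hat{p}-p^*}'\hat{\Gamma}_{p^*,2}1_{p^*} + 1_{\hat{p}-p^*}'\hat{\Gamma}_{p^*,3}1_{\hat{p}-p^*}.
\]
The bound to aim for uses the same $\ell_1$–$\ell_\infty$ inequality, placing the $\ell_1$ norm on whichever vector has length $\hat{p}-p^*$. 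We rewrite the first cross term as $1_{\hat{p}-p^*}'\hat{\Gamma}_{p^*,1}'1_{p^*}$. If only row sums are controlled, we instead use symmetry of the estimator, noting that $\hat{\Gamma}_{p^*,1}'=\hat{\Gamma}_{p^*,2}$; otherwise we symmetrize first, as is done in residual nodewise regression. Each of the three terms is then at most $(\hat{p}-p^*)\,C$ by Condition A.1. Dividing by $p^*$ yields $3C|\hat{p}-p^*|/p^*=o_p(1)$ by Assumption \ref{assum1}.

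\textbf{Screening error term, case (ii), $\hat{p}\le p^*$.} The roles are reversed via (\ref{a00}): $1_{p^*}'\hat{\Gamma}_{p^*}1_{p^*}$ equals $1_{\hat{p}}'\hat{\Gamma}_{\hat{p}}1_{\hat{p}}$ plus three block terms involving $\hat{\Gamma}_{\hat{p},m}$ and vectors of length $p^*-\hat{p}$. The same bound gives $3C(p^*-\hat{p})/p^*=o_p(1)$. Because both regimes are events depending on $\hat{p}$, we combine them by bounding the quantity on $\{\hat{p}\ge p^*\}$ and on its complement separately; each bound is $o_p(1)$, so the sum of the indicator-weighted bounds is $o_p(1)$.

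\textbf{Main obstacle.} The main difficulty is the cross block whose row dimension is $p^*$ rather than $\hat{p}-p^*$. A naive bound there gives $p^*C$, which does not vanish after dividing by $p^*$. To get the small factor, $|\hat{p}-p^*|$, we need either symmetry of $\hat{\Gamma}$ or the maximum-row-sum bound applied to the transpose. We would state this explicitly, and if necessary read Condition A.1 as also bounding the maximum column sums (i.e. the $\|\cdot\|_{l_1}$ norm) of the off-diagonal blocks. A second subtlety is the randomness of the block dimensions. Since Condition A.1 holds uniformly (or with probability approaching one), the bounds hold pathwise on those events, and no additional argument is needed.
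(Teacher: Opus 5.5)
Your proposal is correct and follows essentially the same route as the paper. In each regime you add and subtract $1_{p^*}'\hat{\Gamma}_{p^*}1_{p^*}$, bound the three block terms by $|\hat{p}-p^*|$ times the Condition A.1 constants using H\"older's inequality and maximum row sums, and control the estimation-error term by $p^*\|\hat{\Gamma}_{p^*}-\Gamma_{p^*}\|_{l_\infty}$ via Assumption~\ref{assum2}; you also rightly note that the cross blocks $\hat{\Gamma}_{p^*,1}$ and $\hat{\Gamma}_{\hat{p},1}$ need either symmetry of the estimator or a column-sum bound, which the paper's transposition steps use only implicitly (its bound for the $\hat{\Gamma}_{p^*,1}$ term ends with $\|\hat{\Gamma}_{p^*,2}\|_{l_\infty}$).
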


{\bf Proof of Lemma \ref{la1}}

We start the proof by the analysis of 
\begin{equation}
 | 1_{  \hat{p}}' \hat{\Gamma}_{\hat{p}} 1_{ \hat{p}} - 
1_{p^*}' \Gamma_{p^*} 1_{p^*}|.\label{pla1-1}
   \end{equation}.

   {\it Step 1.} To consider (\ref{pla1-1}) we start with $\hat{p}\ge p^*$ case. In that scenario we can decompose the vector of ones  and the precision matrix estimator as  follows

\begin{equation}
 1_{\hat{p}} = \left( \begin{array}{c}
1_{p^*} \\
1_{\hat{p} - p^*}
    \end{array}
    \right),\quad 
    \hat{\Gamma}_{\hat{p}} = \left[ \begin{array}{cc}
    \hat{\Gamma}_{p^*} & \hat{\Gamma}_{p^*, 1} \\
    \hat{\Gamma}_{p^*,2} & \hat{\Gamma}_{p^*,3}
    \end{array}\right]\label{pla1-2}
\end{equation}
where vector of ones is decomposed into $p^*, \hat{p}-p^*$, parts $1_{p^*}, 1_{\hat{p}-p^*}$ respectively. Similarly the precision matrix estimator is decomposed into four blocks, explained in (\ref{a0}).
Then rewrite (\ref{pla1-1}) by adding and subtracting $1_{p^*}' \hat{\Gamma}_{p^*} 1_{p^*}$
\begin{eqnarray}
 | 1_{  \hat{p}}' \hat{\Gamma}_{\hat{p}} 1_{ \hat{p}} - 
1_{p^*}' \Gamma_{p^*} 1_{p^*}|
& = & | 1_{\hat{p}}' \hat{\Gamma}_{\hat{p}} 1_{\hat{p}}
- 1_{p^*}' \hat{\Gamma}_{p^*} 1_{p^*} + 1_{p^*}' \hat{\Gamma}_{p^*} 1_{p^*}
-1_{p^*}' \Gamma_{p^*} 1_{p^*}| \nonumber \\
& \le &
| 1_{  \hat{p}}' \hat{\Gamma}_{\hat{p}} 1_{ \hat{p}} - 
1_{p^*}' \hat{\Gamma}_{p^*} 1_{p^*}| + 
| 1_{p^*}' ( \hat{\Gamma}_{p^*}- \Gamma_{p^*}) 1_{p^*}|.
\label{pla1-3}
   \end{eqnarray}
Consider the analysis of the first term on the right side of (\ref{pla1-3}), use (\ref{a0})(\ref{pla1-2}) and triangle inequality

\begin{eqnarray}
| 1_{  \hat{p}}' \hat{\Gamma}_{\hat{p}} 1_{ \hat{p}}& -& 
1_{p^*}' \hat{\Gamma}_{p^*} 1_{p^*}|
\le | 1_{\hat{p}-p^*}' \hat{\Gamma}_{p^*,2} 1_{p^*} | \nonumber \\
& + & | 1_{p^*}' \hat{\Gamma}_{p^*,1} 1_{\hat{p} - p^*}|
+ | 1_{\hat{p}- p^*}' \hat{\Gamma}_{p^*,3} 1_{ \hat{p} - p^*}|.\label{pla1-4}
\end{eqnarray}

Analyze each term in (\ref{pla1-4}). Start with the first right side term
\begin{eqnarray}
 |1_{\hat{p}-p^*}' \hat{\Gamma}_{p^*,2} 1_{p^*} | & \le & 
 \| 1_{\hat{p}- p^*}'\|_1 \| \hat{\Gamma}_{p^*,2} 
  1_{p^*} \|_{\infty} \nonumber \\
 & \le & \| 1_{\hat{p}- p^*} \|_{1} \| \hat{\Gamma}_{p^*,2}
\|_{l_{\infty}}  \| 1_{p^*} \|_{\infty} \nonumber \\
& = & | \hat{p} - p^*| \| \hat{\Gamma}_{p^*,2} \|_{l_{\infty}},\label{pla1-5}
\end{eqnarray}
where we use Holder's inequality for the first inequality, and for the second inequality we use p.345 of \cite{hj2013}. 

Start with the second right side term, by taking a transpose of that term first in absolute terms
\begin{eqnarray}
 |1_{\hat{p}-p^*}' \hat{\Gamma}_{p^*,1}' 1_{p^*} | & \le & 
 \| 1_{\hat{p}- p^*}' \hat{\Gamma}_{p^*,1}' \|_1 
 \| 1_{p^*} \|_{\infty} \nonumber \\
 & \le & \| 1_{\hat{p}- p^*} \|_{1} \| \hat{\Gamma}_{p^*,1}
\|_{l_{\infty}}  \nonumber \\
& \le & | \hat{p} - p^*| \| \hat{\Gamma}_{p^*,2} \|_{l_{\infty}},\label{pla1-5a}
\end{eqnarray}
where we use Holder's inequality for the first inequality, and for the second inequality we use p.345 of \cite{hj2013}, and $\| A' \|_{l_1}= \| A \|_{l_{\infty}}$, for a generic matrix A.

The analysis for the third term right side term in (\ref{pla1-4}) is very same as in the first right side term analysis in (\ref{pla1-5}) , so we 
have 
\begin{equation}
\frac{1}{p^*}| 1_{  \hat{p}}' \hat{\Gamma}_{\hat{p}} 1_{ \hat{p}} -
1_{p^*}' \hat{\Gamma}_{p^*} 1_{p^*}|=o_p(1),\label{pla1-6}
\end{equation}
given Assumption \ref{assum1}, and by Condition A.1.
 For the second term on the right side of (\ref{pla1-3}).
\begin{eqnarray}
 | 1_{p^*}' ( \hat{\Gamma}_{p^*}- \Gamma_{p^*}) 1_{p^*}|
 & \le & \| 1_{p^*}' (\hat{\Gamma}_{p^*} - \Gamma_{p^*}) \|_{{\infty}}
\| 1_{p^*} \|_1 \nonumber \\
& = & \| 1_{p^*} \|_{\infty} \| \hat{\Gamma}_{p^*}
- \Gamma_{p^*} \|_{l_{\infty}} p^*,\label{pla1-7}
\end{eqnarray}
where we use Holder's inequality for the first inequality, and for the second inequality we use p.345 of \cite{hj2013}.  Then use Assumption \ref{assum2} to have 
\begin{equation}
\left| \frac{1_{p^*}' (\hat{\Gamma}_{p^*} - \Gamma_{p^*}) 1_{p^*}}
{p^*}\right| = o_p (1).\label{pla1-8}
\end{equation}
Combine (\ref{pla1-6})(\ref{pla1-8}) in (\ref{pla1-3}) to have 
\begin{equation}
\left|\frac{1_{\hat{p}}' \hat{\Gamma}_{\hat{p}} 1_{\hat{p}} - 
1_{p^*}' \Gamma_{p^*} 1_{p^*}}{p^*}\right| = o_p (1).\label{pla1-9}
    \end{equation}

{\it Step 2}. Now we consider the opposite case, $\hat{p}\le p^*$.
The indicator can be decomposed as

\begin{equation}
 1_{p^*} = \left( \begin{array}{c}
1_{\hat{p}} \\
1_{p^{*} -\hat{p} }
    \end{array}
    \right),\label{pla1-10}
\end{equation}
Note that the vector ones decomposed into two subvectors of dimension $\hat{p}, p^* - \hat{p}$ respectively.

\noindent By adding and subtracting  $1_{p^*}' \hat{\Gamma}_{p^*} 1_{p^*}$, we begin with a triangle inequality
\begin{equation}
| 1_{p^*}' \Gamma_{p^{*}} 1_{p^* } - 1_{\hat{p}}' \hat{\Gamma}_{\hat{p}}
1_{ \hat{p}}| \le 
| 1_{p^*}' \Gamma_{p^*} 1_{p^*}- 1_{p^*}' \hat{\Gamma}_{p^*} 1_{p^*}|
+ | 1_{p^*}' \hat{\Gamma}_{p^*} 1_{p^*} - 1_{\hat{p}}' \hat{\Gamma}_{\hat{p}} 1_{\hat{p}}|.\label{pla1-11}
\end{equation}
We simplify the second term on the right side in (\ref{pla1-11}) using (\ref{a00})(\ref{pla1-10})
\begin{eqnarray}
| 1_{p^*}' \hat{\Gamma}_{p^*} 1_{p^*} - 1_{\hat{p}}' \hat{\Gamma}_{\hat{p}} 1_{\hat{p}}| &=& 
| 1_{\hat{p}}' \hat{\Gamma}_{\hat{p}} 1_{\hat{p}} + 
1_{p^* - \hat{p}}' \hat{\Gamma}_{\hat{p},2} 1_{\hat{p}} + 
1_{\hat{p}}' \hat{\Gamma}_{\hat{p},1} 1_{p^* - \hat{p}} + 
1_{p^* - \hat{p}}' \hat{\Gamma}_{\hat{p},3} 1_{ p^* - \hat{p}} -
1_{\hat{p}}' \hat{\Gamma}_{\hat{p}} 1_{\hat{p}}| \nonumber \\
& \le & |1_{p^* - \hat{p}}' \hat{\Gamma}_{\hat{p},2} 1_{\hat{p}}| + 
|1_{\hat{p}}' \hat{\Gamma}_{\hat{p},1} 1_{p^* - \hat{p}} |+ 
|1_{p^* - \hat{p}}' \hat{\Gamma}_{\hat{p},3} 1_{ p^* - \hat{p}}|.\label{pla1-12}
\end{eqnarray}
where the first and fifth elements cancel in the equality in (\ref{pla1-12}).

\noindent In (\ref{pla1-12}) consider the first right side term in the last inequality

\begin{eqnarray}
    |1_{p^* - \hat{p}}' \hat{\Gamma}_{\hat{p},2} 1_{\hat{p}}|
    & \le & \|1_{p^* - \hat{p}}\|_1 \| \hat{\Gamma}_{\hat{p},2}     
    1_{\hat{p}}\|_{\infty} \nonumber \\
    & \le & \|1_{p^* - \hat{p}}\|_1 \| \hat{\Gamma}_{\hat{p},2} \|_{l_{\infty}}   
    \nonumber \\
        & \le & (p^* - \hat{p}) \|\hat{\Gamma}_{\hat{p},2} \|_{l_{\infty}} ,\label{pla1-13a}
    \end{eqnarray}
    where  we use Holder's inequality for the first inequality, and p.345 of \cite{hj2013} for the second inequality.
     The second term on the right side of (\ref{pla1-12})  is transposed
\begin{eqnarray}
|  1_{p^* - \hat{p}}' \hat{\Gamma}_{\hat{p},1}'1_{\hat{p}}|
& \le & \|  1_{p^* - \hat{p}}' \hat{\Gamma}_{\hat{p},1}'\|_1
\|   1_{\hat{p}}\|_{\infty} \nonumber \\
& \le & \|  1_{p^* - \hat{p}}' \|_1 \|\hat{\Gamma}_{\hat{p},1}'\|_{l_1}
\nonumber \\
& = & (p^* - \hat{p})  \| \hat{\Gamma}_{\hat{p},1}\|_{l_{\infty}},\label{pla13-ab}
\end{eqnarray} 
where the first inequality is by Holder's inequality, and the second one is by p.345 of \cite{hj2013}, and the equality is by $\| A' \|_{l_1}= \| A \|_{l_{\infty}}$.
     The third right side term in the last inequality in (\ref{pla1-12}) is  handled in the same way as in (\ref{pla1-13a}). So via Assumption \ref{assum1}  with 
     the Condition A.1 about block matrices 
   $\|\hat{\Gamma}_{\hat{p},m} \|_{l_{\infty}} \le C < \infty$, $m=1,2,3$
the scaled second term on the right side of (\ref{pla1-11}) is
\begin{equation}
    \frac{ | 1_{p^*}' \hat{\Gamma}_{p^*} 1_{p^*} - 1_{\hat{p}}' \hat{\Gamma}_{\hat{p}} 1_{\hat{p}}|}{p^*} = o_p (1).\label{pla1-13}
\end{equation}

Next, use (\ref{pla1-8}) on the first right side term in (\ref{pla1-11})
\begin{equation}
    \frac{| 1_{p^*}' \Gamma_{p^*} 1_{p^*}- 1_{p^*}' \hat{\Gamma}_{p^*} 1_{p^*}|}{p^*} = o_p(1).\label{pla1-14}
\end{equation}
Combine the last two equations on the left side term of (\ref{pla1-11})
\begin{equation}
\frac{| 1_{p^*}' \Gamma_{p^{*}} 1_{p^* } - 1_{\hat{p}}' \hat{\Gamma}_{\hat{p}}
1_{ \hat{p}}|}{p^*} = o_p (1). \label{pla1-15}
\end{equation}

{\it Step 3}. Combine Steps 1-2 (\ref{pla1-9})(\ref{pla1-15}) to have the desired result.
{\bf Q.E.D.}

We provide the following assumptions for  the next lemma.

\begin{assumptionA}\label{assum3}

(i). \[ \max_{1 \le j \le p} | \mu_j | \le C < \infty.\]

(ii). \[ \max_{1 \le j \le p} \| \hat{\mu}_{\hat{p}} - \mu_p \|_{\infty}
    =o_p(1).\]
\end{assumptionA}

This assumption is for all the universe of assets $p$, which is non-random.
Note that Assumption \ref{assum3}(i) is standard in the literature and can be seen in \cite{caner2022}.  Assumption \ref{assum3}(ii) can be proven under weaker conditions and a rate of convergence can be obtained. This is shown in \cite{caner2022}, and also in this assumption we let $\hat{\mu}_p$ any consistent estimator of $\mu_p$. Next assumption is a norm bound on the target precision  matrix.

\begin{assumptionA}\label{assum4}

\[ \| \Gamma_{p^*}\|_{l_{\infty}} \le C < \infty.\]

\end{assumptionA}

Assumption \ref{assum4} puts a finite bound on maximum row sums of a $p^* \times p^*$ target precision matrix. In case of $p^*$ growing with $n$ assumption is restrictive. But in that scenario, for a specific estimator like residual nodewise regression of \cite{caner2022} the proof may  depend on the joint product of estimation of error mean estimator as in Assumption \ref{assum3}(ii) (but with $p^*$ dimension, $p^*\le p$) multiplied by the target precision matrix in Assumption \ref{assum4}. Then we can have weaker assumptions  that shows divergence of the sum in Assumption \ref{assum4}, but multiplied by the rate in Assumption \ref{assum3}(ii) can converge to zero in probability. In other words, it is possible to have 
\[ \max_{1 \le j \le p^*} \| \hat{\mu}_{p^*} - \mu_{p^*} \|_{\infty}
= o_p ( d_n), \quad d_n \to 0,\]
\[ \| \Gamma_{p^*} \|_{l_{\infty}} = O (r_n), \quad r_n \to \infty \]
but $d_n r_n \to 0$. For details, see Sharpe Ratio estimation for non-random, no screening GMV portfolio of \cite{caner2022} proof.
Since we have  a general estimation setup, we impose more strict assumptions. Specific estimators can relax these assumptions.

\begin{lemma}\label{la2}

Under Assumptions \ref{assum1}-\ref{assum4} with Condition A.1

\[ \frac{ \left|1_{\hat{p}}' \hat{\Gamma}_{\hat{p}} \hat{\mu}_{\hat{p}}
- 1_{p^*}' \Gamma_{p^*} \mu_{p^*}\right|}{p^*} =o_p(1).
\]
    
\end{lemma}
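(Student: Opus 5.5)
The plan mirrors the proof of Lemma \ref{la1}: split into the overshooting case $\hat{p}\ge p^*$ and the undershooting case $\hat{p}\le p^*$. In each case I add and subtract an intermediate quantity, so that one difference comes only from the dimension mismatch and the other only from estimation error at the fixed target dimension $p^*$.

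\emph{Overshooting case.} For $\hat{p}\ge p^*$, I partition $\hat{\mu}_{\hat{p}}=(\hat{\mu}_{p^*}',\hat{\mu}_{\hat{p}-p^*}')'$ together with $1_{\hat{p}}$ and the blocks in (\ref{a0}). This gives
\[
1_{\hat{p}}'\hat{\Gamma}_{\hat{p}}\hat{\mu}_{\hat{p}}-1_{p^*}'\hat{\Gamma}_{p^*}\hat{\mu}_{p^*}
=1_{p^*}'\hat{\Gamma}_{p^*,1}\hat{\mu}_{\hat{p}-p^*}+1_{\hat{p}-p^*}'\hat{\Gamma}_{p^*,2}\hat{\mu}_{p^*}+1_{\hat{p}-p^*}'\hat{\Gamma}_{p^*,3}\hat{\mu}_{\hat{p}-p^*}.
\]
Each term is bounded by Hölder's inequality and the row-sum norm inequality (p.345 of \cite{hj2013}), as in (\ref{pla1-5})--(\ref{pla1-5a}). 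The bound is $|\hat{p}-p^*|\,C\,\|\hat{\mu}_{\hat{p}}\|_\infty$; for the $\hat{\Gamma}_{p^*,1}$ term I transpose and use $\|A'\|_{l_1}=\|A\|_{l_\infty}$. Assumption \ref{assum3} gives $\|\hat{\mu}_{\hat{p}}\|_\infty\le \max_j|\mu_j|+\|\hat{\mu}_{\hat{p}}-\mu\|_\infty=O_p(1)$. Combined with Assumption \ref{assum1} and Condition A.1, the mismatch piece divided by $p^*$ is $o_p(1)$.

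The estimation piece at dimension $p^*$ is split as
\[
1_{p^*}'(\hat{\Gamma}_{p^*}-\Gamma_{p^*})\hat{\mu}_{p^*}+1_{p^*}'\Gamma_{p^*}(\hat{\mu}_{p^*}-\mu_{p^*}).
\]
The first term is at most $p^*\|\hat{\Gamma}_{p^*}-\Gamma_{p^*}\|_{l_\infty}\|\hat{\mu}_{p^*}\|_\infty$, which is $o_p(p^*)$ by Assumption \ref{assum2} and the $O_p(1)$ bound above. For the second term I write it as $(\hat{\mu}_{p^*}-\mu_{p^*})'\Gamma_{p^*}1_{p^*}$, using symmetry of $\Gamma_{p^*}$, and bound it by $p^*\|\hat{\mu}_{p^*}-\mu_{p^*}\|_\infty\|\Gamma_{p^*}\|_{l_\infty}$. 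That is $o_p(p^*)$ by Assumptions \ref{assum3}(ii) and \ref{assum4}.

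\emph{Undershooting case.} For $\hat{p}\le p^*$, I use the decomposition (\ref{a00}) and split $1_{p^*}$ and $\hat{\mu}_{p^*}$ into $\hat{p}$ and $p^*-\hat{p}$ parts. Here $\hat{\mu}_{p^*}$ is the estimator restricted to the optimal set, and under sensible screening its first $\hat{p}$ coordinates coincide with $\hat{\mu}_{\hat{p}}$. Then $1_{p^*}'\hat{\Gamma}_{p^*}\hat{\mu}_{p^*}-1_{\hat{p}}'\hat{\Gamma}_{\hat{p}}\hat{\mu}_{\hat{p}}$ reduces to three cross terms involving $\hat{\Gamma}_{\hat{p},m}$, $m=1,2,3$. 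Each is bounded by $(p^*-\hat{p})\,C\,O_p(1)$ exactly as in (\ref{pla1-13a})--(\ref{pla13-ab}). The estimation piece is identical to the overshooting case, and combining the two cases with the triangle inequality gives the result.

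The main obstacle is the bookkeeping that makes the block decompositions legitimate. One must align the coordinates of $\hat{\mu}_{\hat{p}}$ with those of $\hat{\mu}_{p^*}$, which relies on sensible screening and on a relabeling of assets. One also needs the sup-norm bound on $\hat{\mu}$ to hold uniformly over the random index set; Assumption \ref{assum3}(ii), being stated over the whole universe $p$, is what I would invoke to handle this.
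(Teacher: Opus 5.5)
Your proposal is essentially the paper's own proof of Lemma~\ref{la2}. It uses the same case split $\hat p\ge p^*$ versus $\hat p\le p^*$ and the same add-and-subtract of $1_{p^*}'\hat\Gamma_{p^*}\hat\mu_{p^*}$. It bounds the same three block cross terms via H\"older, Condition A.1 and the $O_p(1)$ sup-norm bound on $\hat\mu$ from Assumption~\ref{assum3}. It also splits the estimation piece the same way, into $1_{p^*}'(\hat\Gamma_{p^*}-\Gamma_{p^*})\hat\mu_{p^*}$ and $1_{p^*}'\Gamma_{p^*}(\hat\mu_{p^*}-\mu_{p^*})$.

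One caveat you share with the paper's write-up concerns the $\hat\Gamma_{p^*,1}$ term (resp.\ the $\hat\Gamma_{\hat p,1}$ term). Extracting the factor $\hat p-p^*$ (resp.\ $p^*-\hat p$) there actually requires a bound on the maximum \emph{column} sum of that block, i.e.\ on $\|\hat\Gamma_{p^*,1}'\|_{l_\infty}$. Condition A.1 controls row sums only, so it supplies this bound only if $\hat\Gamma$ is symmetric, so that $\hat\Gamma_{p^*,1}'=\hat\Gamma_{p^*,2}$, or if the condition is extended to column sums.
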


{\bf Proof of Lemma \ref{la2}.}

We start with the numerator
by adding and subtracting $1_{p^*}' \hat{\Gamma}_{p^*} \hat{\mu}_{p^*}$
and triangle inequality

\begin{equation}\label{pla2-1}
| 1_{\hat{p}}' \hat{\Gamma}_{\hat{p}} \hat{\mu}_{\hat{p}}
- 1_{p^*}' \Gamma_{p^*} \mu_{p^*}|
 \le  
| 1_{\hat{p}}' \hat{\Gamma}_{\hat{p}} \hat{\mu}_{\hat{p}} -
1_{p^*}' \hat{\Gamma}_{p^*} \hat{\mu}_{p^*} | 
+ | 1_{p^*}' \hat{\Gamma}_{p^*} \hat{\mu}_{p^*}
- 1_{p^*}' \Gamma_{p^*} \mu_{p^*}|.
\end{equation}

In the next two steps, first we consider $\hat{p} \ge p^*$, and then $\hat{p}\le p^*$.

{\it Step 1}. We start with case $\hat{p}\ge p^*$. 

We can decompose  the estimator of the mean vector 
\[
\hat{\mu}_{\hat{p}}= \left[  \begin{array}{c}
\hat{\mu}_{p^*} \\
\hat{\mu}_{\hat{p} - p^*}
\end{array}
\right],
\]
where $\hat{\mu}_{p^*}: p^* \times 1$, and $\hat{\mu}_{\hat{p} - p^*}: \hat{p} - p^* \times 1$. Now substitute this decomposition immediately above, and for $\hat{\Gamma}_{\hat{p}}$ in Step 1 in Lemma \ref{la1} proof, using (\ref{a0})(\ref{pla1-2})
\begin{equation}\label{pla2-2}
1_{\hat{p}}' \hat{\Gamma}_{\hat{p}} \hat{\mu}_{\hat{p}}
= 1_{p^*}' \hat{\Gamma}_{p^*} \hat{\mu}_{p^*} + 1_{\hat{p}-p^*}'
\hat{\Gamma}_{p^*,2} \hat{\mu}_{p^*} + 
1_{p^*}' \hat{\Gamma}_{p^*,1} \hat{\mu}_{\hat{p}-p^*} 
+ 1_{\hat{p}-p^*} \hat{\Gamma}_{p^*,3} \hat{\mu}_{\hat{p}-p^*}.
\end{equation}

\noindent Consider the first term on the right side of (\ref{pla2-1}), with (\ref{pla2-2}) and triangle inequality
\begin{eqnarray}
| 1_{\hat{p}}' \hat{\Gamma}_{\hat{p}} \hat{\mu}_{\hat{p}} -
1_{p^*}' \hat{\Gamma}_{p^*} \hat{\mu}_{p^*} | & \le & 
| 1_{\hat{p} - p^*}' \hat{\Gamma}_{p^*,2} \hat{\mu}_{p^*}| \nonumber \\
& + & |1_{p^*}' \hat{\Gamma}_{p^*,1} \hat{\mu}_{\hat{p}-p^*} |
+ | 1_{\hat{p} - p^*}' \hat{\Gamma}_{p^*,3} \hat{\mu}_{\hat{p}-p^*}|.\label{pla2-3}
\end{eqnarray}

\noindent Before the next proof, 
\begin{equation}
   \| \hat{\mu}_{p^*}\|_{\infty} \le \| \hat{\mu}_p \|_{\infty} 
   \le \| \hat{\mu}_{p} - \mu_{p} \|_{\infty}
   + \|\mu_{p} \|_{\infty} = o_p (1) + O (1),\label{pla2-4}
   \end{equation}
where we use $p^* \le p$ and Assumption \ref{assum3}.

Analyze the first term on the right side of (\ref{pla2-3})
\begin{eqnarray}
| 1_{\hat{p} - p^*}' \hat{\Gamma}_{p^*,2} \hat{\mu}_{p^*}| & \le & 
\| 1_{\hat{p} - p^*} \|_1 \|\hat{\Gamma}_{p^*,2} \hat{\mu}_{p^*}\|_{\infty}
\nonumber \\
& \le & ( \hat{p} - p^*) \|\hat{\Gamma}_{p^*,2} \|_{l_{\infty}}
\| \hat{\mu}_{p^*}\|_{\infty} \nonumber \\
& = & (\hat{p}- p^*) O(1) o_p (1),\label{pla2-5}
\end{eqnarray}
where we use Holder's inequality for the first inequality, and p.345 of \cite{hj2013} for the second inequality, and the rates come from 
Condition A.1, and (\ref{pla2-4}). Then  via Assumption \ref{assum1}
\begin{equation}\label{pla2-6}
\frac{| 1_{\hat{p} - p^*}' \hat{\Gamma}_{p^*,2} \hat{\mu}_{p^*}|}{p^*}=o_p(1).
\end{equation}

Next, we consider second term on the right side of (\ref{pla2-3})
\begin{eqnarray}
|1_{p^*}' \hat{\Gamma}_{p^*,1} \hat{\mu}_{\hat{p}-p^*} | &= &
| \hat{\mu}_{\hat{p}-p^*}'\hat{\Gamma}_{p^*,1}'1_{p^*}'| \nonumber \\
& \le & 
\| \hat{\mu}_{\hat{p}-p^*}'\hat{\Gamma}_{p^*,1}'\|_1 
\|1_{p^*} \|_{\infty} \nonumber \\
& \le & \| \hat{\mu}_{\hat{p}-p^*}\|_1 \| \hat{\Gamma}_{p^*,1}'\|_{l_1}
= \| \hat{\mu}_{\hat{p}-p^*}\|_1 \| \hat{\Gamma}_{p^*,1}\|_{l_{\infty}} \nonumber \\
& \le & (\hat{p} - p^*) [ \max_{1 \le j \le \hat{p} - p^*} | \hat{\mu}_j |
] \| \hat{\Gamma}_{p^*,1} \|_{l_{\infty}} \nonumber \\
& \le & (\hat{p} - p^*)
[\max_{1 \le j \le p} | \hat{\mu}_j|]
\| \hat{\Gamma}_{p^*,1} \|_{l_{\infty}} \nonumber \\
& = & (\hat{p} - p^*) O_p (1) O(1),\label{pla2-7}
\end{eqnarray}
where we use Holder's inequality for the first inequality, and p.345 of \cite{hj2013} for the second inequality, and for the second equality we use 
$\| A' \|_{l_1} = \| A \|_{l_{\infty}}$, and for the third inequality we use $l_1- l_{\infty}$ inequality ($\| v \|_1 \le dim (v) \| v \|_{\infty}$, with dim(v) showing the dimension of vector v), and the fourth inequality by $\hat{p} - p^* \le p$, and the rates are by Condition A.1 and (\ref{pla2-4}). By (\ref{pla2-7}) and Assumption \ref{assum1}

\begin{equation}
    \frac{|1_{p^*}' \hat{\Gamma}_{p^*,1} \hat{\mu}_{\hat{p}-p^*} |}{p^*}=o_p (1).\label{pla2-8}
    \end{equation}
Third term on the right side of (\ref{pla2-3}) is considered in the same way as in (\ref{pla2-8}), so combining (\ref{pla2-6})(\ref{pla2-8}) with third term analysis we have in left side term of (\ref{pla2-3})
\begin{equation}
\frac{| 1_{\hat{p}}' \hat{\Gamma}_{\hat{p}} \hat{\mu}_{\hat{p}} -
1_{p^*}' \hat{\Gamma}_{p^*} \hat{\mu}_{p^*} |}{p^*} = o_p (1).\label{pla2-9}
\end{equation}

Now consider the second term on the right side of (\ref{pla2-1}). By adding and subtracting  $1_{p^*}' \Gamma_{p^*} \hat{\mu}_{p^*}$ via triangle inequality

\begin{equation}
    | 1_{p^*}' \hat{\Gamma}_{p^*} \hat{\mu}_{p^*} - 
    1_{p^*}' \Gamma_{p^*} \mu_{p^*} | 
    \le 
    | 1_{p^*}' (\hat{\Gamma}_{p^*} - \Gamma_{p^*} ) \hat{\mu}_{p^*}|
+ | 1_{p^*}' \Gamma_{p^*} (\hat{\mu}_{p^*} - \mu_{p^*})|.\label{pla2-10}
    \end{equation}

    First, we consider the first right side term in (\ref{pla2-10})
    \begin{eqnarray}
| 1_{p^*}' (\hat{\Gamma}_{p^*} - \Gamma_{p^*} ) \hat{\mu}_{p^*}|
& = & | \hat{\mu}_{p^*}' ( \hat{\Gamma}_{p^*} - \Gamma_{p^*})' 1_{p^*}| \nonumber \\
& \le & \| \hat{\mu}_{p^*}' ( \hat{\Gamma}_{p^*} - \Gamma_{p^*})' \|_1
\| 1_{p^*} \|_{\infty} \nonumber \\
& \le & \| \hat{\mu}_{p^*} \|_1  \| ( \hat{\Gamma}_{p^*} - \Gamma_{p^*})' \|_{l_1} \nonumber \\
& \le  & 
p^* [  \max_{1 \le j \le p^*} | \hat{\mu}_j |
] \| \hat{\Gamma}_{p^*} - \Gamma_{p^*} \|_{l_{\infty}}  \nonumber \\
& = & p^* O_p (1) o_p (1), \label{pla2-11}
\end{eqnarray}
where we use Holder's inequality for the first inequality, and p.345 of \cite{hj2013}  for the second inequality, and for the third inequality we use $\| A' \|_{l_1}= \| A \|_{l_{\infty}}$, and $l_1 - l_{\infty}$ vector norm inequality, and for the rates we use the same analysis in (\ref{pla2-4}) with Assumption \ref{assum2}.

Next, consider the second term on the right side of (\ref{pla2-10})

\begin{eqnarray}
    | 1_{p^*}' \Gamma_{p^*} (\hat{\mu}_{p^*} - \mu_{p^*})|
    & \le & \| 1_{p^*}' \Gamma_{p^*} \|_1 \| \hat{\mu}_{p^*} - 
    \mu_{p^*} \|_{\infty} \nonumber \\
    & \le & \| 1_{p^*} \|_1 \| \Gamma_{p^*} \|_{l_1}
\| \hat{\mu}_{p^*} - \mu_{p^*} \|_{\infty} \nonumber \\
& = & p^* \| \Gamma_{p^*} \|_{l_{\infty}} 
\| \hat{\mu}_{p^*} - \mu_{p^*} \|_{\infty} \nonumber \\
& = & p^* O(1) o_p (1),\label{pla2-12}
\end{eqnarray}
where we use Holder's inequality for the first inequality, and p.345 of \cite{hj2013} for the second inequality,  and the equality is by symmetricity of $\Gamma$,
and the rate is by Assumptions \ref{assum3}-\ref{assum4} and $p^* \le p$.
 Combine (\ref{pla2-11})(\ref{pla2-12}) into (\ref{pla2-10}) 
\begin{equation}
\frac{ | 1_{p^*}' \hat{\Gamma}_{p^*} \hat{\mu}_{p^*}
- 1_{p^*}' \Gamma_{p^*} \mu_{p^*}|}{p^*} = o_p (1).\label{pla2-13}
\end{equation}

\noindent Combine (\ref{pla2-9})(\ref{pla2-13}) into (\ref{pla2-1})
to have 

\begin{equation}
    \frac{| 1_{\hat{p}}' \hat{\Gamma}_{\hat{p}} \hat{\mu}_{\hat{p}}
    - 1_{p^*}' \Gamma_{p^*} \mu_{p^*}|
    }{p^*} = o_p (1).\label{pla2-14}
\end{equation}

{\it Step 2}. 

Consider (\ref{pla2-1}) but the case of $p^*\ge \hat{p}$. In that scenario decompose the mean estimator
\[
\hat{\mu}_{p^*}:= \left[ \begin{array}{c}
\hat{\mu}_{\hat{p}} \\
\hat{\mu}_{p^* - \hat{p}}
\end{array}
\right],
\]
where $\hat{\mu}_{\hat{p}}: \hat{p} \times 1$ and $\hat{\mu}_{p^*- \hat{p}}:p^* - \hat{p} \times 1$ dimension. We see that  by (\ref{pla1-10})
\begin{eqnarray}
1_{p^*}' \hat{\Gamma}_{p^*} \hat{\mu}_{p^*} & = & 
(1_{\hat{p}}', 1_{p^* - \hat{p}}')
\left[ \begin{array}{cc}
\hat{\Gamma}_{\hat{p}}, \hat{\Gamma}_{\hat{p},1} \\
\hat{\Gamma}_{\hat{p},2}, \hat{\Gamma}_{\hat{p},3}
    \end{array}
\right] 
\left[ \begin{array}{c}
\hat{\mu}_{\hat{p}} \\
\hat{\mu}_{p^* - \hat{p}}
\end{array}
\right] \nonumber \\
& = & 1_{\hat{p}}' \hat{\Gamma}_{\hat{p}} \hat{\mu}_{\hat{p}} 
+ 1_{p^* - \hat{p}}' \hat{\Gamma}_{\hat{p},2} \hat{\mu}_{\hat{p}} \nonumber \\
& + & 
1_{\hat{p}}' \hat{\Gamma}_{\hat{p},1} \hat{\mu}_{p^* - \hat{p}} + 
1_{ p^* - \hat{p}}' \hat{\Gamma}_{\hat{p},3} \hat{\mu}_{p^* - \hat{p}},\label{pla2-15}
 \end{eqnarray}
where we use $\hat{\Gamma}_{p^*}$ decomposition before Condition A.1.
Then the first term on the right side of (\ref{pla2-1}) simplifies via (\ref{pla2-15})
\begin{eqnarray}
    | 1_{p^*}' \hat{\Gamma}_{p^*} \hat{\mu}_{p^*} - 1_{\hat{p}}'
    \hat{\Gamma}_{\hat{p}} \hat{\mu}_{\hat{p}}| 
    & = & | 1_{p^* - \hat{p}}' \hat{\Gamma}_{\hat{p},2} \hat{\mu}_{\hat{p}}
  +  1_{\hat{p}}' \hat{\Gamma}_{\hat{p},1} \hat{\mu}_{p^* - \hat{p}} + 
1_{ p^* - \hat{p}}' \hat{\Gamma}_{\hat{p},3} \hat{\mu}_{p^* - \hat{p}}|
\nonumber \\
& \le & 
| 1_{p^* - \hat{p}}' \hat{\Gamma}_{\hat{p},2} \hat{\mu}_{\hat{p}}
| +  |1_{\hat{p}}' \hat{\Gamma}_{\hat{p},1} \hat{\mu}_{p^* - \hat{p}}|
\nonumber \\
& + & |1_{ p^* - \hat{p}}' \hat{\Gamma}_{\hat{p},3} \hat{\mu}_{p^* - \hat{p}}|.\label{pla2-16}
\end{eqnarray}
See that 
\begin{equation}
   \| \hat{\mu}_{\hat{p}} \|_{\infty}= \max_{1 \le j \le \hat{p}} | \hat{\mu}_j | \le \max_{1 \le j \le p} |\hat{\mu}_j |
   \le \max_{1 \le j \le p} | \mu_j| + \max_{1 \le j \le p}
   | \hat{\mu}_j - \mu_j|
   =O_p(1),\label{pla2-17}
\end{equation}
by $\hat{p}\le p^*\le p$ and Assumption \ref{assum3} via triangle inequality in the last inequality.
Also we have by the same analysis in (\ref{pla2-17})
\begin{equation}
    \| \hat{\mu}_{p^* - \hat{p}} \|_{\infty}= \max_{1 \le j \le p^* - \hat{p}} | \hat{\mu}_j | = O_p (1).\label{pla2-18}
\end{equation}

\noindent Next, consider the first term on the the right side of (\ref{pla2-16}).
\begin{eqnarray}
| 1_{p^* - \hat{p}}' \hat{\Gamma}_{\hat{p},2} \hat{\mu}_{\hat{p}}
| &=& | \hat{\mu}_{\hat{p}}' \hat{\Gamma}_{\hat{p},2}' 1_{p^*- \hat{p}}|
\le \| \hat{\mu}_{\hat{p}} \|_{\infty} 
\| \hat{\Gamma}_{\hat{p},2}' 1_{p^*- \hat{p}} \|_{1} \nonumber \\
& \le & \| \hat{\mu}_{\hat{p}} \|_{\infty}
\| \hat{\Gamma}_{\hat{p},2}'\|_{l_1} 
\| 1_{p^*- \hat{p}} \|_{1}  \nonumber \\
& = & \| \hat{\mu}_{\hat{p}} \|_{\infty}
\| \hat{\Gamma}_{\hat{p},2}\|_{l_{\infty}}  (p^* - \hat{p})\nonumber \\
& = & O_p(1) O(1) (p^* - \hat{p}),\label{pla2-19}
\end{eqnarray}
where the first inequality is by Holder's inequality, and the second inequality is by p.345 of \cite{hj2013}, and the second equality is 
by $\| A' \|_{l_1} = \| A \|_{l_{\infty}}$ for a generic matrix A, and the rates are by (\ref{pla2-17}), and Condition A.1.

Now we consider the second term on the right side of (\ref{pla2-16})
\begin{eqnarray}
| 1_{\hat{p}}' \hat{\Gamma}_{\hat{p},1} \hat{\mu}_{p^* - \hat{p}}
| & \le & 
\| 1_{\hat{p}}' \hat{\Gamma}_{\hat{p},1} \hat{\mu}_{p^* - \hat{p}} \|_{\infty} \| \hat{\mu}_{p^* - \hat{p}}\|_1
\le (p^* - \hat{p})\| 1_{p^*} \|_{\infty} \| \hat{\Gamma}_{\hat{p},1} \|_{l_{\infty}} \| \hat{\mu}_{p^* - \hat{p}} \|_{\infty} \nonumber \\
& = & (p^* - \hat{p}) O(1) O_p (1),\label{pla2-20} 
\end{eqnarray}
where we use Holder's inequality for the first inequality, and p.345 of \cite{hj2013} and $l_1-l_{\infty}$ vector norm inequality for the second inequality, and the rates are from Condition A.1 and (\ref{pla2-18}). The third term on the right side of (\ref{pla2-16}) is handled in the same way as in (\ref{pla2-20}). Combine (\ref{pla2-19})(\ref{pla2-20}) into the left side  term in (\ref{pla2-16}) (which is also the first right side term in (\ref{pla2-1})
\begin{equation}
    \frac{| 1_{p^*}' \hat{\Gamma}_{p^*} \hat{\mu}_{p^*}
    - 1_{\hat{p}}' \hat{\Gamma}_{\hat{p}} \hat{\mu}_{\hat{p}}|}{p^*}=o_p (1),
\label{pla2-21}
\end{equation}
by Assumption \ref{assum1}. Now second right term in (\ref{pla2-1}) is handled in the same way as in Step 1 proof- (\ref{pla2-13}). So via (\ref{pla2-13})(\ref{pla2-21}) we have 
\begin{equation}
    \frac{|1_{\hat{p}}' \hat{\Gamma}_{\hat{p}} \hat{\mu}_{\hat{p}}
    - 1_{p^*}' \Gamma_{p^*} \mu_{p^*}|
    }{p^*} =o_p(1).\label{pla2-22}
\end{equation}

{\it Step 3}. Steps 1-2 both show the same result so Lemma \ref{la2} is proved.
{\bf Q.E.D}\\

We need the following assumption, with $Eigmin (A)$ representing the minimum eigenvalue  for a generic square matrix $A$, and $c>0$ is a positive constant.

\begin{assumptionA}\label{assum5}

(i). \[ \frac{|1_{p^*}' \Gamma_{p^*} \mu_{p^*}|}{p^*} \ge c  > 0\]

(ii). \[ Eigmin (\Gamma_{p^*}) \ge c >0\]
    
\end{assumptionA}

Both assumptions are used in the literature, for (i) see \cite{caner2022}, \cite{cf2026}, and for (ii), see \cite{fan2011}, \cite{caner2019}. If we consider first , the numerator in Assumption (i), this  is a sum of returns, $\mu_{p^*}$, scaled by variance, $\Gamma_{p^*}$ since $p^*$ may grow with $n$, this sum may grow with $p^*$, so we scale with $p^*$ as well. We want the scaled returns to be larger than a positive constant. Assumption (ii) can be proved  in factor models as in \cite{fan2011}, \cite{caner2022}, \cite{cf2026}.

{\bf Theorem A.1}\refstepcounter{theorema}\label{thm:sr_convergence}
{\it 

Under Assumptions \ref{assum1}-\ref{assum5} 
\[ \left| \frac{\widehat{SR}_{\hat{p}}^2}{SR_{p^*}^2} -1  
\right| = o_p (1).
\]
}
Remarks. 1. This is a new result in high-dimensional Sharpe Ratio estimation for a GMV portfolio. The number of stocks to be selected is treated as a random variable $\hat{p}$, and selected via a screening method. It is possible to have mild mistakes, $\hat{p} \neq p^*$,  in this portfolio selection. After that selection, we show that this Sharpe ratio via the screened assets in the portfolio can be consistent estimator for a target Sharpe Ratio with a mildly different selection of stocks $p^*$. The literature before can only show high-dimensional consistency of Sharpe Ratio with fixed number of stocks, no screening process is used, or screened stocks are treated as a nonrandom quantity. For this literature with  unconstrained an constrained  high-dimensional portfolios see \cite{caner2022}, \cite{cf2026}.

2. We still allow $p>n$ consistency for screened Sharpe Ratio estimation as long as the precision matrix-mean return are estimated  consistently in the case of  $p>n$ as in \cite{fan2011}, \cite{caner2019}
\cite{caner2022}.

{\bf Proof of Theorem~\ref{thm:sr_convergence}.} Note that by Assumption \ref{assum5} 
\begin{equation}\label{pla3-1}
1_{p^*}' \Gamma_{p^*} 1_p^* \ge \| 1_{p^*} \|_2^2 Eigmin (\Gamma_{p^*}) \ge c p^*.
    \end{equation}

    Then we try to simplify the notation in the proof by defining 
    $\hat{y}:= 1_{\hat{p}}' \hat{\Gamma}_{\hat{p}} \hat{\mu}_{\hat{p}}$, $y= 1_{p^*}' \Gamma_{p^*} \mu_{p^*}$, 
    $\hat{x}= 1_{\hat{p}}' \hat{\Gamma}_{\hat{p}} 1_{\hat{p}} $, $x:= 1_{p^*}' \Gamma_{p^*} 1_{p^*}$.

    Then squared Sharpe Ratio estimation can be written as 
    \begin{eqnarray}\label{pla3-2}
        \frac{ \widehat{SR_{\hat{p}}}^2}{ SR_{p^*}^2} - 1 & = &  \frac{ \hat{p} (\hat{y}^2/\hat{x})}{p^* (y^2/x)} - 1 \nonumber \\
        & = & \frac{\hat{p}}{p^*} \left(
     \frac{\hat{y}^2}{y^2}    
        \right) \left(\frac{x}{\hat{x}}\right) -1 
    \end{eqnarray}

    We can rewrite the right side of (\ref{pla3-2}) 
    \begin{eqnarray}\label{pla3-3}
         \frac{\widehat{SR}_{\hat{p}}^2}{SR_{p^*}^2} - 1 & = & \frac{\hat{p}}{p^*} 
        \left( \frac{\hat{y}^2 - y^2 + y^2}{y^2}
        \right) \left( \frac{x}{\hat{x} - x + x }
        \right) - 1 \nonumber \\
        & = & \frac{\hat{p}}{p^*} \left( \frac{\hat{y}^2 - y^2}{y^2}
 + 1        \right)  \left( \frac{x}{\hat{x} - x + x }
        \right) - 1    \end{eqnarray}

        We consider first 
        \begin{eqnarray}
            \left|\frac{\hat{y}^2 - y^2}{y^2} \right|& \le & \frac{ |\hat{y} - y | | \hat{y} + y| }{y^2} 
             \nonumber \\
             & \le & \left[ \left| \frac{ \hat{y} - y}{y} \right|
                         \right] 
\left[ \left| \frac{ \hat{y} - y}{y} \right| +2 \right] \nonumber \\
& = & 
\left[ \left| \frac{ \hat{y} - y}{y} \right|
                         \right]^2 + 2  \left[ \left| \frac{ \hat{y} - y}{y} \right|
                         \right]   \nonumber \\
                         & = & \left[ \frac{p^* o_p (1)}{c p^*} 
                         \right]^2 + 2 \left[\frac{p^* o_p (1)}{c p^*}\right] = o_p (1),\label{pla3-4}
        \end{eqnarray}
where for the rate we use Lemma \ref{la2} and Assumption \ref{assum5}(i) with $y, \hat{y}$ definitions.
Next, consider the following term above in (\ref{pla3-3}), with $x, \hat{x}$ definitions, 

\begin{equation}
    \frac{x}{\hat{x}-x+x}\le\left| \frac{x}{x - | \hat{x}- x|}\right|=\left| \frac{1}{1 - \frac{| \hat{x}- x|}{x}}\right|\le \frac{1}{1-o_p(1)} \le 1+o_p(1),\label{pla3-5}
\end{equation}
where the result is derived from (\ref{pla3-1}), Lemma \ref{la1}. Use (\ref{pla3-3})(\ref{pla3-4}) on the right side of (\ref{pla3-3})
to have 
\begin{eqnarray}
\left|   \frac{\widehat{SR}_{\hat{p}}^2}{SR_{p^*}^2} -1 
\right| &\le & \frac{\hat{p}}{p^*} (1 + o_p (1))(1 + o_p (1)) - 1 \nonumber \\
& \le & \left| \frac{\hat{p} - p^*+p^*}{p^*}
\right| ( 1+ o_p (1))(1 + o_p (1)) - 1 \nonumber \\
& = & (1+ o_p (1)) (1 + o_p (1))(1 + o_p (1)) - 1 = o_p (1).\label{pla3-6}
\end{eqnarray}
{\bf Q.E.D.}

\vspace{1in}

\section{Additional Details on Precision matrix estimation techniques}\label{sec:quant}

In this section, we describe the precision matrix estimation techniques we use in detail.

\subsection{Nodewise regression}

Nodewise regression, first introduced by \cite{mein2006} and applied to risk-estimation in portfolio settings by \cite{caner2019}, estimates the precision matrix directly via lasso type  penalized linear regressions. \cite{mein2006} starts with precision matrix formulation. In a high-dimensional world, to achieve consistency, by imposing sparsity through zero elements in rows of the precision matrix, they use a penalized estimation to achieve that sparsity in the estimators.
Following the framework in \cite{caner2019}, to obtain each row of the precision matrix estimate, we consider $p$ separate regressions.

Let $y_t$ be the $p \times 1$ vector of all asset excess returns at time $t$. Nodewise regression runs $p$ different Lasso regressions to obtain the precision matrix.  In this method, there is exact sparsity assumption on the rows of the precision matrix. This sparsity is defined by the maximum number of nonzero elements across rows of the precision matrix as $\bar{s}$, and this number should be smaller than $n$. Since there is sparsity on the precision matrix, the estimator should match that. One of the best ways to impose sparsity on estimators is through lasso, as suggested by \cite{mein2006}.

Let $y_{t,j}$ represent excess asset return of  asset $j$. Let $y_{t,-j}$ represent all asset returns (excess) apart from asset $j$, and this is a $p-1 \times 1$ vector.
For each asset $j=1,\cdots,p$, we model its returns $y_{t,j}$ related to  other assets' returns $y_{t,-j}$ through the following (assuming all returns are time-demeaned):
$$y_{t,j} = y_{t,-j}' \gamma_j + \eta_{t,j},$$
where $\gamma_j$ are the regression coefficients and $\eta_{t,j}$ is the error term.  Note that sparsity assumption on the elements of each row of precision matrix is imposed on slope $\gamma_j$ basically. This is clear  from Step 3 of our Algorithm below.
Hence, we will run a lasso regression and penalizing the elements of $\gamma_j$.

The estimation algorithm can be described as follows:
\begin{enumerate}
    \item \textbf{Lasso regression}. To account for high dimensions, the coefficients $\gamma_j$ above are estimated via Lasso regression; that is:
$$\hat{\gamma}_j = \mathrm{argmin}_\gamma \left[ \frac{||y_j - Y_{-j} \gamma||_2^2}{n} +2\lambda_j ||\gamma||_1  \right].$$

Here, $\hat{\gamma}_j$ is a vector of length $p-1$ that are estimates of $\gamma_j$, and $\lambda_j$ is a positive tuning parameter that determines the size of the $l_1$ penalty of the estimates.

\item \textbf{Parameter selection}. The tuning parameter $\lambda_j$ is chosen to minimize the GIC criterion across  a set of possible $\lambda_j$ choices-a grid search- (see \cite{fan2013tuning}), defined as $$GIC(\lambda_j) := \log(\hat{\sigma}_{\lambda_j}^2) + |\hat{S}_{\lambda_j}| \frac{\log p}{n} \log(\log n),$$
where $\hat{\sigma}_{\lambda_j}^2 = \frac{||y_j - Y_{-j} \hat{\gamma}_j||_2^2}{n}$ is the mean squared error of the Lasso regression and $|\hat{S}_{\lambda_j}|$ is the cardinality of the set of nonzero parameters in $\hat{\gamma}_j$ using $\lambda_j$. \cite{fan2013tuning} show that the GIC selects the true model with probability approaching 1 both when $p<n$ and $p \geq n$.  The optimal $\lambda_j$ is denoted as $\lambda_j^*$, and the optimal slope is $\hat{\gamma}_j (\lambda_j^*)$, which means the lasso regression which uses $\lambda_j^*$ among the grid search in Step 1 provides the optimal slope estimate. For ease of notation denote $\hat{\gamma}_j:= \hat{\gamma_j} (\lambda_j^*)$.

\item \textbf{Precision matrix construction}. Repeat steps 1 and 2 for all $j = 1, \dots, p$. As derived in \cite{caner2019}, then the precision matrix can be constructed directly by using matrix algebra and defining  the diagonal elements $\hat{\Gamma}_{j,j} = \hat{\tau}_{j}^{-2}$ and the vector of off-diagonal elements for $j-th$ row  $\hat{\Gamma}_{j,-j}=-\hat{\tau}_{j}^{-2} \hat{\gamma}_j'$, where $\hat{\tau}_j^2 = \frac{|| y_j - Y_{-j} \hat{\gamma}_j ||^2_2}{n} + \lambda_j || \hat{\gamma}_j ||_1$. We form each row $j$ of $\hat{\Gamma}$ by using the main diagonal term $\hat{\Gamma}_{j,j}$ and the off-diagonal term in row $j$ as $\hat{\Gamma}_{j,-j}$. Then stacking each row one upon other we form $\hat{\Gamma}$.
$\hat{\Gamma}$ is then the nodewise estimator of the precision matrix.
 
\end{enumerate}

Under certain assumptions (see \cite{caner2019} and \cite{chang2018confidence}), the nodewise regression estimate $\hat{\Gamma}$ consistently estimates the precision matrix even when $p>n$. 

\subsection{Residual nodewise regression}

Proposed by \cite{caner2022}, residual nodewise regression is an extension on nodewise regression by integrating factor models. Unlike nodewise regression, which produces a sparse precision matrix of returns, residual nodewise regression only assumes a sparse precision matrix of the idiosyncratic errors, allowing for a dense precision matrix of returns.

The asset returns are modeled by $y_{t,j} = b_j' f_t + u_{t,j}$, where $f_t: K \times 1$ are observable factors. In our empirical analysis, we use the standard Fama-French three factor model. But in theory we allow growing number of factors. Here, $b_j: K \times 1$ are the factor loadings for $K$ factors, and $u_{t,j}$ are unobserved errors. The precision matrix of the returns, $\Gamma = \Sigma^{-1}$, can be estimated by the following steps:

\begin{enumerate}
    \item \textbf{Factor Removal}. Firstly, we estimate the residuals $\hat{u}_{t,j}$ via OLS. That is, $$\hat{u}_{j} = y_j - X' \hat{b}_j,$$ where $X = (f_1, \dots, f_n): K \times n$ is the matrix of the factors and the factor loadings are $\hat{b}_j = (XX')^{-1} Xy_j$, the OLS estimator. $y_j: n \times 1$ vector of returns for asset j, $y_j:= (y_{1,j}, \cdots, y_{t,j}, \cdots, y_{n,j})'$.
    In matrix form, let us write the matrix of factor loadings as $\hat{B} = (YX') (XX')^{-1} $, where $Y: p \times n$ matrix, where each row represents the asset returns, and the columns represent time periods.

    \item \textbf{Nodewise on Residuals}. Next, we apply nodewise regression, described in Section \ref{nodewise}, on the residuals $\hat{u}$ to estimate $\hat{\Omega}$ which will be the estimate for $ \Sigma_u^{-1}$, the precision matrix of the errors. 

    \item \textbf{Reconstruction}. We reconstruct the final precision matrix via the Sherman-Morrison-Woodbury formula:
$$\hat{\Gamma} = \hat{\Omega} - \hat{\Omega} \hat{B} [\hat{\Sigma}_f^{-1} + \hat{B}' \hat{\Omega}_{sym} \hat{B}]^{-1} \hat{B}' \hat{\Omega},$$
where 
 $\hat{\Omega}_{sym} = \frac{\hat{\Omega} + \hat{\Omega}' }{2}$ is the symmetric version of $\hat{\Omega}$, $\hat{B}$ is the matrix of estimated factor loadings, and 
    $\hat{\Sigma}_f = \frac{1}{n} XX'- \frac{1}{n^2} X 1_n 1_n' X'$ is the sample covariance of the factors.

\end{enumerate}

Likewise, under certain assumptions, there is a consistency guarantee of the precision matrix of asset returns when $p>n$, established in \cite{caner2022}. However, the main difference is in the number of factors. Compared to nodewise regression, convergence is slower for factor models with many factors, and is the reason we choose a factor model with a small number of factors.

\subsection{Principal Orthogonal Complement Thresholding (POET)}
Proposed by \cite{fan2013}, the principal orthogonal complement thresholding method (POET) is a powerful method to estimate the covariance matrix for linear factor when variables share common factors but are unobservable. The asset returns are modeled by a linear factor model $y_{t,j} = b_j' f_t + u_{t,j}$, however, the factors  at time t, $f_t: K \times 1$ are unknown and need to be estimated, $b_j: K \times 1$ factor loadings for asset $j$.  They assume that covariance matrix of errors is sparse, see (2.2) of \cite{fan2013}.
POET first uses principal components analysis (PCA) to estimate the unobservable factors, then uses a thresholding method to estimate the covariance matrix of errors. The detailed steps can be describe as follows:

\begin{enumerate}
    \item \textbf{Factor Estimation}. 
First, estimate the unobserved $F: n \times K$ matrix of factors by 
principal components. To that effect, denote $Y:p \times n$ matrix of asset returns. The largest $K$ eigenvalues of $n \times n$ matrix $Y'Y$ is found and the K eigenvectors  corresponding to them designated as the columns of $\hat{F}/\sqrt{n}$. The estimate of factor loadings is $\hat{B}= n^{-1} Y \hat{F}: p \times K$.


\item \textbf{Number of Factors}. To estimate the number of unknown factors, use \citet{bai2002determining} formula, let $\hat{F}_k: n \times k$
matrix of estimated factors. Let $\| A\|_F$ be the  Frobenius norm of matrix A.
    \[\hat{K}=\mathrm{argmin}_{1\le k\le K_1}\left[  \log(\frac{1}{pn} \| Y- \frac{1}{n} Y \hat{F}_k \hat{F}_k^{'} \|_F^2)+\frac{k(p+n)}{pn}\log(\min\{p,n\}) \right],\]
    where $K_1$ is a constant upper bound. 
    
    \item \textbf{Residual Estimation}. Set the residuals $\hat{u}_{jt} = y_{t,j} - \hat{b}_j' \hat{f}_t$ where $\hat{f}_t: \hat{K} \times 1$ is the $t$ th row of $\hat{F}$, (in column vector form) and $\hat{b}_j'$ is the $j$ th row of $\hat{B}$. Then estimate the components of the sample covariance matrix of errors $\hat \Sigma_u=\left(\hat{\sigma}_{ij}\right)$ by
    $$\hat{\sigma}_{ij}=\frac{1}{n}\sum_{t=1}^n \hat{u}_{t,i}\hat{u}_{t,j}.$$
    
    \item \textbf{Thresholding}. For each component $\hat{\sigma}_{ij}$, compute a threshold
    \[ \hat{\tau}_{ij}=\frac{1}{2}\left(\frac{1}{\sqrt{p}}+\sqrt{\frac{\log(p)}{n}}\right)\sqrt{\frac{1}{n}\sum_{t=1}^n(\hat{u}_{it}\hat{u}_{jt}-\hat{\sigma}_{ij})^2}.\]
    Denote the thresholded covariance matrix as $\hat{\Sigma}_{u, Th}$. If $\hat{\sigma}_{ij}<\hat{\tau}_{ij}$, set the $(i,j)$-th component of $\hat \Sigma_u$ to be $0$, otherwise, keep $\hat{\sigma}_{ij}$. Its invertibility is proved in \cite{fan2013}.
    
    \item \textbf{Precision Matrix Construction}. We reconstruct the final precision matrix by: 
    \[ \hat \Gamma=\hat{\Sigma}_{u,Th}^{-1}- \hat{\Sigma}_{u,Th}^{-1} \hat B(I_{K}+\hat B' \hat{\Sigma}_{u,Th}^{-1}\hat B)^{-1}\hat B' 
    \hat{\Sigma}_{u,Th}^{-1}.\]
\end{enumerate}

POET is high-dimensional consistent under pervasive factor assumption, with $p>n$ case too.


\subsection{Deep learning}
When the asset returns have nonlinear relationships with factors, estimating the correlation between different assets becomes challenging. Inspired by \cite{farrell2021deep}, \cite{c-d2025} introduced a deep learning-based method to estimate the precision matrix for nonlinear factor models. 

The asset returns are modeled by $y_{t,j}=g_j(f_t)+u_{t,j}$, where $f_t$ is a $K$-dimension observable column vector and $g_j(\cdot)$ is an unknown function. Then the covariance matrix of asset returns  $\Sigma_y$ can be decomposed to $\Sigma_g+\Sigma_u$, where $\Sigma_g$ representing the covariance of unknown function $g_j(.)$, and $\Sigma_u$ represents the covariance matrix of errors. In order to estimate the precision matrix, \cite{c-d2025} use a multi-layer neural network to capture the nonlinear relationships.
\begin{enumerate}
    \item \textbf{Nonlinear Factor estimation}. To estimate $g_j(\cdot)$, fit $f_t$ and $y_{t,j}$ into a deep neural network. Then for the $j$-th asset and feature $f_t$, we are able to get the prediction $\hat g_j$. The estimation of  $\Sigma_g$ is
    $$\hat \Sigma_g=\frac{1}{n}\sum_{t=1}^n(\hat{g}_j(f_t)-\bar g_j(f_t))(\hat{g}_j(f_t)-\bar g_j(f_t))',$$
    where $\bar g_j(f_t)=\frac{1}{n}\sum_{t=1}^n\hat{g}_j(f_t)$.
    \item \textbf{Residual Estimation}. Set the deep learning residuals $\hat u_{t,j}=y_{t,j}-\hat g_j(f_t)$. Then estimate the components of the sample covariance matrix of errors $\hat \Sigma_u=\left(\hat{\sigma}_{ij}\right)$ by
    $$\hat \sigma_{ij}=\frac{1}{n}\sum_{t=1}^n \hat{u}_{t,i}\hat u_{t,j}$$
    \item \textbf{Thresholding}. For each component $\hat{\sigma}_{ij}$, compute a threshold
    $$\hat{\tau}_{ij}=Cr_n\sum_{t=1}^n|\hat{u}_{t,i}\hat{u}_{t,j}-\hat{\sigma}_{ij}|,$$
    where $C$ is a positive constant, $r_n^{1/2}=n^{-\beta/2(\beta+K)}(\log n)^4$ and $\beta$ is a smoothness parameter of $g_j(\cdot)$.
    Note that function estimation error by deep learning estimator is $r_n$.
    Denote the thresholded covariance matrix as $\hat{\Sigma}_{u, Th}$. If $\hat{\sigma}_{ij}<\hat{\tau}_{ij}$, set the $(i,j)$-th component of $\hat \Sigma_{u,Th}$ to be $0$, otherwise, keep $\hat{\sigma}_{ij}$.
    \item \textbf{Precision Matrix Construction}. We reconstruct the final precision matrix by: $$\hat \Gamma =\hat \Sigma_{u,Th}^{-1}-\hat \Sigma_{u,Th}^{-1}\hat \Sigma_g(I_{k}+\hat \Sigma_{u,Th}^{-1}\hat \Sigma_g)^{-1}\hat \Sigma_{u,Th}^{-1}.$$
\end{enumerate}

The thresholding method here is different from POET due to a deep learning-nonlinear model. Under several assumptions on $g_j(\cdot)$, the eigenvalues of $\Sigma_g$ and $\Sigma_u$, deep learning estimator of the precision matrix is consistent but only when the number of assets, $p$ is smaller than the time period, $n$.  

\subsection{Nonlinear Shrinkage (NLS)}
Introduced by \cite{lw2017} and advanced by \cite{ledoit2020analytical}, nonlinear shrinkage address the problem of sample covariance matrix instability in high dimensions by modifying its eigenvalues. Compared to earlier approaches, this analytical method derives a closed-form solution using the Hilbert transform. Let $Y: n \times p$ matrix of excess asset returns, where the rows represent time series and the columns represent various assets.

\begin{enumerate}

\item Start with the sample covariance $S:= Y' Y/n$. It admits a spectral decomposition $S:= U \Lambda U'$, where $U: p \times p$ orthogonal matrix, that has eigenvectors columns of $U$. Obtain $U$.  Let $\Lambda$ be the diagonal matrix of eigenvalues of $S$.

\item  We then provide Hilbert transform estimators based on Epanechnikov kernel that will be used in shrinkage function estimator.
There will be three cases of interest. First, when $p <n$, all $\lambda_j>0$, and when $p>n$ there are two possibilities, either 
$\lambda_j>0$, or $\lambda_j=0$, since $p>n$, sample covariance matrix is singular.

(i). For $\lambda_j>0$, $p< n$ case, with definition of adaptive bandwidth adjusted for sample eigenvalues, 
$h_{n,k}:= n^{-1/3} \lambda_k$, $k=1,\cdots,p$
\begin{eqnarray*}
\hat{H}_f (\lambda_j)& := & \frac{1}{p} \sum_{k=1}^p \{ \frac{-3 (\lambda_j - \lambda_k)}{10 \pi h_{n,k}^2} \\
& + & \frac{3}{(4 \sqrt{5}) \pi h_{n,k}} \left[ 1 - \frac{1}{5} \left( \frac{(\lambda_j- \lambda_k)}{h_{n,k}}
\right)^2 \right] log \left| \frac{\sqrt{5} h_{n,k} - \lambda_j + \lambda_k}{\sqrt{5} h_{n,k} + \lambda_j - \lambda_k}
\right|\}
\end{eqnarray*}
(ii). In the case of $\lambda_j>0, p>n$ case
\begin{eqnarray*}
\hat{H}_{\bar{f}} (\lambda_j)& := & \frac{1}{n} \sum_{k=p-n+1}^p \{ \frac{-3 (\lambda_j - \lambda_k)}{10 \pi h_{n,k}^2} \\
& + & \frac{3}{(4 \sqrt{5}) \pi h_{n,k}} \left[ 1 - \frac{1}{5} \left( \frac{(\lambda_j- \lambda_k)}{h_{n,k}}
\right)^2 \right] log \left| \frac{\sqrt{5} h_{n,k} - \lambda_j + \lambda_k}{\sqrt{5} h_{n,k} + \lambda_j - \lambda_k}
\right|\}
\end{eqnarray*}
(iii). In case of $\lambda_j=0, p>n$ case
\begin{eqnarray*}
\hat{H}_{\bar{f}} (0) & = & \left[ \frac{3}{10 h_n^2} + \frac{3}{4 \sqrt{5} h_n} \left( 1 - \frac{1}{5 h_n^2}\right) log \left( \frac{1 + \sqrt{5} h_n}{1 - \sqrt{5} h_n}\right)  
\right] \nonumber \\
& \times & \frac{1}{\pi n } \sum_{j=p-n+1}^p \frac{1}{\lambda_j}.
\end{eqnarray*}
Section 4.7 of \cite{ledoit2020analytical} uses $h_n:= n^{-1/3}$ as the bandwidth.

\item In the case of $\lambda_j>0$,  for the optimal shrinkage function estimation in the next step we need an estimate of spectral density function.
First for the case with $p<n$, the spectral density, we use Epanechnikov kernel
\[ \hat{f} (\lambda_j):= \frac{1}{p} \sum_{k=1}^p \frac{3}{4 \sqrt{5} h_{n,k}} 
\left[ 1 - \frac{1}{5} (\frac{\lambda_j - \lambda_k}{h_{n,k}})^2
\right]^+,\] 
with $[.]^+$ denoting the positive part. In the case of $p>n$, without losing any generality, if the first $n$ eigenvalues are zero as in equation (C.6) of Supplement of \cite{ledoit2020analytical}, we have the following estimate for the spectral density
\[ \hat{f} (\lambda_j):= \frac{1}{p} \sum_{k=p-n+1}^p \frac{3}{4 \sqrt{5} h_{n,k}} 
\left[ 1 - \frac{1}{5} (\frac{\lambda_j - \lambda_k}{h_{n,k}})^2
\right]^+,\]

\item  Then set the diagonal matrix $\hat{\Delta}^*$
\[ \hat{\Delta}^*:= diag (\hat{\phi}^* (\lambda_1), \cdots, \hat{\phi}^* (\lambda_p)),\]
where if $\lambda_j=0$ in case of  $ p>n$
\begin{equation}
\hat{\phi}^* (0) = \frac{1}{\pi (\hat{c}-1) \hat{H}_{\bar{f}} (0)},\label{osa1}
\end{equation}
otherwise (With $\lambda_j>0$ in case of  $p<n$, or $\lambda_j>0$ in case of  $p>n$)
\begin{equation}
\hat{\phi}^* (\lambda_j):= \frac{\lambda_j}{[ \pi  \hat{c}  \hat{f}(\lambda_j)]^2 + [ 1 - \hat{c} - \pi c \lambda_j \hat{H}_f (\lambda_j)]^2}.\label{osa2}
\end{equation}


\item  Form the optimal shrinkage estimator for the covariance matrix as
\[ \hat{\Sigma}:= U \hat{\Delta}^* U'.\]


\item  Invert $\hat{\Sigma}$ to get the precision matrix estimate $\hat{\Gamma}:= (\hat{\Sigma})^{-1}$.

\end{enumerate}

\section{Additional Material}\label{additional material}

The following subsections considers a few additional robustness checks:  a different strategy for communication between agents, short caps analysis, equal-weighted no quant strategies, LLM-S prompts and outputs, and a more traditional way of screening stocks and empirical results related to that.

To address for potential randomness in each LLM run, we have also reran the Agentic AI screen twice and evaluated the downstream portfolio Sharpe ratios in the medium term. The results were similar to our main tables.

\subsection{Consensus rules}\label{consensus rules}

As a robustness check, we modify our consensus rule for the periods between November 2023-April 2024 and October 2021-April 2024. Instead of pivoting to the union of LLM-S and FinBERT when their intersection is null, we go with FinBERT's recommendation, since it achieves a higher Sharpe ratio by itself then LLM-S in  most cases in Tables \ref{tab2} and \ref{tab5}. Tables \ref{tab:default finbert 4o} and \ref{tab:default finbert 3.5} present the results. The Sharpe ratios in Table \ref{tab:default finbert 4o} are higher than in our main results, showing the importance of FinBERT and sentiment analysis in the short term. However, when we consider the longer horizon, the winner in Table \ref{tab:default finbert 3.5} does not achieve as high of a SR as the main table in Table \ref{tab8} (with union), as there are 14 method-portfolio combinations that do better. 

Overall, the magnitude of the Sharpe ratios in both horizons are relatively close to the default union, showing robustness.

\begin{table}[htbp]
    \centering
    {\bf CHATGPT-4o: CONSENSUS RULES}\\
    \begin{tabular}{l|ccc|ccc|ccc}
\hline
 & \multicolumn{3}{c|}{\textbf{Sharpe Ratio}} & \multicolumn{3}{c|}{\textbf{Returns}} & \multicolumn{3}{c}{\textbf{Variance}} \\
\textbf{Method} & \textbf{GMV} & \textbf{MV} & \textbf{MSR} & \textbf{GMV} & \textbf{MV} & \textbf{MSR} & \textbf{GMV} & \textbf{MV} & \textbf{MSR} \\ \hline
NW  & 1.4787 & 1.0388 & 1.9917 & 0.2090 & 0.1640 & 0.2536 & 0.0200 & 0.0249 & 0.0162 \\
Residual NW  & 2.0444 & 1.5707 & 5.6356 & 0.2470 & 0.2173 & 0.4584 & 0.0146 & 0.0191 & 0.0066 \\
Deep learning & 2.3050 & 1.6008 & 4.0585 & 0.2738 & 0.2166 & 0.3790 & 0.0141 & 0.0183 & 0.0087 \\
POET & 1.9646 & 1.3365 & 2.7122 & 0.2727 & 0.2089 & 0.3339 & 0.0193 & 0.0244 & 0.0152 \\
NLS & 3.5733 & 2.2318 & \textbf{8.8279} & 0.3223 & 0.2605 & \textbf{0.5481} & 0.0081 & 0.0136 & \textbf{0.0039} \\ \hline
\end{tabular}%
    \caption{Annualized Sharpe ratios, returns, and variance with different methods of estimating the precision matrix, with different objective functions, applied to firms that FinBERT+LLM has screened. Here, the consensus rule is to use the intersection, and defaulting to FinBERT's recommendation in case of a null intersection. Portfolio abbreviations as in Table~\ref{tab1}.}
    \label{tab:default finbert 4o}
\end{table}

\begin{table}[htbp]
    \centering
    {\bf CHATGPT-3.5: CONSENSUS RULES}\\
    \begin{tabular}{l|ccc|ccc|ccc}
\hline
 & \multicolumn{3}{c|}{\textbf{Sharpe Ratio}} & \multicolumn{3}{c|}{\textbf{Returns}} & \multicolumn{3}{c}{\textbf{Variance}} \\
\textbf{Method} & \textbf{GMV} & \textbf{MV} & \textbf{MSR} & \textbf{GMV} & \textbf{MV} & \textbf{MSR} & \textbf{GMV} & \textbf{MV} & \textbf{MSR} \\ \hline
NW & 0.8456 & -0.3718 & 0.5557 & 0.2933 & -0.0966 & 0.1175 & 0.1203 & 0.0675 & 0.0447 \\
Residual NW & 0.8659 & -0.3873 & 0.6218 & 0.2965 & -0.0966 & 0.1397 & 0.1173 & 0.0622 & 0.0505 \\
Deep learning & 0.8988 & -0.3509 & 0.8751 & \textbf{0.3039} & -0.0888 & 0.1736 & 0.1143 & 0.0641 & \textbf{0.0393} \\
POET & 0.8839 & -0.3117 & \textbf{0.9960} & 0.2900 & -0.0810 & 0.2378 & 0.1077 & 0.0675 & 0.0570 \\
NLS & 0.8571 & -0.4136 & 0.4752 & 0.2971 & -0.1033 & 0.1184 & 0.1202 & 0.0623 & 0.0621 \\ \hline
\end{tabular}%
    \caption{Annualized Sharpe ratios, returns, and variance with different methods of estimating the precision matrix, with different objective functions, applied to firms that FinBERT+LLM has screened. Here, the consensus rule is to use the intersection, and defaulting to FinBERT's recommendation in case of a null intersection. Portfolio abbreviations as in Table~\ref{tab1}.}
    \label{tab:default finbert 3.5}
\end{table}

\subsection{Small Caps}
We have also considered the question of how small caps factor into our strategy and Sharpe ratio between November 2023-April 2024 out-of-sample period. We have also analyzed the bottom 100 stocks by market capitalization in our universe, and applied our buy-sell decisions corresponding to them only in our main table, which contains LLM-S, FinBERT and the quantitative estimation method with the Agentic AI structure. The best method-objective Sharpe ratios decrease.
With ChatGPT-4o, the best SR decreases from 8.1612 to 2.1316, and with ChatGPT-3.5, the best SR decreases from 1.0946 to 0.4722.

Our explanation for this behavior is that since we are now restricting to firms with the 100 smallest market capitalizations in the S\&P 500, this contradicts with most buy signals from LLM-S (since most buy signals invest in firms with high market caps, as illustrated in the example in Section~\ref{subsec:sensible_screening}), which may explain why the Sharpe ratio  decreases.

\begin{table}[htbp]
    \centering
    {\bf CHATGPT-4o: SMALL CAPS}\\
    \begin{tabular}{l|ccc|ccc|ccc}
\hline
 & \multicolumn{3}{c|}{\textbf{Sharpe Ratio}} & \multicolumn{3}{c|}{\textbf{Returns}} & \multicolumn{3}{c}{\textbf{Variance}} \\
\textbf{Method} & \textbf{GMV} & \textbf{MV} & \textbf{MSR} & \textbf{GMV} & \textbf{MV} & \textbf{MSR} & \textbf{GMV} & \textbf{MV} & \textbf{MSR} \\ \hline
NW & 2.0762 & 2.1057 & 2.0518 & 0.5356 & \textbf{0.5420} & 0.5167 & 0.0665 & 0.0662 & 0.0634 \\
Residual NW  & 2.1049 & 2.1213 & 2.0529 & 0.5223 & 0.5267 & 0.5103 & 0.0616 & 0.0616 & 0.0618 \\
Deep learning & 2.1000 & 2.1258 & 2.0795 & 0.5132 & 0.5222 & 0.4812 & 0.0597 & 0.0603 & \textbf{0.0535} \\
POET & 2.0732 & 2.1069 & 2.0674 & 0.5235 & 0.5408 & 0.4998 & 0.0638 & 0.0659 & 0.0585 \\
NLS & 2.1125 & \textbf{2.1316} & 2.0542 & 0.5110 & 0.5162 & 0.5069 & 0.0585 & 0.0587 & 0.0609 \\ \hline
\end{tabular}%
    \caption{Annualized Sharpe ratios, returns, and variance with different methods of estimating the precision matrix, with different objective functions, applied to small-cap firms that FinBERT+LLM has screened. Portfolio abbreviations as in Table~\ref{tab1}.}
    \label{tab:small cap finbert 4o}
\end{table}

\begin{table}[htbp]
    \centering
    {\bf CHATGPT-3.5: SMALL CAPS}\\
    \begin{tabular}{l|ccc|ccc|ccc}
\hline
 & \multicolumn{3}{c|}{\textbf{Sharpe Ratio}} & \multicolumn{3}{c|}{\textbf{Returns}} & \multicolumn{3}{c}{\textbf{Variance}} \\
\textbf{Method} & \textbf{GMV} & \textbf{MV} & \textbf{MSR} & \textbf{GMV} & \textbf{MV} & \textbf{MSR} & \textbf{GMV} & \textbf{MV} & \textbf{MSR} \\ \hline
NW & 0.4297 & 0.3943 & 0.0559 & 0.0798 & 0.1008 & 0.0153 & 0.0345 & 0.0653 & 0.0751 \\
Residual NW & 0.4015 & 0.4127 & 0.1322 & 0.0725 & \textbf{0.1067} & 0.0498 & \textbf{0.0327} & 0.0669 & 0.1421 \\
Deep learning & 0.4466 & 0.3997 & 0.1594 & 0.0842 & 0.1029 & 0.0491 & 0.0355 & 0.0662 & 0.0947 \\
POET & \textbf{0.4722} & 0.3915 & 0.3613 & 0.0928 & 0.1005 & 0.0798 & 0.0386 & 0.0659 & 0.0487 \\
NLS & 0.4047 & 0.4047 & 0.1197 & 0.0735 & 0.1041 & 0.0449 & 0.0330 & 0.0662 & 0.1408 \\ \hline
\end{tabular}%
    \caption{Annualized Sharpe ratios, returns, and variance with different methods of estimating the precision matrix, with different objective functions, applied to small-cap firms that FinBERT+LLM has screened. Portfolio abbreviations as in Table~\ref{tab1}.}
    \label{tab:small cap finbert 3.5}
\end{table}

\subsection{Equal-weighted Portfolios}

In this section, we compare our best method-objective models against the $1/N$ equal-weighted portfolios from November 2023-April 2024 and October 2021-April, respectively. Beating the $1/N$ baseline is non-trivial. An extensive body of literature \citep{demiguel2009optimal, duchin2009markowitz} demonstrate that complex models are often inferior to naive equal-weighting out-of-sample. We will demonstrate that our Agentic AI screening method achieves a higher Sharpe ratio than the $1/N$ portfolio within a $10\%$ statistically significant rate.

The equal-weighted S\&P 500 portfolio achieved a Sharpe ratio of 1.5147 for the former horizon and 0.1633 for the latter horizon, respectively. These are far below the best method-objective combinations that utilize Agentic AI screening. Moreover, 11 out of 15 method-objective combinations beat the $1/N$ baseline in the short-term, and 10 out of 15 method-objective combinations beat it in the medium-term. 

We can test statistical significance of our best Agentic-AI model against the $1/N$ portfolio due to a method by \citet{ledoit2008robust}, who evaluates the statistical significance of the Sharpe ratio of two paired time series. We compare the Agentic AI Residual NW/GMV portfolio against the equal-weighted S\&P 500 portfolio in October 2021-April 2024\footnote{The time horizon from November 2023-April 2024 was too short to draw statistically significance results.}. To isolate for the impact of the quantitative agent, we also test our Agentic AI portfolio against the equal-weighted Agentic AI-screened portfolio. The event that the Agentic AI portfolio achieves a higher Sharpe ratio than the equal-weighted S\&P 500 portfolio has a p-value of 0.067, and the event that the Agentic AI portfolio is better than the equal-weighted Agentic AI-screened portfolio has a p-value of 0.062, allowing us to conclude 7\% significance of our Shape ratios.



\vspace{1in}

\subsection{LLM-S Prompts and Outputs}

Below is a code snippet describing the LLM-S agent. This snippet is specifically for test dates in 2024.

\begin{lstlisting}[breaklines=true]
strategy_agent = Agent(
    role="Quantitative Strategy Developer",
    goal="Develop systematic BUY/HOLD/SELL rules based on firm characteristics that can be applied to all S&P 500 firms",
    backstory="""
    You are an expert quantitative strategist who creates systematic, rule-based trading strategies.
    
    CRITICAL DATA UNDERSTANDING:
    - 'mve' = log(market value of equity), represents log firm size
    - 'bm' = book-to-market ratio (value factor). Understand that a high book-to-market value means undervalued, and a low book-to-market value means overvalued.
    - 'mom12m' = 12-month momentum
    - ALL features are standardized: mean = 0, standard deviation = 1
    - Values are z-scores showing standard deviations from mean
    
    Your task is to develop EXPLICIT, SYSTEMATIC RULES for generating trading signals. Understand that when doing the following, you must use causal masking from December 2023 to prevent any look-ahead bias.
    
    1. EXPLORE THE DATA (December 2023):
       - Identify what constitutes "extreme" values for mve, bm, and mom12m
       - Look for natural clustering or breakpoints in the data
       - Consider correlations between characteristics
    
    2. DEVELOP CLEAR RULES based on economic intuition:
       - Keep in mind the market conditions at this date - this might influence the rules you choose.
       - The following are example questions you can consider, BUT THEY ARE NOT EXHAUSTIVE: 
           - Value stocks: Should low bm (cheap) be BUY or SELL?
           - Momentum: Should high mom12m (strong performance) be BUY or SELL?
           - Size: Should mve matter for the strategy?
           - Combinations: What about value + momentum + size together?
    
    3. DEFINE SPECIFIC THRESHOLDS:
     Your output must include exact rules. You can use test_complex_condition to test different combinations.
       Your output must include exact rules like:
       - "BUY if: bm < -0.71 AND mom12m > 0.85 AND mve > 0.23" or "BUY if: (bm > 0.57 AND mom12m < 0.82) OR mve > -0.88" or "BUY if: bm > 0.63 OR mve < 0.98"
       - "SELL if: bm > 1.25 OR mom12m < -0.98" or "SELL if: (bm < -0.84 OR mve < 0) AND mom12m < -0.97" or "SELL if: (bm > 0.94 AND mom12m < -1.06) OR mve <-0.68"
       - "HOLD: all other cases"
       - However, the above is only AN EXAMPLE - so do not simply copy the format above. You are free to include/exclude as many conditions in the if statements. You are also free to make the conditions as complicated or as simple as you like.
       - Be PRECISE in your thresholds, do not choose numbers that are nice or round - you are a quantative strategy developer.
    
    4. PROVIDE RATIONALE:
       - Why these thresholds?
       - What's the economic intuition?
       - What patterns did you observe in the data?
    
    CRITICAL REQUIREMENTS:
    - Rules must be DETERMINISTIC (same inputs -> same output)
    - Use ONLY z-score comparisons (>, <, AND, OR), but these may be impacted by market conditions.
    - Define thresholds for BUY, SELL, and HOLD
    - Rules should be implementable as: if (condition) then signal
    - Keep in mind that we will use the buy signals to construct a portfolio, so it is better to give too many signals, rather than too few signals.
    
    OUTPUT FORMAT:
    ===========================================
    SYSTEMATIC TRADING RULES
    ===========================================
    
    Data Exploration Summary:
    - [Key statistics and patterns observed]
    
    BUY RULE:
    if [ANY complex z-score condition using AND/OR/NOT]:
        signal = BUY
    
    Examples of valid BUY rules:
    - "bm < -1.15 AND mom12m > 0.73 AND mve < 1.11" (simple AND)
    - "bm < -1.56 OR mom12m > 1.28 OR mve > 1.52" (simple OR)
    - "(bm < -1.02 AND mom12m > 0.53) OR mve > 1.59" (combination)
    - "bm < -0.83 AND (mom12m > 0.77 OR mve > 1.08)" (nested conditions)
    
    SELL RULE:
    if [ANY complex z-score condition]:
        signal = SELL
    
    HOLD RULE:
    else:
        signal = HOLD
    
    Rationale:
    - [Economic reasoning for BUY rule]
    - [Economic reasoning for SELL rule]
    - [Expected signal distribution]
    
    ```
    ===========================================
    
    Be precise, systematic, and data-driven. Your rules will be applied to ~500 firms.
    """
)
\end{lstlisting}

Further, below is code snippet on the task description of our LLM-S model.

\begin{lstlisting}[breaklines=true]
strategy_task = Task(
    description="""
    Develop systematic BUY/HOLD/SELL rules for S&P 500 firms at December 2023.
    
    Available characteristics (all are z-scores):
    - mve: log firm size
    - bm: log book-to-market (value)
    - mom12m: 12-month momentum
    
    Your process:
    1. Get database schema and understand available data for December 2023
    2. Explore extreme values for each characteristic
    3. Look for patterns and correlations
    4. Use test_complex_condition to test different rule combinations
    5. Develop systematic rules with specific z-score thresholds

    You have COMPLETE FLEXIBILITY in creating rules. Test different combinations using AND, OR, NOT.
    
    CRITICAL: Your output must be EXPLICIT RULES with exact thresholds that can be 
    implemented in Python/pandas. You must give PRECISE THRESHOLDS - do not give thresholds that are only nice or round numbers. Further, use causal masking from December 2023 to prevent any look-ahead bias.
    
    Focus on:
    - Economic intuition (value, momentum, size effects)
    - Clear, implementable thresholds
    - Balance between signal strength and diversification
    - Rules that make sense for ~500 firms
    
    Output systematic rules that I can directly implement in code.
    """,
    expected_output="""
    Complete strategy document with:
    1. Data exploration summary
    2. Explicit BUY rule with z-score thresholds
    3. Explicit SELL rule with z-score thresholds  
    4. HOLD rule (default case)
    5. Economic rationale for each rule
    
    Rules must be deterministic and implementable.
    """,
    agent=strategy_agent,
    tools=[
        get_database_schema,
        query_firm_database,
        get_extreme_firms,
        test_complex_condition
    ]
)
\end{lstlisting}

Lastly, below is an example of the corresponding output of the LLM-S model. This particular year, the LLM likes buying undervalued firms with above-average size and non-small momentum values. Conversely, it likes selling firms that are overvalued, small firms, or negative-momentum firms (but if they have not shown significantly positive momentum). 

\begin{lstlisting}[language= , breaklines=true]
===========================================
SYSTEMATIC TRADING RULES
===========================================

Data Exploration Summary:
- The data includes standardized book-to-market (bm), momentum (mom12m), and market value of equity (mve) for S&P 500 firms as of December 2023.
- All features are standardized to have a mean of 0 and a standard deviation of 1.
- Extreme values were explored to understand the distribution of characteristics.

BUY RULE:
if bm > 0.95 AND mve > 0.3 AND mom12m > -0.5:
    signal = BUY

SELL RULE:
if (bm < -0.75 OR mom12m < -0.55 OR mve < -0.75) AND NOT (mom12m > 1.5):
    signal = SELL

HOLD RULE:
else:
    signal = HOLD

Rationale:
- BUY Rule: This rule targets undervalued (high bm), reasonably sized (mve > 0.3) companies with positive momentum (mom12m > -0.5). The economic intuition is to buy companies that are currently cheap but have shown some signs of recovery or positive market sentiment.
- SELL Rule: This rule aims to sell companies that are overvalued (low bm), have negative momentum (mom12m < -0.55), or are small in size (mve < -0.75). The `NOT (mom12m > 1.5)` condition prevents selling companies with extremely high momentum, even if they meet the other criteria, as these might be temporary situations or represent significant growth opportunities. The economic intuition is to avoid holding onto companies that are losing value or are too small to provide substantial returns, unless they are exhibiting exceptional positive momentum.
- Expected signal distribution: The BUY rule is expected to generate signals for a small percentage of firms (around 2%), focusing on higher-conviction value opportunities. The SELL rule is expected to affect a larger percentage of firms (around 50%), filtering out less desirable investments and managing risk. The remaining firms will be held.
\end{lstlisting}

\subsection{Novy-Marx Screening}

In this part, we analyze  \cite{nm2013} based screening effect on Sharpe Ratio analysis.
Table \ref{tab:novy marx + finbert 4o} contains Sharpe ratios, returns, and variances with the screening method described in \cite{nm2013}, with out-of-sample test periods from November 2023 to April 2024. Table \ref{tab:novy marx + finbert 3.5} uses out-of-sample test periods from October 2021 to April 2024. For Novy-Marx screening, out of 500 firms, we take the 150 stocks with the highest combined profitability and value ranks, and the 150 stocks with the lowest, to be the screened set of stocks each year.

To see the effect that LLM-S has in the screening ensemble containing LLM-S and FinBERT, we replace LLM-S instead with the method described in \cite{nm2013}. Notice that in the sample in Table \ref{tab:novy marx + finbert 4o}, the Sharpe ratio is lower than the one obtained by the LLM-S plus FinBERT ensemble when compared with  Table \ref{tab8}.
The same result holds true for the ChatGPT-3.5 sample in Table \ref{tab:novy marx + finbert 3.5} in comparison with  Table \ref{tab35-8}
as well. This suggests another contribution of our LLM-S screening model: that it has better synergy with FinBERT than with other screening methods. 

In short, the ensemble containing both LLM-S and FinBERT beat Novy-Marx in both windows, showing that FinBERT can also decrease mistakes made by LLM-S when acting as an ensemble.

\begin{table}[h!]
    \centering
    \begin{tabular}{l|ccc|ccc|ccc}
\hline
 & \multicolumn{3}{c|}{\textbf{Sharpe Ratio}} & \multicolumn{3}{c|}{\textbf{Returns}} & \multicolumn{3}{c}{\textbf{Variance}} \\
\textbf{Method} & \textbf{GMV} & \textbf{MV} & \textbf{MSR} & \textbf{GMV} & \textbf{MV} & \textbf{MSR} & \textbf{GMV} & \textbf{MV} & \textbf{MSR} \\ \hline
NW & 1.1247 & 0.6692 & 1.6025 & 0.1894 & 0.1131 & 0.2504 & 0.0283 & 0.0286 & 0.0244 \\
Residual NW & 1.2579 & 0.9681 & \textbf{4.5409} & 0.1956 & 0.1571 & \textbf{0.4691} & 0.0242 & 0.0263 & \textbf{0.0107} \\
Deep learning & 1.4625 & 1.0795 & 2.4936 & 0.2255 & 0.1684 & 0.3365 & 0.0238 & 0.0243 & 0.0182 \\
POET & 1.2240 & 0.7361 & 1.5136 & 0.1963 & 0.1164 & 0.2401 & 0.0257 & 0.0250 & 0.0252 \\
NLS & 1.5171 & 1.0169 & 4.2105 & 0.2121 & 0.1523 & 0.4430 & 0.0195 & 0.0224 & 0.0111 \\ \hline
\end{tabular}%
    \caption{Annualized Sharpe ratios with different methods of estimating the precision matrix, with different objective functions, applied to firms that Novy-Marx + FinBERT has screened. Test period is from Nov. 2023-April 2024. Portfolio abbreviations as in Table~\ref{tab1}.}
    \label{tab:novy marx + finbert 4o}
\end{table}

\begin{table}[h!]
    \centering
    \begin{tabular}{l|ccc|ccc|ccc}
\hline
 & \multicolumn{3}{c|}{\textbf{Sharpe Ratio}} & \multicolumn{3}{c|}{\textbf{Returns}} & \multicolumn{3}{c}{\textbf{Variance}} \\
\textbf{Method} & \textbf{GMV} & \textbf{MV} & \textbf{MSR} & \textbf{GMV} & \textbf{MV} & \textbf{MSR} & \textbf{GMV} & \textbf{MV} & \textbf{MSR} \\ \hline
NW & 0.0179 & 0.4580 & -0.1263 & 0.0041 & 0.1248 & -0.0261 & 0.0516 & 0.0742 & \textbf{0.0425} \\
Residual NW & -0.0178 & \textbf{0.5791} & -0.0255 & -0.0038 & \textbf{0.1689} & -0.0060 & 0.0454 & 0.0851 & 0.0551 \\
Deep learning & 0.0672 & 0.5276 & -0.0323 & 0.0149 & 0.1415 & -0.0068 & 0.0494 & 0.0719 & 0.0442 \\
POET & 0.0529 & 0.4201 & -0.0090 & 0.0125 & 0.1112 & -0.0019 & 0.0554 & 0.0700 & 0.0452 \\
NLS & -0.0089 & 0.5506 & -0.0461 & -0.0019 & 0.1532 & -0.0112 & 0.0445 & 0.0774 & 0.0588 \\ \hline
\end{tabular}%
    \caption{Annualized Sharpe ratios with different methods of estimating the precision matrix, with different objective functions, applied to firms that Novy-Marx + FinBERT has screened. Test period is from Oct. 2021-April 2024. Portfolio abbreviations as in Table~\ref{tab1}.}
    \label{tab:novy marx + finbert 3.5}
\end{table}

\clearpage


\clearpage

\subsection{ChatGPT-4-o}

As a robustness test, we address fully the potential problem of look-ahead bias using another model: ChatGPT-4-o, with a knowledge cutoff date of October 2023. When applied to test dates of October November 2023 to April 2024, we obtain the following tables. The S\&P 500 achieved an annualized Sharpe ratio of 2.0857 during this time period.

\begin{table}[h!]
    \centering
    {\bf BASELINE-ONLY WITH QUANTITATIVE WEIGHTING: Nov 2023-April 2024}\\
    \begin{tabular}{l|ccc|ccc|ccc}
\hline
 & \multicolumn{3}{c|}{\textbf{Sharpe Ratio}} & \multicolumn{3}{c|}{\textbf{Returns}} & \multicolumn{3}{c}{\textbf{Variance}} \\
\textbf{Method} & \textbf{GMV} & \textbf{MV} & \textbf{MSR} & \textbf{GMV} & \textbf{MV} & \textbf{MSR} & \textbf{GMV} & \textbf{MV} & \textbf{MSR} \\ \hline
NW & 1.5912 & 1.6171 & 1.6021 & 0.1557 & 0.1537 & 0.1633 & 0.0096 & 0.0090 & 0.0104 \\
Residual NW & 1.1037 & 1.3992 & 0.7649 & 0.0586 & 0.0696 & 0.1293 & 0.0028 & \textbf{0.0025} & 0.0286 \\
Deep learning & -0.4666 & -0.4648 & -0.2548 & -0.0704 & -0.0715 & -0.0689 & 0.0228 & 0.0237 & 0.0731 \\
POET & 1.4204 & 1.4359 & 1.3758 & 0.0885 & 0.0903 & 0.0891 & 0.0039 & 0.0040 & 0.0042 \\
NLS & 2.9373 & \textbf{3.0414} & 1.8596 & 0.2025 & 0.1929 & \textbf{0.3475} & 0.0048 & 0.0040 & 0.0349 \\ \hline
\end{tabular}%
    \caption{Annualized Sharpe ratios, returns, and variance with different methods of estimating the precision matrix, with different objective functions, applied to all firms in the S\&P 500. GMV=Global minimum variance portfolio, MV=mean-variance portfolio with target returns as 1\% monthly, MSR=maximum Sharpe ratio portfolio.}
    \label{tab1}
\end{table}

\begin{table}[htbp]
    \centering
    {\bf LLM-S WITH QUANTITATIVE WEIGHTING: Nov 2023-April 2024}
    \begin{tabular}{l|ccc|ccc|ccc}
\hline
 & \multicolumn{3}{c|}{\textbf{Sharpe Ratio}} & \multicolumn{3}{c|}{\textbf{Returns}} & \multicolumn{3}{c}{\textbf{Variance}} \\
\textbf{Method} & \textbf{GMV} & \textbf{MV} & \textbf{MSR} & \textbf{GMV} & \textbf{MV} & \textbf{MSR} & \textbf{GMV} & \textbf{MV} & \textbf{MSR} \\ \hline
NW & 0.5962 & 0.5941 & 0.5708 & 0.1354 & 0.1240 & 0.1388 & 0.0516 & 0.0546 & 0.0472 \\
Residual NW & 0.5155 & 0.5367 & 0.6531 & 0.1186 & 0.1226 & 0.1377 & 0.0529 & 0.0522 & 0.0445 \\
Deep learning & 0.5857 & 0.5618 & 0.6122 & 0.1338 & 0.1302 & 0.1320 & 0.0522 & 0.0537 & 0.0465 \\
POET & 0.3861 & 0.4986 & 0.3039 & 0.0829 & 0.1137 & 0.0625 & 0.0461 & 0.0520 & \textbf{0.0423} \\
NLS & 0.6394 & 0.6083 & 0.9585 & 0.1478 & 0.1407 & 0.2164 & 0.0534 & 0.0535 & 0.0510 \\ 
Sample Cov & 0.5122 & 0.4601 & {\bf 1.1629} & 0.1200 & 0.1075 & {\bf 0.2782} & 0.0549 & 0.0546 & 0.0572 \\
\hline
\end{tabular}%
    \caption{Annualized Sharpe ratios, returns, and variance with different methods of estimating the precision matrix, with different objective functions, applied to firms that the LLM has screened. Portfolio abbreviations as in Table~\ref{tab1}.}
    \label{tab2}
\end{table}

\begin{table}[htbp]
    \centering
    {\bf LOGISTIC REGRESSION WITH QUANTITATIVE WEIGHTING: Nov 2023-April 2024}\\
    \begin{tabular}{l|ccc|ccc|ccc}
\hline
 & \multicolumn{3}{c|}{\textbf{Sharpe Ratio}} & \multicolumn{3}{c|}{\textbf{Returns}} & \multicolumn{3}{c}{\textbf{Variance}} \\
\textbf{Method} & \textbf{GMV} & \textbf{MV} & \textbf{MSR} & \textbf{GMV} & \textbf{MV} & \textbf{MSR} & \textbf{GMV} & \textbf{MV} & \textbf{MSR} \\ \hline
NW & 1.2405 & 0.8946 & 1.5106 & 0.1154 & 0.0858 & 0.1452 & 0.0087 & 0.0092 & 0.0092 \\
Residual NW & 0.9554 & 1.0912 & 1.8282 & 0.0830 & 0.0937 & 0.2777 & 0.0075 & 0.0074 & 0.0231 \\
Deep learning & 1.7043 & 1.0285 & 2.7159 & 0.1421 & 0.0949 & 0.2393 & 0.0070 & 0.0085 & 0.0078 \\
POET & 1.7210 & 1.2661 & 2.2568 & 0.1237 & 0.0975 & 0.1727 & \textbf{0.0052} & 0.0059 & 0.0059 \\
NLS & 2.1484 & 1.3866 & \textbf{3.3540} & 0.2022 & 0.1425 & 0.5055 & 0.0089 & 0.0106 & 0.0227 \\
Sample Cov & 2.2564 & 1.5294 & 2.9575 & 0.2327 & 0.1470 & {\bf 0.6181 }& 0.0106 & 0.0092 & 0.0437 \\
\hline
\end{tabular}%
    \caption{Annualized Sharpe ratios, returns, and variance with different methods of estimating the precision matrix, with different objective functions, applied to firms that logistic regression has screened. Portfolio abbreviations as in Table~\ref{tab1}.}
    \label{tab3}
\end{table}

\begin{table}[htbp]
    \centering
    {\bf HUMAN ANALYSTS WITH QUANTITATIVE WEIGHTING: Nov 2023-April 2024}\\
    \begin{tabular}{l|ccc|ccc|ccc}
\hline
 & \multicolumn{3}{c|}{\textbf{Sharpe Ratio}} & \multicolumn{3}{c|}{\textbf{Returns}} & \multicolumn{3}{c}{\textbf{Variance}} \\
\textbf{Method} & \textbf{GMV} & \textbf{MV} & \textbf{MSR} & \textbf{GMV} & \textbf{MV} & \textbf{MSR} & \textbf{GMV} & \textbf{MV} & \textbf{MSR} \\ \hline
NW & 1.9050 & 1.3314 & 2.1859 & 0.1676 & 0.1174 & 0.2033 & 0.0077 & 0.0078 & 0.0086 \\
Residual NW & 1.5191 & 1.5564 & 2.3350 & 0.1143 & 0.1198 & 0.3817 & 0.0057 & 0.0059 & 0.0267 \\
Deep learning & 1.8167 & 1.3572 & 2.7316 & 0.1561 & 0.1152 & 0.2581 & 0.0074 & 0.0072 & 0.0089 \\
POET & 1.8452 & 1.3944 & 2.7487 & 0.1680 & 0.1278 & 0.2428 & 0.0083 & 0.0084 & 0.0078 \\
NLS & 2.4773 & 2.2076 & 3.1745 & 0.1948 & 0.1626 & 0.6060 & 0.0062 & \textbf{0.0054} & 0.0364 \\ 
Sample Cov & 2.4661 & 2.2742 & {\bf 3.5374} & 0.2057 & 0.1785 & {\bf 0.8802} & 0.0070 & 0.0062 & 0.0619 \\
\hline
\end{tabular}%
    \caption{Annualized Sharpe ratios, returns, and variance with different methods of estimating the precision matrix, with different objective functions, applied to firms that analysts have screened. Portfolio abbreviations as in Table~\ref{tab1}.}
    \label{tab4}
\end{table}

\begin{table}[htbp]
    \centering
    {\bf FINBERT WITH QUANTITATIVE WEIGHTING: Nov 2023-April 2024}\\
    \begin{tabular}{l|ccc|ccc|ccc}
\hline
 & \multicolumn{3}{c|}{\textbf{Sharpe Ratio}} & \multicolumn{3}{c|}{\textbf{Returns}} & \multicolumn{3}{c}{\textbf{Variance}} \\
\textbf{Method} & \textbf{GMV} & \textbf{MV} & \textbf{MSR} & \textbf{GMV} & \textbf{MV} & \textbf{MSR} & \textbf{GMV} & \textbf{MV} & \textbf{MSR} \\ \hline
NW & 1.2999 & 0.8588 & 1.7441 & 0.1842 & 0.1328 & 0.2284 & 0.0201 & 0.0239 & 0.0172 \\
Residual NW & 1.6709 & 1.2760 & 5.0523 & 0.1937 & 0.1692 & 0.3982 & 0.0134 & 0.0176 & 0.0062 \\
Deep learning & 2.0878 & 1.4275 & 3.6581 & 0.2400 & 0.1831 & 0.3448 & 0.0132 & 0.0164 & 0.0089 \\
POET & 1.7141 & 1.2507 & 2.3143 & 0.2304 & 0.1888 & 0.2758 & 0.0181 & 0.0228 & 0.0142 \\
NLS & 2.9408 & 1.8150 & 6.6874 & 0.2551 & 0.1991 & 0.4507 & 0.0075 & 0.0120 & \textbf{0.0045} \\ 
Sample Cov & 4.0391 & 2.4555 & {\bf 7.2046} & 0.2986 & 0.2443 & {\bf 0.4920} & 0.0055 & 0.0099 & 0.0047 \\
\hline
\end{tabular}%
    \caption{Annualized Sharpe ratios, returns, and variance with different methods of estimating the precision matrix, with different objective functions, applied to firms that FinBERT have screened. Portfolio abbreviations as in Table~\ref{tab1}.}
    \label{tab5}
\end{table}

\begin{table}[htbp]
    \centering
    {\bf LLM-S + HUMAN ANALYSTS WITH QUANTITATIVE WEIGHTING: Nov 2023-April 2024}\\
    \begin{tabular}{l|ccc|ccc|ccc}
\hline
 & \multicolumn{3}{c|}{\textbf{Sharpe Ratio}} & \multicolumn{3}{c|}{\textbf{Returns}} & \multicolumn{3}{c}{\textbf{Variance}} \\
\textbf{Method} & \textbf{GMV} & \textbf{MV} & \textbf{MSR} & \textbf{GMV} & \textbf{MV} & \textbf{MSR} & \textbf{GMV} & \textbf{MV} & \textbf{MSR} \\ \hline
NW & 1.1550 & 0.3502 & 1.6014 & 0.1046 & 0.0315 & 0.1630 & 0.0082 & 0.0081 & 0.0104 \\
Residual NW & 0.2685 & 0.2037 & 2.5673 & 0.0230 & 0.0193 & 0.3746 & 0.0073 & 0.0090 & 0.0213 \\
Deep learning & 1.0221 & 0.3292 & 2.0881 & 0.0936 & 0.0293 & 0.2240 & 0.0084 & 0.0079 & 0.0115 \\
POET & 1.2053 & 0.3307 & 1.7578 & 0.1028 & 0.0295 & 0.1629 & 0.0073 & 0.0080 & 0.0086 \\
NLS & 1.2285 & 0.6378 & \textbf{2.7553} & 0.0946 & 0.0513 & 0.5166 & {\bf 0.0059} & 0.0065 & 0.0352 \\
Sample Cov & 0.8955 & 0.3824 & 2.4230 & 0.0627 & 0.0263 & {\bf 0.7007} & 0.0049 & 0.0047 & 0.0836 \\
\hline
\end{tabular}%
    \caption{Annualized Sharpe ratios, returns, and variance with different methods of estimating the precision matrix, with different objective functions, applied to firms that LLM+analysts have screened, from I/B/E/S. Portfolio abbreviations as in Table~\ref{tab1}.}
    \label{tab6}
\end{table}

\begin{table}[htbp]
    \centering
    {\bf FINBERT + HUMAN ANALYSTS WITH QUANTITATIVE WEIGHTING: Nov 2023-April 2024}\\
    \begin{tabular}{l|ccc|ccc|ccc}
\hline
 & \multicolumn{3}{c|}{\textbf{Sharpe Ratio}} & \multicolumn{3}{c|}{\textbf{Returns}} & \multicolumn{3}{c}{\textbf{Variance}} \\
\textbf{Method} & \textbf{GMV} & \textbf{MV} & \textbf{MSR} & \textbf{GMV} & \textbf{MV} & \textbf{MSR} & \textbf{GMV} & \textbf{MV} & \textbf{MSR} \\ \hline
NW & 2.2250 & 1.5567 & 1.3170 & 0.1466 & 0.1187 & 0.1179 & 0.0043 & 0.0058 & 0.0080 \\
Residual NW & 1.8127 & 1.2533 & 1.0236 & 0.1394 & 0.1167 & 0.1130 & 0.0059 & 0.0087 & 0.0122 \\
Deep learning & 2.1149 & 1.4905 & 1.2579 & 0.1489 & 0.1241 & 0.1230 & 0.0050 & 0.0069 & 0.0096 \\
POET & \textbf{2.4582} & 1.5254 & 1.6655 & \textbf{0.1545} & 0.1234 & 0.1354 & \textbf{0.0040} & 0.0065 & 0.0066 \\
NLS & 1.9685 & 1.4311 & 1.2129 & 0.1439 & 0.1232 & 0.1296 & 0.0053 & 0.0074 & 0.0114 \\ 
Sample Cov & 1.8717 & 1.3487 & 1.1807 & 0.1401 & 0.1186 & 0.1275 & 0.0056 & 0.0077 & 0.0117 \\
\hline
\end{tabular}%
    \caption{Annualized Sharpe ratios, returns, and variance with different methods of estimating the precision matrix, with different objective functions, applied to firms that FinBERT+analysts have screened, from I/B/E/S. Portfolio abbreviations as in Table~\ref{tab1}.}
    \label{tab7}
\end{table}

\begin{table}[htbp]
    \centering
    {\bf LLM-S + FINBERT WITH QUANTITATIVE WEIGHTING: Nov 2023-April 2024}\\
    \begin{tabular}{l|ccc|ccc|ccc}
\hline
 & \multicolumn{3}{c|}{\textbf{Sharpe Ratio}} & \multicolumn{3}{c|}{\textbf{Returns}} & \multicolumn{3}{c}{\textbf{Variance}} \\
\textbf{Method} & \textbf{GMV} & \textbf{MV} & \textbf{MSR} & \textbf{GMV} & \textbf{MV} & \textbf{MSR} & \textbf{GMV} & \textbf{MV} & \textbf{MSR} \\ \hline
NW & 1.3801 & 0.9793 & 1.8579 & 0.2007 & 0.1564 & 0.2443 & 0.0211 & 0.0255 & 0.0173 \\
Residual NW & 1.8382 & 1.6028 & 5.7291 & 0.2458 & 0.2274 & 0.4604 & 0.0179 & 0.0201 & 0.0065 \\
Deep learning & 2.0552 & 1.4812 & 3.4411 & 0.2648 & 0.2119 & 0.3631 & 0.0166 & 0.0205 & 0.0111 \\
POET & 1.7976 & 1.3027 & 2.5453 & 0.2616 & 0.2047 & 0.3265 & 0.0212 & 0.0247 & 0.0165 \\
NLS & 3.0202 & 2.0658 & \textbf{8.1612} & 0.3057 & 0.2516 & 0.5263 & 0.0102 & 0.0148 & \textbf{0.0042} \\ 
Sample Cov & 3.9676 & 2.6332 & 7.2993 & 0.3390 & 0.2855 & {\bf 0.5575} & 0.0073 & 0.0118 & 0.0058 \\
\hline
\end{tabular}%
    \caption{Annualized Sharpe ratios, returns, and variance with different methods of estimating the precision matrix, with different objective functions, applied to firms that FinBERT+LLM has screened. Portfolio abbreviations as in Table~\ref{tab1}.}
    \label{tab8}
\end{table}

\begin{table}[htbp]
    \centering
    {\bf LLM-S + FINBERT + HUMAN ANALYSTS WITH QUANTITATIVE WEIGHTING: Nov 2023-April 2024}\\
    \begin{tabular}{l|ccc|ccc|ccc}
\hline
 & \multicolumn{3}{c|}{\textbf{Sharpe Ratio}} & \multicolumn{3}{c|}{\textbf{Returns}} & \multicolumn{3}{c}{\textbf{Variance}} \\
\textbf{Method} & \textbf{GMV} & \textbf{MV} & \textbf{MSR} & \textbf{GMV} & \textbf{MV} & \textbf{MSR} & \textbf{GMV} & \textbf{MV} & \textbf{MSR} \\ \hline
NW & 0.5310 & 0.3095 & 0.6816 & 0.0691 & 0.0411 & 0.0922 & 0.0169 & 0.0176 & 0.0183 \\
Residual NW & 0.3113 & 0.3212 & 1.1117 & 0.0422 & 0.0443 & 0.1835 & 0.0184 & 0.0190 & 0.0273 \\
Deep learning & 0.6753 & 0.4439 & 1.0251 & 0.0925 & 0.0608 & 0.1512 & 0.0188 & 0.0188 & 0.0217 \\
POET & 0.3727 & 0.2973 & 0.6593 & 0.0476 & 0.0391 & 0.0851 & \textbf{0.0163} & 0.0173 & 0.0167 \\
NLS & 0.5763 & 0.4292 & 1.3912 & 0.0743 & 0.0556 & 0.2558 & 0.0166 & 0.0168 & 0.0338 \\
Sample Cov & 0.5067 & 0.3828 & {\bf 1.4193} & 0.0672 & 0.0510 & {\bf 0.2657} & 0.0176 & 0.0178 & 0.0350 \\
\hline
\end{tabular}%
    \caption{Annualized Sharpe ratios, returns, and variance with different methods of estimating the precision matrix, with different objective functions, applied to firms that LLM+FinBERT+analysts have screened. Portfolio abbreviations as in Table~\ref{tab1}.}\label{tab9}
\end{table}

\subsection{Additional metrics for medium term}\label{turnover}

This section presents the average turnover, concentration, leverage, and drawdown for the portfolios in the medium term, under baseline (no screening) and the Agentic AI portfolios.

Leverage is calculated as the sum of the absolute value of all weights. Note that occasionally in our empirical backtests, there might not be enough training data for all stocks in the portfolio (especially true for small portfolios consisting of 2 stocks, for example). In this case, we treat this as a portfolio liquidation and return an empty portfolio. Hence, leverage may be less than 1 in some cases. Concentration is defined as the HHI index --- the sum of the squares of all weights.
Max drawdown is defined as the maximum peak-to-trough drop of the portfolio over time.

\begin{table}[htbp]
    \centering
    {\bf BASELINE QUANTITATIVE WEIGHTING: OCT 2021-2024}\\
    \vspace{0.2cm} 
    \begin{tabular}{l|cccc}
    \hline
    \textbf{Method} & \textbf{Turnover} & \textbf{Leverage} & \textbf{Concentration} & \textbf{Drawdown} \\ \hline
    \multicolumn{5}{c}{\textbf{Panel A: Global Minimum Variance (GMV)}} \\ \hline
    NW            & 0.1134 & 1.1600 & 0.0064 & -0.1485 \\
    Residual NW   & 0.4053 & 2.6745 & 0.0438 & -0.1826 \\
    Deep learning & 2.1449 & 8.9213 & 0.5458 & -0.1612 \\
    POET          & 0.2069 & 1.4847 & 0.0145 & -0.1443 \\
    NLS           & 0.6862 & 3.7686 & 0.0828 & -0.1578 \\ \hline
    \multicolumn{5}{c}{\textbf{Panel B: Mean-Variance (MV)}} \\ \hline
    NW            & 0.1502 & 1.1631 & 0.0067 & -0.1457 \\
    Residual NW   & 0.4353 & 2.7019 & 0.0454 & -0.1707 \\
    Deep learning & 2.2813 & 9.2858 & 0.5882 & -0.1695 \\
    POET          & 0.2276 & 1.5108 & 0.0147 & -0.1365 \\
    NLS           & 0.6855 & 3.7765 & 0.0831 & -0.1560 \\ \hline
    \multicolumn{5}{c}{\textbf{Panel C: Maximum Sharpe Ratio (MSR)}} \\ \hline
    NW            & 0.1174 & 1.1806 & 0.0070 & -0.1697 \\
    Residual NW   & 1.2037 & 5.2355 & 0.1809 & -0.2362 \\
    Deep learning & 4.1655 & 13.7410 & 1.2604 & -0.2936 \\
    POET          & 0.3186 & 1.7959 & 0.0200 & -0.1175 \\
    NLS           & 1.6740 & 7.8410 & 0.3636 & -0.1180 \\ \hline
    \end{tabular}%
    \caption{Turnover, leverage, concentration, and drawdown of portfolios with different methods of estimating the precision matrix.}
\end{table}

\begin{table}[htbp]
    \centering
    {\bf LLM-S + FINBERT WITH QUANTITATIVE WEIGHTING: OCT 2021-2024}\\
    \vspace{0.2cm} 
    \begin{tabular}{l|cccc}
    \hline
    \textbf{Method} & \textbf{Turnover} & \textbf{Leverage} & \textbf{Concentration} & \textbf{Drawdown} \\ \hline
    \multicolumn{5}{c}{\textbf{Panel A: Global Minimum Variance (GMV)}} \\ \hline
    NW            & 1.2463 & 0.9148 & 0.1598 & -0.1341 \\
    Residual NW   & 1.6768 & 1.2479 & 0.3050 & -0.1131 \\
    Deep learning & 1.3630 & 0.9336 & 0.1945 & -0.1159 \\
    POET          & 1.3385 & 0.9182 & 0.1536 & -0.1126 \\
    NLS           & 1.7377 & 1.2567 & 0.2884 & -0.1168 \\ \hline
    \multicolumn{5}{c}{\textbf{Panel B: Mean-Variance (MV)}} \\ \hline
    NW            & 2.2057 & 1.2827 & 1.6299 & -0.4146 \\
    Residual NW   & 2.7210 & 1.6593 & 1.7787 & -0.4299 \\
    Deep learning & 2.3354 & 1.3149 & 1.6660 & -0.4176 \\
    POET          & 2.2640 & 1.2494 & 1.6169 & -0.3957 \\
    NLS           & 2.7463 & 1.6455 & 1.7588 & -0.4300 \\ \hline
    \multicolumn{5}{c}{\textbf{Panel C: Maximum Sharpe Ratio (MSR)}} \\ \hline
    NW            & 1.3597 & 0.9387 & 0.1967 & -0.1475 \\
    Residual NW   & 2.5890 & 1.7878 & 0.5182 & -0.1972 \\
    Deep learning & 1.6387 & 1.0625 & 0.2505 & -0.1170 \\
    POET          & 1.4348 & 0.9279 & 0.1632 & -0.1097 \\
    NLS           & 2.8375 & 1.8906 & 0.5622 & -0.2559 \\ \hline
    \end{tabular}%
    \caption{Turnover, leverage, concentration, and drawdown of portfolios with different methods of estimating the precision matrix.}
\end{table}

\clearpage

\bibliographystyle{chicagoa}
\bibliography{agentic}
\end{document}